\documentclass[article]{IEEEtran}

%
\usepackage{amsfonts}
\usepackage{amssymb}
\usepackage{amsthm}
\usepackage{times,color}
\usepackage[cmex10]{amsmath}
\usepackage{graphicx}
\usepackage{float}
\usepackage{xcolor}
\usepackage{graphicx}
\usepackage{times,color}
\usepackage{amsfonts}
\usepackage[draft,ulem=normalem]{changes}
\textheight   9.1in \topmargin    -0.55in


\newtheorem{theorem}{Theorem}
\newtheorem{Construction}{Construction}

\newtheorem{lemma}{Lemma}
\newtheorem{corollary}{Corollary}

\newtheorem{definition}{Definition}

\newcommand{\nchoosek}[2]{\left(\begin{array}{c}#1\\#2\end{array}\right)}
\newcommand{\comment}[1]{}

\hyphenation{op-tical net-works semi-conduc-tor}
%


\newtheorem{exam}{Example$\!$}
\newenvironment{example}{\begin{exam}\hspace*{-1ex}{\bf }}{\end{exam}}
\newenvironment{proof-sketch}{\noindent \textit{Sketch of Proof:}}{$\blacksquare$}

\newtheorem{remrk}{Remark$\!$}
\newenvironment{remark}{\begin{remrk}\hspace*{-1ex}{\bf }}{\end{remrk}}

\newtheorem*{rep@theorem}{\rep@title}
\newcommand{\newreptheorem}[2]{%
\newenvironment{rep#1}[1]{%
 \def\rep@title{#2 \ref{##1}}%
 \begin{rep@theorem}}%
 {\end{rep@theorem}}}
\makeatother

\newreptheorem{theorem}{Theorem}
\newreptheorem{lemma}{Lemma}
\newtheorem{claim}{Claim}
\newreptheorem{claim}{Claim}
\newreptheorem{Construction}{Construction}

\newcommand{\cA}{{\cal A}}
\newcommand{\cB}{{\cal B}}
\newcommand{\cC}{{\cal C}}

\newcommand{\cE}{{\cal E}}
\newcommand{\cG}{{\cal G}}
\newcommand{\cH}{{\cal H}}
\newcommand{\cI}{{\cal I}}
\newcommand{\cN}{{\cal N}}

\newcommand{\cS}{{\cal S}}
\newcommand{\cT}{{\cal T}}
\newcommand{\cX}{{\cal X}}

\newcommand{\cW}{{\cal W}}

\newcommand{\bfa}{{\boldsymbol a}}
\newcommand{\bfb}{{\boldsymbol b}}
\newcommand{\bfc}{{\boldsymbol c}}

\newcommand{\bfe}{{\boldsymbol e}}

\newcommand{\bfg}{{\boldsymbol g}}
\newcommand{\bfh}{{\boldsymbol h}}

\newcommand{\bfk}{{\boldsymbol k}}

\newcommand{\bfp}{{\boldsymbol p}}

\newcommand{\bft}{{\boldsymbol t}}
\newcommand{\bfu}{{\boldsymbol u}}
\newcommand{\bfv}{{\boldsymbol v}}

\newcommand{\bfx}{{\boldsymbol x}}
\newcommand{\bfy}{{\boldsymbol y}}
\newcommand{\bfz}{{\boldsymbol z}}


\begin{document}

\title{Correcting Grain-Errors in Magnetic Media \thanks{The paper was presented in part at IEEE ISIT 2013.}}

\author{
  \IEEEauthorblockN{
    Ryan~Gabrys\IEEEauthorrefmark{1},~
    Eitan~Yaakobi\IEEEauthorrefmark{2},~and
    Lara~Dolecek\IEEEauthorrefmark{1}}
  {\normalsize
    \begin{tabular}{cc}
      \IEEEauthorrefmark{1}University of California, Los Angeles~~ &
      \IEEEauthorrefmark{2}California Institute of Technology\\
           rgabrys@ucla.edu, dolecek@ee.ucla.edu & yaakobi@caltech.edu \\
    \end{tabular}}
    }

\maketitle

\begin{abstract} 
This paper studies new bounds and constructions that are applicable to the combinatorial granular channel model previously introduced by Sharov and Roth. We derive new bounds on the maximum cardinality of a grain-error-correcting code and propose constructions of codes that correct grain-errors.  We demonstrate that a permutation of the classical group codes (e.g., Constantin-Rao codes) can correct a single grain-error.  In many cases of interest, our results improve upon the currently best known bounds and constructions. Some of the approaches adopted in the context of grain-errors may have application to related channel models. \end{abstract}

\section{Introduction}
Granular media is a promising magnetic recording technology that currently presents formidable challenges to achieving capacity. One of the main issues with granular media is the uncertainty of the locations of the grains in the underlying recording medium. Typically, this medium is organized into grains whose locations and sizes are random. Information is stored by controlling the magnetization of the individual grains so that each grain can store a single bit of data \cite{White}, \cite{Wood}.

The read and write processes are typically unaware of the locations of the grains. As a result, the medium is divided into evenly spaced bit cells and the information is written into these bit cells \cite{arya}. In the traditional setup, the bit cell is usually larger than a single-grain. When the size of the bit cells is reduced enough, the effects of the random positions of the grains become pronounced. In particular, in \cite{Wood} a one-dimensional channel model was studied that illustrated the effects of having grains with randomly selected lengths of $1$, $2$, or $3$ bits. When grains span more than a single bit cell, the polarity of a grain is set by the last bit written into it. The errors manifest themselves as overwrites (or \textit{\textbf{smears}}) where the last bit in the grain overwrites the rest of the bits in the grain. In this work, the focus is on grains of length one or two bits. A \textit{\textbf{grain-error}} is an error where the information from one bit overwrites the information stored in the preceding bit in the grain. Without loss of generality, and as in \cite{arya}, our model assumes that the first bit smears the following adjacent bit in the grain.


In \cite{artyom}, Sharov and Roth presented combinatorial bounds and code constructions for granular media. In \cite{ari}, Iyengar, Siegel, and Wolf studied a related model from an information-theoretic perspective. In \cite{arya}, Mazumdar, Barg, and Kashyap introduced  a channel model and studied coding methods for a one-dimensional granular magnetic medium. In \cite{arya}, the focus was on binary alphabets and the types of errors studied in \cite{arya} will be referred in this work as \textit{\textbf{non-overlapping grain-errors}}. In \cite{artyom}, Sharov and Roth generalized the model and considered non-binary alphabets as well as \textit{\textbf{overlapping grain-errors}}. Overlapping grain-errors permit the occurrence of two errors in consecutive positions whereas non-overlapping grain-errors cannot be adjacent. Note that there is no distinction between a non-overlapping single grain-error and an overlapping single grain-error. In this work, we restrict our attention to the overlapping grain-error model where, similar to the works of \cite{arya} and \cite{artyom}, we consider only grain-errors of length two. The overlapping grain-error model was chosen because a) overlapping grain errors are common in bit-patterned media recording \cite{ari} and b) codes that correct overlapping grain-errors can be used to correct non-overlapping grain-errors. As will be discussed later, we briefly note that since the set of possible error patterns under the non-overlapping grain-error model is a subset of the set of possible error patterns under the overlapping grain-error model, an upper bound on the overlapping grain-error model is not an upper bound on the non-overlapping grain-error model. We say that a code is a \textit{\textbf{$\bft$-grain-error-correcting code}} if it can correct up to $t$ overlapping grain-errors. In both \cite{arya} and \cite{artyom}, bounds and constructions were given. Recently, in \cite{Kashyap} some of the techniques from \cite{Negar} were adopted to obtain improved upper bounds on the maximum cardinalities of non-overlapping grain-error codes.

The main contribution of this paper is to offer new cardinality bounds of codes correcting grain-errors and to provide new explicit constructions of such codes.
We show that the class of group codes from \cite{cr} is a special case of our general code construction. 
In addition, and similar to \cite{Kashyap}, we provide non-asymptotic upper bounds on the cardinalities of $t$-grain-error-correcting codes, with an explicit expression for the cases where $t=1,2,3$. We show that in many cases our bounds and constructions improve upon the state of the art results from  \cite{arya} and \cite{artyom}. 

Section~\ref{sec:defs} formally defines the channel model  and introduces the notation and tools used for the remainder of the paper. Section~\ref{sec:bound} improves upon the existing upper bounds from \cite{artyom}. Section~\ref{sec:constructions} contains constructions for codes that correct grain-errors and a related type of error which we refer to as mineral-errors. Lower bounds on the cardinalities for some of these codes are then derived in Section~\ref{sec:cardinalities}. Section~\ref{sec:improvedsingle} revisits the general approach to correcting grain/mineral-errors from Section~\ref{subsec:improvedgrains}, and identifies additional codes for certain code lengths. Section~\ref{sec:conclude} concludes the paper. Preliminary results of this work are presented in \cite{gabrys}.

\section{Preliminaries}\label{sec:defs}

In this section, we describe in detail the structure of grain-errors. Afterwards, we introduce some key notation. Section~\ref{subsec:errors} introduces the errors of interest. Section~\ref{subsec:toolsub} reviews the tools which will be used for computing upper bounds. Section~\ref{sec:confusabilitygraph} briefly introduces some graph notation. Section~\ref{subsec:distances} reviews some distance metrics and group codes that will be useful for constructing grain-error-correcting codes. Finally, Section~\ref{subsec:fourier} includes some Fourier analysis tools useful for computing lower bounds for grain-error codes.

\subsection{Grain-errors and mineral-errors}\label{subsec:errors}
In this subsection, we formally introduce the notation and the errors of interest that will be studied in this work. We consider the case where each grain contains either one or two bits of data. A grain-error causes the two bits in the same two-bit grain to either both be $0$ or both be $1$; the error operation can be interpreted as a \textit{\textbf{smearing}}. Following the setup of \cite{arya}, we assume that the first bit smears the second. The problem of interest is how to correct grain-errors when the locations and lengths of the grains are unknown to both the encoder and decoder.

Before continuing, we provide a formal definition of a $t$-grain-error. For a vector $\bfx \in GF(2)^n$, $wt(\bfx)$ refers to the Hamming weight of $\bfx$ and $supp(\bfx)$ denotes the set of indices of $\bfx$ with non-zero values.

\begin{definition}\label{def:grainerror} Let $t \geq 1$ be an integer. Suppose a vector $\bfx \in GF(2)^n$ was stored. Let $\bfe_{\bfx}=(e_1, \ldots, e_n) \in GF(2)^n$, and suppose the vector $\bfy=\bfx + \bfe_{\bfx}$ was read. Then, we say that $\bfe_{\bfx}$ is a \textit{\bf{$\bft$-grain-error for}} $\bfx$ if the following holds:
\begin{enumerate}
\item $wt(\bfe_{\bfx}) \leq t$ and $e_{1}=0$,
\item For $2 \leq i \leq n$, if $e_i \neq 0$, then $x_{i} \neq x_{i-1}$.
\end{enumerate}
\end{definition}
Note that $\bfe_{\bfx}$ depends on the input vector $\bfx$. For shorthand, we say that $\bfe_{\bfx}$ is a $t$-grain-error if the vector $\bfx$ is clear from the context. Notice in Definition~\ref{def:grainerror} that an error at position $i$ where $2 \leq i \leq n$ can be interpreted as a smearing where the value of $\bfx$ at position $i-1$ smears the value of $\bfx$ at position $i$.

A code that can correct all $t$-grain-errors (of length $n$) will be referred to as a \textit{\textbf{$\bft$-grain-error-correcting code}} (of length $n$). For shorthand, a code that can correct a single grain-error will also be referred to as a \textit{\textbf{single-grain code}}. More generally, codes that correct a prescribed number of grain-errors are called \textit{\textbf{grain codes}}. The maximum size of a $t$-grain-error-correcting-code of length $n$ will be referred to as $M(n,t)$. 

Definition~\ref{def:grainerror} coincides with the \textit{\textbf{overlapping grain-error}} model discussed in \cite{artyom}.   We briefly note that since the set of possible errors under the original model of \textit{\textbf{non-overlapping grain-errors}} \cite{arya} is contained within the set of possible errors under the overlapping grain-error model, the code constructions in this paper apply to both models.
We compare the upper bounds derived in Section~\ref{sec:bound} against existing bounds for the overlapping grain-error model (\cite{artyom}). For the remainder of the paper, the term grain-error refers to an overlapping grain-error as stated in Definition~\ref{def:grainerror}.

Suppose a vector $\bfx \in GF(2)^n$ is stored. Let $\cB_{t,G}(\bfx)$ be the set of all possible vectors received (the \textit{\textbf{error-ball}}) given that any $t$-grain-error may occur in $\bfx$. That is, we define $$\cB_{t,G}(\bfx) = \{ \bfx + \bfe_{\bfx} | \text{ $\bfe_{\bfx}$ is a $t$-grain-error} \},$$
and $b_{t,n}(\bfx) = | \cB_{t,G}(\bfx) |$. 
The subscript $G$ refers to grain-errors.

\begin{example}\label{ex:grainerr} Suppose $\bfx=(0, 0, 0, 1, 0)$ was stored. Then, $\cB_{{1,G}}(\bfx) = \{ (0, 0, 0, 1, 0), (0, 0, 0, 1, 1), (0, 0, 0, 0, 0) \}$ and $b_{1,5}(\bfx) = 3$. Notice also that $\cB_{{2,G}}(\bfx) = \{ (0, 0, 0, 1, 0), (0, 0, 0, 1, 1), (0, 0, 0, 0, 0), (0, 0, 0, 0, 1) \}$ and $b_{2,5}(\bfx) = 4$.
\end{example}

We note that the last vector, $(0, 0, 0, 0, 1)$, enumerated in $\cB_{{2,G}}(\bfx)$ for Example~\ref{ex:grainerr} was an overlapping grain-error in the sense that the grain-errors were adjacent so that the bit in position $4$ is both smeared and smearing. 

We introduce a new type of error that will be useful in subsequent analysis.

\begin{definition}\label{def:mineralerror} Let $t \geq 1$ be an integer. Suppose a vector $\bfx \in GF(2)^n$ was stored. Let $\bfe_{\bfx} = (e_1, \ldots, e_n) \in GF(2)^n$ and suppose the vector $\bfy=\bfx + \bfe_{\bfx}$ was received. Then, we say that $\bfe_{\bfx}$ is a \textit{\bf{$\bft$-mineral-error for}} $\bfx$ if the following holds:
\begin{enumerate}
\item $wt(\bfe_{\bfx}) \leq t$,
\item For $2 \leq i \leq n$, if $e_i \neq 0$, then $x_{i} \neq x_{i-1}$.
\end{enumerate}
\end{definition}

Similar to the grain-error setup, we say that $\bfe_{\bfx}$ is a $t$-mineral-error if the vector $\bfx$ is clear from the context. A code that can correct all $t$-mineral-errors of (length $n$) will be referred to as a \textit{\textbf{$\bft$-mineral-error-correcting code}} (of length $n$). \textit{\textbf{Single-mineral codes}} and \textit{\textbf{mineral codes}} are defined analogously as grain codes.

For a given vector $\bfx \in GF(2)^n$, let $\cB_{t,M}(\bfx)$ denote the error-ball for $\bfx$ given that any $t$-mineral-error may occur in $\bfx$. That is, we define $$\cB_{t,M}(\bfx) = \{ \bfx + \bfe_{\bfx} | \text{ $\bfe_{\bfx}$ is a $t$-mineral-error} \}.$$ 

The subscript $M$ refers to mineral-errors.

A useful consequence of Definition~\ref{def:mineralerror} is stated in the following claim.

\begin{claim}\label{cl:mingrain} Suppose $\cC$ is a $t$-grain-error-correcting code. Then, for any two distinct codewords $\bfx=(x_1, \ldots, x_n), \bfy=(y_1, \ldots, y_n) \in \cC$, either
\begin{enumerate}
\item $x_1 \neq y_1$, or
\item $\cB_{t,M}(x_2, \ldots, x_{n}) \cap \cB_{t,M}(y_2, \ldots, y_{n}) = \emptyset$. 
\end{enumerate}
\end{claim}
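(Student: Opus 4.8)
The plan is to prove the contrapositive of the disjunction: assuming $x_1 = y_1$ and that the two mineral-balls intersect, I will construct two distinct codewords of $\cC$ whose grain-balls intersect, contradicting that $\cC$ is a $t$-grain-error-correcting code. So suppose $x_1 = y_1$ and pick $\bfz \in \cB_{t,M}(x_2,\ldots,x_n) \cap \cB_{t,M}(y_2,\ldots,y_n)$. By definition of the mineral-ball, there are mineral-errors $\bfe$ (for $(x_2,\ldots,x_n)$) and $\bff$ (for $(y_2,\ldots,y_n)$) of weight at most $t$ such that $\bfz = (x_2,\ldots,x_n) + \bfe = (y_2,\ldots,y_n) + \bff$, and each of $\bfe$, $\bff$ satisfies the "smearing" condition (2) from Definition~\ref{def:mineralerror} relative to its base vector.

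The key observation is that prepending the common symbol $x_1 = y_1$ turns a mineral-error on the length-$(n-1)$ suffix into a \emph{grain}-error on the full length-$n$ word. Concretely, define $\bfe' = (0, e_1, \ldots, e_{n-1}) \in GF(2)^n$ and similarly $\bff'$. Then $wt(\bfe') = wt(\bfe) \le t$ and the first coordinate of $\bfe'$ is $0$, so condition (1) of Definition~\ref{def:grainerror} holds; and for $2 \le i \le n$, if $e'_i \neq 0$ then either $i \ge 3$, in which case $e'_i = e_{i-1} \neq 0$ forces $x_i \neq x_{i-1}$ by the mineral condition on $\bfe$, or $i = 2$, in which case $e'_2 = e_1 \neq 0$ forces $z_1 \neq x_1$... wait — I need the smearing to be checked against the \emph{stored} word, so condition (2) for a grain-error applied to $\bfx$ at position $2$ reads $x_2 \neq x_1$, which is exactly what the mineral condition (2) on $\bfe$ at its position $1$ would give only if that condition referenced the base vector; since Definition~\ref{def:mineralerror}(2) only constrains positions $\ge 2$, the $i=2$ case needs separate care. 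This is the one genuinely delicate point: a mineral-error is allowed to flip coordinate $1$ of the suffix unconditionally, whereas a grain-error at coordinate $2$ of the full word requires $x_2 \neq x_1$.

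I expect the resolution to be that coordinate $1$ of the suffix $(x_2,\ldots,x_n)$ is coordinate $2$ of $\bfx$, and since $x_1 = y_1$ we have $x_2 \neq x_1 \iff x_2 \neq y_1$; the same applies with $\bff$ and $\bfy$. So if neither $\bfe$ nor $\bff$ flips coordinate $1$ of its suffix, then $\bfe'$ and $\bff'$ are genuine $t$-grain-errors for $\bfx$ and $\bfy$ respectively, and $\bfx + \bfe' = (x_1, \bfz) = (y_1, \bfz) = \bfy + \bff'$ shows $\cB_{t,G}(\bfx) \cap \cB_{t,G}(\bfy) \neq \emptyset$; since $\cC$ is $t$-grain-error-correcting this forces $\bfx = \bfy$, contradiction. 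If exactly one of them, say $\bfe$, flips coordinate $1$, then $z_1 = x_2 + 1 \neq x_2$ while $z_1 = y_2$, so $y_2 \neq x_2$; combined with $x_1 = y_1$ this gives $y_2 \neq y_1$, so prepending $x_1$ and flipping coordinate $2$ is a legitimate grain operation on $\bfy$ too — one then massages the pair $(\bfe', \bff')$ (possibly adjusting $\bff'$ to also flip coordinate $2$, staying within weight $t$ since $\bff$ did not already flip coordinate $1$) to land both codewords at the same vector $(x_1, \bfz)$ or at the vector $(x_1, z_1+1, z_2, \ldots)$, checking the grain conditions hold. The main obstacle is exactly this bookkeeping at coordinates $1$–$2$; away from that boundary the argument is a routine re-indexing, so I would organize the write-up as "the index shift $i \mapsto i+1$ sends mineral-errors to grain-errors" with a short case analysis handling the new first coordinate.
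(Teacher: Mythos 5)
Your reduction (prepend the common first bit and shift indices, so that a mineral-error on the suffix becomes a grain-error on the full word except possibly at coordinate $2$) is the right move, and you correctly isolated the delicate case: $\bfe$ flips coordinate $1$ of the suffix while $x_2=x_1$. But the proposal never actually closes that case, and the parenthetical weight argument is wrong: ``staying within weight $t$ since $\bff$ did not already flip coordinate $1$'' is a non sequitur, because $\bff$ may have weight exactly $t$ with all of its flips at suffix coordinates $\geq 2$; adding a flip at coordinate $2$ of $\bfy$ then yields a vector of weight $t+1$, which is not a $t$-grain-error, while dropping the flip of $\bfe$ at coordinate $1$ changes the common target, and nothing guarantees the two targets can be reconciled.

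Moreover, this is not bookkeeping that a more careful case analysis could repair: in exactly this configuration the implication you are proving ($x_1=y_1$ and intersecting suffix mineral balls imply intersecting grain balls) fails. Take $t=1$, $\bfx=(0,0,0,0)$, $\bfy=(0,1,0,1)$. Then $\cB_{1,G}(\bfx)=\{(0,0,0,0)\}$ and $\cB_{1,G}(\bfy)=\{(0,1,0,1),(0,0,0,1),(0,1,1,1),(0,1,0,0)\}$ are disjoint, so $\cC=\{\bfx,\bfy\}$ is a single-grain code; yet $x_1=y_1$ and $(1,0,0)\in\cB_{1,M}(0,0,0)\cap\cB_{1,M}(1,0,1)$ (flip the unconstrained first coordinate of $(0,0,0)$; flip the third coordinate of $(1,0,1)$, which sits at a transition). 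So no massaging of $(\bfe',\bff')$ can succeed in general. Note that the paper states Claim~\ref{cl:mingrain} without proof, and this example indicates that, under the standard reading that a $t$-grain-error-correcting code is one whose distinct codewords have disjoint grain-error balls, the statement itself does not hold as written: the obstruction is precisely a codeword whose first two bits agree while the other codeword's suffix error needs full weight $t$. What your argument does establish is the weaker statement in which the mineral balls of the suffixes are replaced by their grain balls, and also the claim as stated under the extra hypothesis that any suffix error touching coordinate $1$ is applied to a word whose second bit differs from its first.
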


Suppose $\bfx \in GF(2)^n$ and $\cB_{t,U}$ denotes the error-ball for $\bft$ \textbf{\textit{unrestricted-errors}} where $t$ unrestricted-errors are defined as any binary vector of length $n$ with weight at most $t$. Then, for any vector $\bfx \in GF(2)^n$, $|\cB_{t,U}(\bfx)| = \sum_{i=0}^t \nchoosek{n}{i}$. 

The following lemma follows from the definitions of grain-errors and mineral-errors.

\begin{claim}\label{cl:ordering} For any vector $\bfx \in GF(2)^n$, $\cB_{t,G}(\bfx) \subseteq \cB_{t,M}(\bfx) \subseteq \cB_{t,U}(\bfx)$. \end{claim}

We now present some simple results that follow from the structure of grain-errors. Lemmas~\ref{lem:runs}, \ref{lem:runsdec}, and \ref{lem:even} will be used in Section~\ref{sec:bound} for obtaining upper bounds on the cardinality of grain codes and Lemma~\ref{lem:inf} and Claim~\ref{cl:gmcodes} will be used for constructing grain codes in Section~\ref{sec:constructions}.

A $\textit{\textbf{run}}$ is a maximal substring of one or more consecutive identical symbols. We denote the number of runs in a vector $\bfx$ as $r(\bfx)$ where $\bfx \in GF(2)^n$.

\begin{lemma}\label{lem:runs} For any vector $\bfx$, $b_{t,n}(\bfx)  = \sum_{j=0}^{\min\{t,r(\bfx)-1\}} \nchoosek{r(\bfx)-1}{j}$. \end{lemma}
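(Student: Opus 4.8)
The plan is to show that the error-ball $\cB_{t,G}(\bfx)$ is in bijection with the collection of subsets of a size-$(r(\bfx)-1)$ set having cardinality at most $t$; counting those subsets then gives the claimed sum $\sum_{j=0}^{\min\{t,r(\bfx)-1\}}\binom{r(\bfx)-1}{j}$. The key structural observation is that a grain-error can only flip $x_i$ (for $i\ge 2$) when $x_i\neq x_{i-1}$, i.e.\ precisely when position $i$ is the \emph{first} position of a run. Since $\bfx$ has $r(\bfx)$ runs and the very first run starts at position $1$ where no error is allowed (condition $e_1=0$), there are exactly $r(\bfx)-1$ positions at which an error bit may legally be placed — the starting positions of runs $2,3,\ldots,r(\bfx)$.

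First I would make this precise: let $S=\{i_2,\ldots,i_{r(\bfx)}\}$ be the set of starting indices of the $2$nd through $r(\bfx)$th runs of $\bfx$. I claim that for any $\bfe_{\bfx}$ satisfying Definition~\ref{def:grainerror}, $supp(\bfe_{\bfx})\subseteq S$: indeed if $e_i\neq 0$ for some $i\ge 2$ then $x_i\neq x_{i-1}$, so $i$ begins a new run, and $i\neq 1$, hence $i\in S$; and $e_1=0$ handles $i=1$. Conversely, for any subset $T\subseteq S$ with $|T|\le t$, the indicator vector $\bfe_{\bfx}$ of $T$ satisfies both conditions of Definition~\ref{def:grainerror}, since every $i\in T$ begins a run (so $x_i\neq x_{i-1}$) and $1\notin T$, while $wt(\bfe_{\bfx})=|T|\le t$. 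Thus the set of legal grain-error vectors for $\bfx$ is exactly $\{\mathbf{1}_T : T\subseteq S,\ |T|\le t\}$.

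Next I would argue that the map $\bfe_{\bfx}\mapsto \bfx+\bfe_{\bfx}$ is injective on this set, so that $b_{t,n}(\bfx)$ equals the number of such $\bfe_{\bfx}$. Injectivity over $GF(2)^n$ is immediate: $\bfx+\bfe_{\bfx}=\bfx+\bfe'_{\bfx}$ forces $\bfe_{\bfx}=\bfe'_{\bfx}$. Hence $b_{t,n}(\bfx)=\#\{T\subseteq S : |T|\le t\}=\sum_{j=0}^{\min\{t,|S|\}}\binom{|S|}{j}$, and since $|S|=r(\bfx)-1$ this is exactly the claimed formula. (When $r(\bfx)=1$ the set $S$ is empty, the only error is $\bfe_{\bfx}=\mathbf{0}$, and the formula correctly returns $\binom{0}{0}=1$.)

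The only subtlety — and the one place that needs care rather than being purely routine — is verifying that \emph{every} subset $T\subseteq S$ genuinely yields a valid grain-error, including the overlapping case where two chosen indices are adjacent. But this is automatic from the definition of $S$: adjacency of $i-1,i\in T$ simply means two consecutive runs each of length one, and the condition $x_i\neq x_{i-1}$ still holds for each, so Definition~\ref{def:grainerror} is satisfied without any non-adjacency requirement. I would flag this explicitly, since it is exactly the feature distinguishing overlapping from non-overlapping grain-errors, and it is what makes the count a clean binomial sum with no exclusions. With that observed, the bijection is complete and the lemma follows.
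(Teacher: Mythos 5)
Your proof is correct and follows essentially the same route as the paper: identify the $r(\bfx)-1$ run boundaries as the only admissible error positions and count subsets of size at most $t$. Your write-up is simply more detailed (making the bijection and the injectivity of $\bfe_{\bfx}\mapsto\bfx+\bfe_{\bfx}$ explicit, and noting that adjacent boundary positions are allowed), which the paper's terser proof leaves implicit.
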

\begin{proof} Suppose a vector $\bfx$ was stored and that it consists of $k = r(\bfx)$ runs. By Definition~1, a grain-error can occur only at the boundaries between runs. If there are exactly $k \geq t+1$ runs, there are $k-1$ transitions between runs and therefore $b_{t,n}(\bfx) = \sum_{j=0}^{t} \nchoosek{k-1}{j}$. If there are $t$ or fewer runs (i.e., $k \leq t$), then $b_{t,n}(\bfx) = \sum_{j=0}^{k-1} \nchoosek{k-1}{j}$.
\end{proof}

The following lemma is a consequence of the smearing effect of a grain-error. Let the map $\Psi : GF(2)^s \to GF(2)^{s-1}$ be defined so that $\Psi(\bfz) = \bfz' = (z_1', \ldots, z_{s-1}')$ where $z_i' = (z_i + z_{i+1})  \mod 2$ (for $ 1 \leq i \leq s-1$). Notice that $\Psi(\bfz)$ is a linear map and it has a $1$ in position $i$ if and only if $z_{i} \neq z_{i+1}$. Recall that $supp(\bfz)$ refers to the set of non-zero indices in $\bfz$ and $wt(\bfz)$ refers to the Hamming weight of $\bfz$.

\begin{lemma}\label{lem:runsdec} For any two vectors $\bfx, \bfy \in GF(2)^n$ if $\bfy \in \cB_{{t,G}}(\bfx)$, then $ r(\bfy) \leq r(\bfx)$ and $b_{t,n}(\bfy) \leq b_{t,n} (\bfx).$  \end{lemma}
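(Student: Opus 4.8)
The plan is to convert the statement about runs into one about Hamming weights using the linear map $\Psi$ just defined, prove the weight inequality there, and then read off the error-ball inequality from Lemma~\ref{lem:runs}. Write $\bfy = \bfx + \bfe_{\bfx}$ where $\bfe_{\bfx}$ is a $t$-grain-error for $\bfx$, and recall the elementary identity $r(\bfz) = wt(\Psi(\bfz)) + 1$ for every $\bfz \in GF(2)^n$ (the number of runs is one more than the number of symbol transitions, and $\Psi(\bfz)$ records exactly those transitions). Since $\Psi$ is linear, $\Psi(\bfy) = \Psi(\bfx) + \Psi(\bfe_{\bfx})$, so $r(\bfy) \le r(\bfx)$ is equivalent to $wt(\Psi(\bfx) + \Psi(\bfe_{\bfx})) \le wt(\Psi(\bfx))$. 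Using $wt(\bfa + \bfb) = wt(\bfa) + wt(\bfb) - 2\,|supp(\bfa) \cap supp(\bfb)|$ over $GF(2)$, this is in turn equivalent to $|supp(\Psi(\bfe_{\bfx}))| \le 2\,|supp(\Psi(\bfx)) \cap supp(\Psi(\bfe_{\bfx}))|$, i.e.\ at least half of the support of $\Psi(\bfe_{\bfx})$ must already be a transition of $\bfx$.

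The combinatorial core is to describe $supp(\Psi(\bfe_{\bfx}))$. Partition $supp(\bfe_{\bfx})$ into maximal blocks of consecutive integers; say there are $\beta$ of them. For a block $\{i, i+1, \ldots, i+k-1\}$ (note $i \ge 2$ since $e_1 = 0$), one checks that $\Psi(\bfe_{\bfx})$ is $1$ at position $i-1$, is $1$ at position $i+k-1$ whenever $i+k-1 \le n-1$, and is $0$ at every other index near the block; hence each block contributes at most two indices to $supp(\Psi(\bfe_{\bfx}))$, and distinct blocks contribute disjoint index sets, so $|supp(\Psi(\bfe_{\bfx}))| \le 2\beta$. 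Now comes the only place the grain-error structure enters: because $i \in supp(\bfe_{\bfx})$, the second condition of Definition~\ref{def:grainerror} forces $x_{i-1} \ne x_i$, i.e.\ $i-1 \in supp(\Psi(\bfx))$. These ``left endpoints'' $i-1$ are pairwise distinct over the $\beta$ blocks and are distinct from all the ``right endpoints'' (which lie in $supp(\bfe_{\bfx})$, while $i-1$ does not, by maximality of the block), so $|supp(\Psi(\bfx)) \cap supp(\Psi(\bfe_{\bfx}))| \ge \beta$. Combining the two counts gives the desired inequality, hence $r(\bfy) \le r(\bfx)$.

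For the second assertion I would invoke Lemma~\ref{lem:runs}, which expresses $b_{t,n}$ as a function of the number of runs alone: $b_{t,n}(\bfz) = g(r(\bfz))$ with $g(r) = \sum_{j=0}^{\min\{t,\,r-1\}} \nchoosek{r-1}{j}$. It then remains to note that $g$ is non-decreasing in $r$: for $r \le t+1$ it equals $2^{r-1}$, and for $r \ge t+1$ it equals $\sum_{j=0}^{t}\nchoosek{r-1}{j}$, each summand $\nchoosek{r-1}{j}$ being non-decreasing in $r$. Since $r(\bfy) \le r(\bfx)$, we obtain $b_{t,n}(\bfy) = g(r(\bfy)) \le g(r(\bfx)) = b_{t,n}(\bfx)$.

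The step I expect to need the most care is the explicit description of $supp(\Psi(\bfe_{\bfx}))$ at the boundaries: a block of errors whose last index is $n$ (so there is no index $i+k-1 = n$ in the range of $\Psi$), and a singleton block $\{i\}$, for which the ``left'' and ``right'' endpoints are the adjacent indices $i-1$ and $i$. In every such case one still has $i-1$ present in $supp(\Psi(\bfe_{\bfx}))$ and $i-1 \in supp(\Psi(\bfx))$, so the bound $|supp(\Psi(\bfx)) \cap supp(\Psi(\bfe_{\bfx}))| \ge \beta$ survives; but I would write this block analysis out explicitly rather than wave at it.
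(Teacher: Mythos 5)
Your proof is correct, and it shares the paper's skeleton: the same reduction via the linear map $\Psi$, the identity $wt(\Psi(\bfx)+\Psi(\bfe_{\bfx})) = wt(\Psi(\bfx)) + wt(\Psi(\bfe_{\bfx})) - 2\,|supp(\Psi(\bfx)) \cap supp(\Psi(\bfe_{\bfx}))|$, and the same target inequality that at least half of the support of $\Psi(\bfe_{\bfx})$ lies in $supp(\Psi(\bfx))$. Where you diverge is in how that inequality is established: the paper proves it by induction on the number of runs of $1$s in $\bfe_{\bfx}$, splitting the vector at the end of the $k$-th run and invoking the single-run case as a base case, whereas you prove it in one shot by decomposing $supp(\bfe_{\bfx})$ into its $\beta$ maximal blocks, observing that $supp(\Psi(\bfe_{\bfx}))$ consists only of block boundaries (so has size at most $2\beta$), and that each block's left boundary $i-1$ is forced into $supp(\Psi(\bfx))$ by condition 2) of Definition~\ref{def:grainerror} (so the intersection has size at least $\beta$). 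Your non-inductive count is arguably cleaner: it avoids the concatenation-with-overlap bookkeeping of the paper's inductive step and handles the boundary cases (block ending at $n$, singleton block) uniformly, provided you do write out the block analysis you flag at the end. You also make explicit a point the paper leaves implicit, namely that $b_{t,n}$ depends on $\bfz$ only through $r(\bfz)$ via a function $g(r)=\sum_{j=0}^{\min\{t,r-1\}}\nchoosek{r-1}{j}$ that is non-decreasing in $r$; the paper simply cites Lemma~\ref{lem:runs} for this, so spelling out the monotonicity is a small but genuine tightening of the argument.
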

%
%
%
\begin{proof} For the result to hold, we need to show that for any two vectors $\bfx, \bfy \in GF(2)^n$ where $\bfy \in \cB_{{t,G}}(\bfx)$, $r(\bfy) \leq r(\bfx)$. If $r(\bfy) \leq r(\bfx)$, then from Lemma~\ref{lem:runs}, $b_{t,n}(\bfy) \leq b_{t,n} (\bfx).$ Equivalently, we will show that $wt(\Psi(\bfy)) \leq wt(\Psi(\bfx))$. Since $\bfy \in \cB_{t,G}(\bfx)$ we can write $\bfy = \bfx + \bfe_{\bfx}$ where $\bfe_{\bfx}$ is a $t$-grain-error. Let $\bfx' = \Psi(\bfx), \bfe' = \Psi(\bfe_{\bfx}), \bfy' = \Psi(\bfy)$. By the linearity of the map $\Psi$, we can write $\bfy' = \bfx' + \bfe'$ and so $wt(\bfy') = wt(\bfx') + wt(\bfe') - 2 |supp(\bfx') \cap supp(\bfe')|$. In the following, we show $wt(\bfy') \leq wt(\bfx')$ by proving $|supp(\bfx') \cap supp(\bfe')| \geq \frac{wt(\bfe')}{2}$. The proof will follow by induction on the number of runs of $1$s in $\bfe_{\bfx}$. 

We first prove that for any $t$-grain-error $\bfe_{\bfx}$ of length $n$, if $\bfe_{\bfx}$ has a single run of $1$s, then $r(\bfy) \leq r(\bfx)$. Suppose then that $\bfe_{\bfx}=(e_1, \ldots, e_n)$ is a $t$-grain-error and that $\bfe_{\bfx}$ contains a single run of $1$s. Then $1 \leq wt(\bfe') \leq 2$ since $e_1=0$. Suppose further that $\bfe'=(e_1', \ldots, e_{n-1}')$ has its first $1$ at position $i$ where $1 \leq i \leq n-1$. Since $i$ is the location of the first $1$ in $\bfe'$, then $e_{i} \neq e_{i+1}$ and so $e_{i}=0, e_{i+1}=1$ (since $e_1 = 0$). However, if $e_{i+1}=1$, then $x_{i} \neq x_{i+1}$ and so both $x_{i}'=e_{i}' = 1$. Since $wt(\bfe') \leq 2$, we have just shown that $|supp(\bfx') \cap supp(\bfe')| \geq 1$, and so the base case is complete.

We now assume that for any length-$n$ $\bfe_{\bfx}$, if $\bfe_{\bfx}$ has $k$ runs of $1$s, then $r(\bfy) \leq r(\bfx)$ where $1 \leq k \leq \lfloor \frac{n}{2} \rfloor$. Consider the case where $\bfe_{\bfx}$ has $k+1$ runs of $1$s. Suppose the $k$-th run of $1$s in $\bfe_{\bfx}$ has its final $1$ in position $j$ where $2 \leq j \leq n-2$. Thus, $e_{j+1}=0$. For shorthand denote $\bfe_1=(e_1, \ldots, e_{j+1})$, $\bfe_2=(e_{j+1}, \ldots, e_{n})$, $\bfx_1=(x_1, \ldots, x_{j+1})$, $\bfx_2=(x_{j+1}, \ldots, x_{n})$, $\bfe'_1 = \Psi(\bfe_1)$, $\bfe'_2 = \Psi(\bfe_2)$, $\bfx_1' = \Psi(\bfx_1)$, and $\bfx_2' = \Psi(\bfx_2)$. Notice that the vectors $\bfe'$ and $\bfx'$ can be written as the concatenation of two vectors where $\bfe' = ( \bfe_1', \bfe_2' )$ and $\bfx' = ( \bfx_1', \bfx_2' )$ where $\bfe_1$ is a $t$-grain-error for $\bfx_1$ with $k$ runs of $1$s and $\bfe_2$ is a $t$-grain-error for $\bfx_2$ with a single run of $1$s. By the inductive assumption, $|supp(\bfx_1') \cap supp(\bfe_1')| \geq \frac{wt(\bfe_1')}{2}$ and $|supp(\bfx_2') \cap supp(\bfe_2')| \geq \frac{wt(\bfe_2')}{2}$. Combining these two statements gives the desired result that $|supp(\bfx') \cap supp(\bfe')| \geq \frac{wt(\bfe')}{2}$ and so the proof is complete.

\end{proof}

The following lemma follows from the structure of grain-errors.

\begin{lemma}\label{lem:inf} For any two vectors $\bfx, \bfu \in GF(2)^n$, suppose that for some $1 \leq i \leq n-1$, 
\begin{enumerate}
\item $( x_{i}, x_{i+1} ) = (0, 0), ( u_{i}, u_{i+1} )=(1, 1)$ or 
\item $( x_{i}, x_{i+1} ) = (1, 1), ( u_{i}, u_{i+1} )=(0, 0)$.
\end{enumerate}
Then, $\cB_{t,G}(\bfx) \cap \cB_{t,G}(\bfu) = \emptyset$. \end{lemma}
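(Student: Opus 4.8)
The plan is to exploit the directional (``smearing'') nature of a grain-error recorded in part~2 of Definition~\ref{def:grainerror}: an error can occur at a position $j$ only when $x_{j-1}\neq x_j$, so the \emph{trailing} coordinate of a constant pair is frozen. Concretely, I would first treat the pair at positions $i,i+1$ in $\bfx$. If $(x_i,x_{i+1})=(0,0)$ then $x_{i+1}=x_i$, so every $t$-grain-error $\bfe_{\bfx}=(e_1,\ldots,e_n)$ for $\bfx$ must satisfy $e_{i+1}=0$; hence every $\bfy=\bfx+\bfe_{\bfx}\in\cB_{t,G}(\bfx)$ has $y_{i+1}=x_{i+1}=0$.

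Next I would run the identical argument on $\bfu$. If $(u_i,u_{i+1})=(1,1)$ then $u_{i+1}=u_i$, so any $t$-grain-error for $\bfu$ leaves coordinate $i+1$ unchanged, and every $\bfz\in\cB_{t,G}(\bfu)$ has $z_{i+1}=1$. Combining the two observations, every vector in $\cB_{t,G}(\bfx)$ carries a $0$ in coordinate $i+1$ while every vector in $\cB_{t,G}(\bfu)$ carries a $1$ there, so $\cB_{t,G}(\bfx)\cap\cB_{t,G}(\bfu)=\emptyset$. Case~2 of the hypothesis is obtained by the same reasoning with the roles of $\bfx$ and $\bfu$ interchanged (or simply by symmetry).

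There is essentially no obstacle here; the one point to be careful about is tracking the correct coordinate. It is the trailing bit $i+1$ of each pair that cannot be flipped — the leading bit $x_i$ may well change if $x_{i-1}\neq x_i$, so an argument phrased in terms of position $i$ would fail. Note also that the weight bound $wt(\bfe_{\bfx})\le t$ plays no role, and no induction or appeal to Lemmas~\ref{lem:runs}--\ref{lem:runsdec} is needed: the claim follows directly from part~2 of Definition~\ref{def:grainerror}.
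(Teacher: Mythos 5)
Your proof is correct and uses essentially the same argument as the paper: since $x_i=x_{i+1}$ and $u_i=u_{i+1}$, part~2 of Definition~\ref{def:grainerror} freezes coordinate $i+1$ in both error-balls, and because $x_{i+1}\neq u_{i+1}$ the balls cannot meet. Your write-up is merely a more explicit, coordinate-by-coordinate rendering of the paper's observation that a grain-error can never change position $i+1$ in either vector.
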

\begin{proof}
Let $\bfy_1 = \bfx + \bfe_{\bfx}$ and $\bfy_2 = \bfu + \bfe_{\bfu}$. Since $\bfx$ and $\bfu$ differ at position $i+1$ then in order for $\bfy_1 = \bfy_2$, an error must occur at position $i+1$ in either $\bfx$ or $\bfu$ but not both. However, a grain-error can never change the information at position $i+1$ in either $\bfx$ or $\bfu$ since both $\bfx$ and $\bfu$ store the same information in positions $i$ and $i+1$ by the conditions in the statement of the lemma.
\end{proof}

We now prove the final lemma for this subsection.

\begin{lemma}\label{lem:even} Suppose $\cC$ is a $t$-grain-error-correcting code of length $n$ with the maximum possible cardinality. Then, $|\cC|$ is an even number.  \end{lemma}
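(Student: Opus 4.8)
The plan is to combine two elementary features of the grain-error model. The first is that the requirement $e_1 = 0$ in Definition~\ref{def:grainerror} means a grain-error never changes the leading coordinate: if $x_1 = b$ then every vector in $\cB_{t,G}(\bfx)$ also begins with $b$. The second is that the defining conditions of a grain-error are invariant under replacing the stored vector by its bitwise complement, since the first condition of Definition~\ref{def:grainerror} does not refer to $\bfx$ at all and the second only records whether consecutive entries of $\bfx$ agree.

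Using the first feature I would decouple the problem according to the leading coordinate. For $b \in \{0,1\}$, let $M_b$ denote the largest cardinality of a $t$-grain-error-correcting code of length $n$ all of whose codewords begin with $b$. Given any $t$-grain-error-correcting code $\cC$, split it as $\cC = \cC_0 \cup \cC_1$ by first coordinate; since the error-balls of the codewords of $\cC_0$ all lie in $\{0\} \times GF(2)^{n-1}$ while those of $\cC_1$ lie in $\{1\} \times GF(2)^{n-1}$, each $\cC_b$ is itself a $t$-grain-error-correcting code, so $|\cC| = |\cC_0| + |\cC_1| \le M_0 + M_1$. Conversely, a largest code of type $0$ and a largest code of type $1$ automatically have disjoint error-balls and can therefore be merged, which gives $M(n,t) = M_0 + M_1$.

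Using the second feature I would show $M_0 = M_1$. Once one verifies that $\bfe_{\bfx}$ is a $t$-grain-error for $\bfx$ if and only if it is a $t$-grain-error for the bitwise complement $\overline{\bfx}$, the identity $\overline{\bfx} + \bfe_{\bfx} = \overline{\bfx + \bfe_{\bfx}}$ over $GF(2)$ yields $\cB_{t,G}(\overline{\bfx}) = \{ \overline{\bfy} : \bfy \in \cB_{t,G}(\bfx) \}$. Hence bitwise complementation is a bijection of $GF(2)^n$ that maps error-balls onto error-balls, so it sends $t$-grain-error-correcting codes to $t$-grain-error-correcting codes of the same cardinality; as it interchanges the two types of leading coordinate, it restricts to a size-preserving bijection between the codes counted by $M_0$ and those counted by $M_1$, so $M_0 = M_1$. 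Combining the two steps, a code of maximum cardinality has size $M(n,t) = M_0 + M_1 = 2M_0$, which is even.

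Each step is short once arranged in this order, and I expect the only delicate point to be the use of complementation: it is tempting to try to enlarge a given optimal code into a complement-closed one directly, but the complement of a single codeword may clash with the other codewords, so one cannot simply close a fixed code under complementation. The clean route is the one above --- first decouple on the leading coordinate, where grain-errors act trivially, and only afterwards invoke complementation at the level of the optimal cardinalities $M_0$ and $M_1$.
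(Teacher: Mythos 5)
Your proof is correct, and it takes a genuinely different route from the paper's, although both hinge on the same starting observation that $e_1=0$ in Definition~\ref{def:grainerror} makes the leading coordinate immune to grain-errors. The paper argues by contradiction: it splits an optimal code $\cC$ by first bit into $\cC_0\cup\cC_1$, assumes the two parts have unequal sizes, and flips \emph{only the first bit} of every codeword of the larger part; to see that the flipped copy is still a grain code it invokes Claim~\ref{cl:mingrain} (distinct codewords of a $t$-grain code have disjoint $t$-mineral error-balls on positions $2,\ldots,n$), and then the union of the larger part with its flipped copy beats $|\cC|$, a contradiction. You instead decompose the optimal value as $M(n,t)=M_0+M_1$ and equalize $M_0=M_1$ via \emph{global} bitwise complementation, which is an exact automorphism of the error model because condition 2) of Definition~\ref{def:grainerror} only records whether adjacent entries of $\bfx$ agree, so $\cB_{t,G}(\overline{\bfx})=\overline{\cB_{t,G}(\bfx)}$. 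What each buys: your argument is self-contained and avoids the mineral-ball machinery entirely, and it isolates the clean structural facts $M(n,t)=M_0+M_1$ and $M_0=M_1$; the paper's first-bit-flip trick is more local and deliberately reuses the same device that underlies Claim~\ref{cl:gmcodes} (prepending a free bit to a mineral code doubles it), so it ties the lemma into the constructions of Section~\ref{sec:constructions}. Your closing caution is also well placed: one cannot simply close a fixed optimal code under complementation, since the complement of a codeword beginning with $0$ may collide with the error-balls of existing codewords beginning with $1$; decoupling by the leading bit first, as you do, is exactly what makes the symmetry argument legitimate.
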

\begin{proof} Let $\cG_{G}=(V,E)$ be a simple graph (i.e., a graph with undirected edges, with no parallel edges and no self loops), where $V = GF(2)^n$ and $E=\{ (\bfv_1, \bfv_2)\in V^2 : \cB_{t,G}(\bfv_2) \cap \cB_{t,G}(\bfv_1) \neq \emptyset \}$. A code $\cC$ is a $t$-grain-error-correcting code if and only if the set $\cC$ is an independent set in the graph $\cG$. Let $V_0 = \{ \bfv \in V : v_1=0 \}$ and $V_1 = \{ \bfv \in V : v_1=1 \}$. From Claim~\ref{cl:mingrain} there are no edges between any of the vertices in $V_0$ and $V_1$.  Consider the subgraphs $\cG_{0}=(V_0, E_0)$ where $E_0$ consists of all the edges in $E$ between the vertices in $V_0$, and $\cG_{1}=(V_1, E_1)$ where $E_1$ consists of all the edges in $E$ between vertices in $V_1$. Again from Claim~\ref{cl:mingrain}, $\cG_0$ and $\cG_1$ are isomorphic so that the maximum size of an independent set from $\cG_0$ is equal to the maximum size of an independent set from $\cG_1$ and thus the statement in the lemma holds.
%
%
%
%
\end{proof}

The next claim will be used later in Section~\ref{sec:constructions} for constructing grain codes.

\begin{claim}\label{cl:gmcodes} Suppose $\cC_{M}$ is a $t$-mineral-error-correcting code of length $n$. Let $\cC$ be the code that is the result of prepending an arbitrary bit to the beginning of every codeword in $\cC_{M}$. Then, $\cC$ is a length-$(n+1)$ $t$-grain-error-correcting code of size $2 |\cC_M|$. \end{claim}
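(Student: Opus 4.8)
The plan is to check directly the defining property of a $t$-grain-error-correcting code, namely that the error balls $\cB_{t,G}(\bfc)$, $\bfc\in\cC$, are pairwise disjoint. Here $\cC$ consists of the $2|\cC_M|$ words obtained by prepending each of the bits $0$ and $1$ to every codeword of $\cC_M$, i.e.\ $\cC=\{(b,\bfx):b\in GF(2),\ \bfx\in\cC_M\}$; once disjointness is established, $|\cC|=2|\cC_M|$ follows, since distinct words then have disjoint nonempty balls. The one structural fact driving everything is that condition~(1) of Definition~\ref{def:grainerror} forces $e_1=0$, so a grain-error never alters the leading coordinate; hence every vector of $\cB_{t,G}((b,\bfx))$ begins with the symbol $b$.

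I would then split into two cases for a pair of distinct codewords $(a_1,\bfx),(a_2,\bfy)\in\cC$. If $a_1\neq a_2$, the leading-coordinate observation makes the two balls trivially disjoint. If $a_1=a_2=:a$ (so that $\bfx\neq\bfy$), I would argue by contradiction: a common element of $\cB_{t,G}((a,\bfx))$ and $\cB_{t,G}((a,\bfy))$ must, again by the leading-coordinate observation, have the form $(a,\bfw)$, and then I would show that stripping the leading coordinate converts the underlying length-$(n+1)$ grain-error into a length-$n$ \emph{mineral}-error of the suffix. Concretely, if $(a,\bfw)=(a,\bfx)+\bfe$ with $\bfe=(e_1,\dots,e_{n+1})$ a $t$-grain-error for $(a,\bfx)$, then $\bfe_{\bfx}:=(e_2,\dots,e_{n+1})$ has weight $\le t$ (since $e_1=0$), and for $2\le i\le n$, if $(\bfe_{\bfx})_i=e_{i+1}\neq0$ then, because $i+1\ge 3$, condition~(2) of Definition~\ref{def:grainerror} applied to $\bfe$ gives $(a,\bfx)_{i+1}\neq(a,\bfx)_i$, i.e.\ $x_i\neq x_{i-1}$; so $\bfe_{\bfx}$ is a $t$-mineral-error for $\bfx$ and $\bfw\in\cB_{t,M}(\bfx)$. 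Symmetrically $\bfw\in\cB_{t,M}(\bfy)$, contradicting that $\cC_M$ is a $t$-mineral-error-correcting code, whose mineral-error balls for distinct codewords are pairwise disjoint.

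I do not expect a genuine obstacle: the argument is essentially the converse of Claim~\ref{cl:mingrain} and amounts to index bookkeeping. The only point requiring care is the index shift — Definition~\ref{def:mineralerror} drops precisely the transition constraint at the first coordinate that Definition~\ref{def:grainerror} imposes, and this freed-up coordinate is exactly the suffix coordinate corresponding to position $i=2$ of the full word, where the grain-error condition would bite but the mineral-error condition does not. Recognizing that the mineral-error ball, rather than the grain-error ball, is the right object for the suffix is the conceptual step that makes the case $a_1=a_2$ go through cleanly.
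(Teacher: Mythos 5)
Your proof is correct: the leading-bit observation ($e_1=0$ forces grain-errors to preserve the prepended bit), together with the index-shift showing that stripping the first coordinate turns a grain-error on the length-$(n+1)$ word into a mineral-error on the length-$n$ suffix, is exactly the content of the claim, and your bookkeeping (in particular that Definition~\ref{def:mineralerror} imposes no constraint at the suffix position corresponding to $i=2$ of the full word) is accurate. The paper states Claim~\ref{cl:gmcodes} without proof, treating it as immediate from Definitions~\ref{def:grainerror} and~\ref{def:mineralerror} (it is essentially the converse of Claim~\ref{cl:mingrain}), so your write-up simply supplies the details the authors omitted.
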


%

\subsection{Tools for computing upper bounds}\label{subsec:toolsub}
In this subsection, we briefly review some of the tools used in Section~\ref{sec:bound} for computing a non-asymptotic upper bound on the cardinality of grain-error-correcting codes. 
We begin by revisiting some of the notation and results from \cite{Negar}.  

\begin{definition} A \textit{\bf{hypergraph}} $\cH$ is a pair $( \cX, \cE)$, where $\cX$ is a finite set and $\cE$ is a collection of nonempty subsets of $\cX$ such that $\cup_{E \in \cE} E = \cX$. The elements of $\cE$ are called \textit{\bf{hyperedges}}. \end{definition}

\begin{definition} A \textit{\bf{matching}} of a hypergraph $\cH=( \cX, \cE )$ is a collection of pairwise disjoint hyperedges $E_1, \ldots, E_j \in \cE$. The \textit{\bf{matching number}} of $\cH$, denoted $\nu(\cH)$, is the largest $j$ for which such a matching exists. \end{definition} 

As will be described shortly, the following can be interpreted as the dual of the matching of a hypergraph.

\begin{definition} A \textit{\bf{transversal}} of a hypergraph $\cH = (\cX, \cE)$ is a subset $T \subseteq \cX$ that intersects every hyperedge in $\cE$. The \textit{\bf{transversal number}} of $\cH$, denoted by $\tau(\cH)$, is the smallest size of a transversal.  \end{definition}

Let $\cH$ be a hypergraph with vertices $x_1, \ldots, x_n$ and hyperedges $E_1, \ldots, E_m$. The relationships contained within $\cH$ can be interpreted through a matrix $A \in \{0, 1\}^{n \times m}$, where

$$ A(i,j) = \begin{cases}
   1 & \text{if } x_i \in E_j, \\
   0       & \text{otherwise},
  \end{cases}$$
for $1 \leq i \leq n, 1 \leq j \leq m$. Cast in this light, the matching number and the transversal number can be derived using linear optimization techniques.

\begin{lemma}(cf. \cite{Negar}) The matching number and the transversal number are the solutions of the integer linear programs:
\begin{align}\label{optimize1}
\nu(\cH) &= \max \{ {\bf1^T} {\bfz} | A {\bfz} \leq {\bf1}, z_j \in \{0,1\}, 1\leq j \leq m \}, \text{ and}\\
\tau(\cH) &= \min \{ {\bf1^T} {\bfu} | A^T {\bfu} \geq {\bf1}, u_i \in \{ 0, 1\}, 1 \leq i \leq n \},
\end{align}
where $\bf1$ denotes a column vector of all $1$s of the appropriate dimension.
\end{lemma}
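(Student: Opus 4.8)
The plan is to prove both equalities by setting up, in each case, a size-preserving bijection between the combinatorial objects being optimized over (matchings, respectively transversals) and the feasible $0$--$1$ vectors of the corresponding integer program. Once such a correspondence is in place, the two optima necessarily coincide, so this is the only real content of the argument.

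First I would handle $\nu(\cH)$. To a collection $S$ of hyperedges I associate its indicator vector $\bfz \in \{0,1\}^m$, with $z_j = 1$ exactly when $E_j \in S$; then ${\bf1}^T \bfz = |S|$. The $i$-th entry of $A\bfz$ counts how many members of $S$ contain the vertex $x_i$, so the constraint $A\bfz \leq {\bf1}$ says precisely that no vertex is covered more than once by $S$. I would then observe that this ``no vertex covered twice'' condition is equivalent to the hyperedges in $S$ being pairwise disjoint, which is exactly the definition of a matching. Since the empty matching is always feasible and the feasible region is finite, the maximum in the program for $\nu(\cH)$ is attained, and by the bijection it equals the largest matching size.

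Next I would treat $\tau(\cH)$ dually. To a vertex subset $T \subseteq \cX$ I associate its indicator vector $\bfu \in \{0,1\}^n$ with $u_i = 1$ iff $x_i \in T$, so ${\bf1}^T \bfu = |T|$. The $j$-th entry of $A^T \bfu$ equals $|T \cap E_j|$, hence the constraint $A^T \bfu \geq {\bf1}$ is exactly the statement that $T$ meets every hyperedge, i.e., that $T$ is a transversal. Because $\cup_{E \in \cE} E = \cX$, the full vertex set $\cX$ is itself a transversal, so the program is feasible; its minimum is then attained and equals the smallest transversal size.

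I do not expect a genuine obstacle here: the statement is essentially a dictionary between the set-theoretic definitions and the matrix-inequality formulation. The one place that merits an explicit line rather than a wave of the hand is the equivalence, in the matching case, between the per-vertex inequality $A\bfz \leq {\bf1}$ and pairwise disjointness of the chosen hyperedges — in one direction, if every vertex lies in at most one chosen edge then any two chosen (nonempty) edges share no vertex; the converse is immediate. Everything else is routine bookkeeping about indicator vectors.
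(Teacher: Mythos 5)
Your argument is correct. Note that the paper itself gives no proof of this lemma; it is quoted as a known fact from the cited reference (Kulkarni--Kiyavash), so there is nothing to compare against beyond observing that your indicator-vector dictionary is exactly the standard justification: ${\bf1}^T\bfz=|S|$ with $(A\bfz)_i$ counting chosen edges through $x_i$, and ${\bf1}^T\bfu=|T|$ with $(A^T\bfu)_j=|T\cap E_j|$, which translate ``pairwise disjoint'' and ``meets every hyperedge'' into the two constraint systems. One cosmetic point: feasibility of the minimization program follows because each hyperedge is nonempty (part of the paper's definition of a hypergraph), so $\cX\cap E_j=E_j\neq\emptyset$; the covering condition $\cup_{E\in\cE}E=\cX$ that you invoke is not the relevant hypothesis, though the conclusion is unaffected.
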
 

Relaxing the condition that the solutions to the programming problem are comprised of $0$s and $1$s, we have the following problems:
\begin{align}
\nu^{*}(\cH) &= \max \{ {\bf1^T} {\bfz} | A {\bfz} \leq {\bf1}, {\bfz} \geq 0\}\label{optimize2}, \text{ and} \\
\tau^{*}(\cH) &= \min \{ {\bf1^T} {\bfu} | A^T {\bfu} \geq {\bf1}, {\bfu} \geq 0 \}\label{optimize3}.
\end{align}
Clearly $\nu(\cH) \leq  \nu^{*}(\cH)$ and $\tau(\cH) \geq  \tau^{*}(\cH)$. Since (\ref{optimize2}) and (\ref{optimize3}) are linear programs, they satisfy strong duality \cite{boyd} and $\nu^{*}(\cH)=\tau^{*}(\cH)$. Thus, combining these inequalities leads us to $ \nu(\cH) \leq \tau^{*}(\cH)$ \cite{Negar}.

\subsection{Graph notation}\label{sec:confusabilitygraph}
In this subsection, we describe graph notation from \cite{West} that will be used in Section~\ref{subsec:improvedgrains} and Section~\ref{sec:improvedsingle}. Let $\cG=(V,E)$ be a simple graph. 
A vertex $\bfv_1 \in V$ is adjacent to another vertex $\bfv_2 \in V$ if there exists an edge between them. The degree of a vertex is the number of its adjacent vertices and the maximum degree of a vertex in $\cG$ is denoted $\Delta(\cG)$.

A \textit{\textbf{$\bfk$-coloring}} is a mapping $\Phi : V \to \{ 0, 1, \ldots, k-1 \}$ of numbers to each vertex such that the same number is never assigned to adjacent vertices. The \textit{\textbf{chromatic number}} of a graph, denoted by $\chi(\cG)$, is the smallest $k$ for which a $k$-coloring exists. A \textit{\textbf{clique}} is a set of vertices in $\cG$ that are all adjacent. The size of the largest clique in a graph $\cG$ is denoted $\varsigma(\cG)$. It is known that for a graph $\cG$, $\chi(\cG)$ is such that $\varsigma(\cG) \leq \chi(\cG) \leq \Delta(\cG) + 1$ \cite{West}. Each collection of vertices that share the same number (under some fixed $k$-coloring) is referred to as a \textit{\textbf{color class}}.

%
%

\subsection{Distance metrics and group codes}\label{subsec:distances}
In this subsection, we introduce some distance metrics that are used in Section~\ref{sec:constructions} to construct grain-error-correcting codes. In addition, we define group codes that will serve as the foundation of the single grain-error-correcting codes introduced in Section~\ref{subsec:singlegrain}.

\begin{definition} Suppose $\bfx,\bfy \in GF(2)^n$. Their \textit{\bf{Hamming distance}} is denoted $d_H(\bfx,\bfy)=| \{ i : x_i \neq y_i \} |$. \end{definition}


\begin{definition} Suppose $\bfx=(x_1,\ldots, x_{n}),\bfy=(y_1,\ldots, y_{n}) \in GF(2)^n$. For $1 \leq i \leq n$, $N(\bfx,\bfy)=| \{i : x_i > y_i \}|$. \end{definition}

\begin{definition} (cf. \cite{cr}) Suppose $\bfx,\bfy$ are two vectors in $GF(2)^n$. Their \textit{\bf{asymmetric distance}} is denoted $d_A(\bfx,\bfy)= \max \{ N(\bfx,\bfy), N(\bfy,\bfx) \}$. \end{definition}

We say that a code $\cC$ has \textit{\textbf{minimum Hamming distance}} $d_{H}(\cC)$ if $d_{H}(\cC)$ is the smallest Hamming distance between any two distinct codewords in $\cC$. Similarly, we say that a code $\cC$ has \textit{\textbf{minimum asymmetric distance}} $d_A(\cC)$ if $d_A(\cC)$ is the smallest asymmetric distance between any two distinct codewords in $\cC$.

Suppose $\cA$ is an additive Abelian group of order $n+1$ and suppose $( \tilde{g}_1, \ldots, \tilde{g}_{n} )$ is a sequence consisting of the distinct non-zero elements of $\cA$. For every $a \in \cA$, we define a \textit{\textbf{group code}} $\tilde{\cC}^{\cA}_{a}$ to be
$$ \tilde{\cC}^{\cA}_a = \{ \bfx \in GF(2)^{n} | \sum_{k=1}^{n} x_k \tilde{g}_k = a \}. $$ 

Without loss of generality, we assume the Abelian groups we deal with in this paper are additive. The resulting code construction was shown in \cite{cr} to have $d_H(\tilde{\cC}^{\cA}_a) \geq d_A(\tilde{\cC}^{\cA}_a) \geq 2$. We include the following example for clarity.

\begin{example} Let $\cA$ be the additive Abelian group $\mathbb{Z}_3$ so that $( \tilde{g}_1, \tilde{g}_2 ) = ( 1, 2 )$. Then, the group $\cA$ partitions the space $GF(2)^2$ into $3$ group codes.
\begin{align*}
\tilde{\cC}^{\mathbb{Z}_{3}}_0 &= \{ (0, 0), (1, 1) \}, \\
\tilde{\cC}^{\mathbb{Z}_{3}}_1 &= \{ (1, 0) \}, \\
\tilde{\cC}^{\mathbb{Z}_{3}}_2 &= \{ (0, 1) \}.
\end{align*}
\end{example}

An \textit{\textbf{elementary Abelian group}} is a finite Abelian group where every non-identity element in the group has order $p$, where $p$ is a prime. For shorthand, the elementary Abelian group of size $p^r$ (for a prime $p$ and a positive integer $r$) is referred to as an \textit{\textbf{elementary Abelian $\bfp$-group}} \cite{rotman}.

\subsection{Discrete Fourier analysis}\label{subsec:fourier}
In this subsection, we briefly review some of the tools that will be used in Section~\ref{sec:cardinalities} to derive lower bounds on the cardinalities of code constructions. The notation adopted is similar to the notation used in \cite{Mceliece}.

Let $p$ be a prime number and suppose $\zeta_p$ denotes the complex primitive $p$-th root of unity and suppose $r$ is some positive integer. Let $\cA$ refer to the additive Abelian group $(\mathbb{Z}_p)^r = \underbrace{\mathbb{Z}_p \times \cdots \times \mathbb{Z}_p}_{\text{$r$ times}}$. The operator $ {\langle \bfg, \bfh \rangle}$ takes two elements $\bfg=(g_1, \ldots, g_r), \bfh=(h_1, \ldots, h_r) \in \cA$ and maps them into a complex number as follows  
$$  {\langle \bfg, \bfh \rangle} = \prod_{i=1}^r (\zeta_p)^{g_i h_i} = (\zeta_p)^{\sum_{i=1}^r g_i h_i} = (\zeta_p)^{\bfg^T \cdot \bfh}. $$
Let $f({\bfg})$ be any function that maps elements of $\cA$ into the complex plane. The \textbf{\textit{Fourier transform}} $\hat{f}$ of $f$ is defined as 
$$ \hat{f}(\bfh) = \sum_{{\bfg} \in \cA}  {\langle \bfh, -\bfg \rangle} f(\bfg) $$
and the \textit{\textbf{inverse Fourier transform}} is defined as 
$$ f(\bfg) = \frac{1}{p^{r}} \sum_{{\bfh} \in \cA}  {\langle \bfh, \bfg \rangle} \hat{f}(\bfh).$$

\section{Upper Bounds on Grain-Error Codes}\label{sec:bound}

In this section, we use linear programming methods to produce a closed-form upper bound on the cardinality of a $t$-grain-error-correcting code. The approach is analogous to that found in \cite{Negar} where upper bounds were computed for the deletion channel and in \cite{Kashyap} where upper bounds were derived for the non-overlapping grain-error model. Recall, our objective is to compute upper bounds for the overlapping grain-error model. 

The approach is the following. First, the vector space from which codewords are chosen, is projected onto a hypergraph. Then, an approximate solution to a matching problem is derived. 
Recall from Section~\ref{subsec:errors} that the maximum size of a $t$-grain-error-correcting-code of length $n$ will be referred to as $M(n,t)$. 

Let $\cH_{t,n}$ denote the hypergraph for a $t$-grain-error-correcting code. More formally, let
$$ \cH_{t,n} = ( GF(2)^n, \{ \cB_{t,G}(\bfx) | \bfx \in GF(2)^n \} ). $$
In this graph, the vertices represent candidate codewords and the hyperedges represent vectors that result when $t$ or fewer grain-errors occur in any of the candidate codewords.

As in \cite{Negar}, $\nu^{*}(\cH_{t,n})$ is an upper bound on $M(n,t)$ and will be derived by considering the dual problem defined in $(4)$. The problem is to find a function $w : GF(2)^n \to \mathbb{R^{+}}$ such that 
\begin{equation}\label{eq:trans}
\tau^{*}(\cH_{t,n}) = \hspace{-2ex}\min_{w : GF(2)^n \to \mathbb{R^{+}}} \hspace{-1ex}\Bigg\{\hspace{-0.5ex} \sum_{\bfy \in GF(2)^n} \hspace{-3ex}w(\bfy) \ | \ \hspace{-3ex} \sum_{\bfy \in \cB_{t,G}(\bfx) } \hspace{-3ex} w(\bfy) \geq 1, \forall \bfx \in GF(2)^{n} \Bigg\}.
\end{equation}


%
%

We are now ready to state the main result of the section.

\begin{theorem}\label{th:maxcard} For positive integers $n,t$ where $t<n$, $$ M(n,t) \leq 2 \sum_{k=0}^{n-1} \nchoosek{n-1}{k} \frac{1}{\sum_{j=0}^{\min\{t,k\}} \nchoosek{k}{j}}. $$ \end{theorem}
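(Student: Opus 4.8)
The plan is to exhibit an explicit feasible weight function $w$ for the linear program defining $\tau^*(\cH_{t,n})$, and then to invoke the chain $M(n,t) \le \nu^*(\cH_{t,n}) \le \tau^*(\cH_{t,n})$ established in Section~\ref{subsec:toolsub}. Since by Lemma~\ref{lem:runs} the error-ball size $b_{t,n}(\bfx)$ depends only on the number of runs $r(\bfx)$, namely $b_{t,n}(\bfx) = \sum_{j=0}^{\min\{t,r(\bfx)-1\}} \binom{r(\bfx)-1}{j}$, the natural guess is to let $w(\bfy)$ depend only on $r(\bfy)$ as well. Concretely I would set $w(\bfy) = 1/b_{t,n}(\bfy)$ for every $\bfy \in GF(2)^n$; this is the analogue of the "uniform over the ball" choice used in \cite{Negar} for deletions and in \cite{Kashyap} for non-overlapping grains.

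The feasibility check is constraint~(\ref{eq:trans}): for each $\bfx$ we must show $\sum_{\bfy \in \cB_{t,G}(\bfx)} 1/b_{t,n}(\bfy) \ge 1$. Here is where Lemma~\ref{lem:runsdec} does the work: for every $\bfy \in \cB_{t,G}(\bfx)$ we have $r(\bfy) \le r(\bfx)$, hence $b_{t,n}(\bfy) \le b_{t,n}(\bfx)$, hence $1/b_{t,n}(\bfy) \ge 1/b_{t,n}(\bfx)$. Summing over the $|\cB_{t,G}(\bfx)| = b_{t,n}(\bfx)$ elements of the ball gives $\sum_{\bfy \in \cB_{t,G}(\bfx)} 1/b_{t,n}(\bfy) \ge b_{t,n}(\bfx) \cdot \frac{1}{b_{t,n}(\bfx)} = 1$, so $w$ is feasible and $\tau^*(\cH_{t,n}) \le \sum_{\bfy} 1/b_{t,n}(\bfy)$.

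It then remains to evaluate (or rather, bound) $\sum_{\bfy \in GF(2)^n} 1/b_{t,n}(\bfy)$. I would group the sum by run count: the number of binary vectors of length $n$ with exactly $k$ runs is $2\binom{n-1}{k-1}$ (choose which of the $n-1$ adjacent pairs are transitions, times $2$ for the leading bit), and each such $\bfy$ contributes $1/\sum_{j=0}^{\min\{t,k-1\}}\binom{k-1}{j}$. Reindexing $k-1 \mapsto k$ yields exactly
$$\sum_{\bfy \in GF(2)^n} \frac{1}{b_{t,n}(\bfy)} = 2\sum_{k=0}^{n-1} \binom{n-1}{k} \frac{1}{\sum_{j=0}^{\min\{t,k\}}\binom{k}{j}},$$
which is the claimed bound. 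The only mild subtlety — and the step most worth stating carefully — is the monotonicity $b_{t,n}(\bfy) \le b_{t,n}(\bfx)$ for $\bfy$ in the ball, but this is precisely Lemma~\ref{lem:runsdec}, so no real obstacle remains; the argument is essentially a clean packing/covering duality computation once that lemma is in hand.
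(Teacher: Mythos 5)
Your proposal is correct and follows essentially the same route as the paper: the same weight assignment $w(\bfy)=1/b_{t,n}(\bfy)$, feasibility of constraint~(\ref{eq:trans}) via the monotonicity in Lemma~\ref{lem:runsdec}, and the final count of $2\binom{n-1}{k-1}$ vectors with $k$ runs followed by reindexing. No gaps; this matches the paper's proof of Theorem~\ref{th:maxcard}.
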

\begin{proof} In order to prove the result, we must assign values for $w(\bfy)$ such that the constraint in (\ref{eq:trans}) is satisfied. Let $w(\bfy) = \frac{1}{ b_{t,n}(\bfy) }$ where $b_{t,n}(\bfy)$ is computed as in Lemma~\ref{lem:runs}.
Note that 
\begin{align*}
&\sum_{\bfy \in \cB_{{t,G}}(\bfx) } w(\bfy) =  \sum_{\bfy \in \cB_{{t,G}}(\bfx) } \frac{1}{ b_{t,n}(\bfy)}.
\end{align*}
From Lemma~\ref{lem:runsdec}, for any $\bfy \in \cB_{{t,G}}(\bfx)$, $b_{t,n}(\bfy) \leq b_{t,n}(\bfx)$, so we have
\begin{align*}
&\sum_{\bfy \in \cB_{{t,G}}(\bfx) } \frac{1}{ b_{t,n}(\bfy)} \geq  \sum_{\bfy \in \cB_{{t,G}}(\bfx) } \frac{1}{ b_{t,n}(\bfx)} = b_{t,n} (\bfx) \frac{ 1 }{b_{t,n}(\bfx)} = 1.
\end{align*}

The theorem statement now follows from the bound on $\sum_{\bfy \in GF(2)^n} w(\bfy)$: Since the number of length-$n$ vectors with $k$ runs is $2 \nchoosek{n-1}{k-1}$ and $b_{t,n}=\sum_{j=0}^{\min\{t,k-1\}} \nchoosek{k-1}{j}$ from Lemma~1, we have $$M(n,t) \leq 2 \sum_{k=1}^{n} \nchoosek{n-1}{k-1} \frac{1}{\sum_{j=0}^{\min\{t,k-1\}} \nchoosek{k-1}{j}},$$ which, after reindexing the parameter $k$, is the statement in the theorem.
\end{proof}

Theorem~\ref{th:maxcard} gives an explicit upper bound on $M(n,t)$ for all $n$ and $t$. However, providing an explicit expression (without summations) is still not easy to derive. In the following, we present non-asymptotic bounds for $t=1,2,3$. The bounds will then be compared against the existing bounds in \cite{artyom} for $t=1,2,3$. Note that the overlapping and non-overlapping grain-error models coincide for the case where $t=1$. The following corollary was also derived in \cite{Kashyap} in the context of the non-overlapping grain-error model. It is the result of combining Theorem~\ref{th:maxcard} for the case where $t=1$ with Lemma~\ref{lem:even}. Recall, $M(n,t)$ refers to the maximum size of a $t$-grain-error-correcting code.

\begin{corollary}\label{cor:card1} For $n \geq 1$, $M(n,1) \leq 2 \lfloor \frac{2^{n+1} - 2}{2n} \rfloor$. \end{corollary}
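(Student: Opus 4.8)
The plan is to specialize Theorem~\ref{th:maxcard} to $t=1$ and then simplify the resulting sum in closed form. Setting $t=1$, the bound reads
$$ M(n,1) \leq 2 \sum_{k=0}^{n-1} \nchoosek{n-1}{k} \frac{1}{\sum_{j=0}^{\min\{1,k\}} \nchoosek{k}{j}}. $$
The key observation is that $\sum_{j=0}^{\min\{1,k\}} \nchoosek{k}{j}$ equals $1$ when $k=0$ and equals $1+k$ when $k \geq 1$; in fact $1+k$ is correct for $k=0$ as well, so the denominator is simply $k+1$ for every $k$ in the range. Hence
$$ M(n,1) \leq 2 \sum_{k=0}^{n-1} \frac{1}{k+1}\nchoosek{n-1}{k}. $$

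Next I would evaluate $\sum_{k=0}^{n-1} \frac{1}{k+1}\nchoosek{n-1}{k}$ using the standard identity $\frac{1}{k+1}\nchoosek{n-1}{k} = \frac{1}{n}\nchoosek{n}{k+1}$, which follows from writing out the factorials. Summing over $k$ from $0$ to $n-1$ gives $\frac{1}{n}\sum_{m=1}^{n}\nchoosek{n}{m} = \frac{2^n - 1}{n}$. Therefore the real-relaxation bound becomes $M(n,1) \leq \frac{2(2^n-1)}{n} = \frac{2^{n+1}-2}{n}$.

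Finally I would invoke integrality: $M(n,1)$ is an integer, so $M(n,1) \leq \lfloor \frac{2^{n+1}-2}{n}\rfloor$; and by Lemma~\ref{lem:even}, the maximum cardinality is even, so writing $M(n,1) = 2m$ with $m$ an integer and dividing the inequality $2m \leq \frac{2^{n+1}-2}{n}$ by $2$ yields $m \leq \frac{2^n-1}{n}$, hence $m \leq \lfloor \frac{2^n - 1}{n}\rfloor = \lfloor \frac{2^{n+1}-2}{2n}\rfloor$, giving $M(n,1) \leq 2\lfloor \frac{2^{n+1}-2}{2n}\rfloor$ as claimed. The only mildly delicate point is justifying that the floor may be pulled inside the factor of $2$ this way — i.e.\ that evenness of $M(n,1)$ is exactly what is needed so that $2\lfloor x/2 \rfloor$ rather than the weaker $\lfloor x \rfloor$ is a valid bound — but this is immediate once the parity is in hand. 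I do not anticipate a genuine obstacle here; the main thing to get right is the binomial-sum identity and the bookkeeping with the floor function, both routine.
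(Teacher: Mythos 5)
Your proof is correct and follows exactly the route the paper intends: it specializes Theorem~\ref{th:maxcard} to $t=1$, evaluates the sum via $\frac{1}{k+1}\binom{n-1}{k}=\frac{1}{n}\binom{n}{k+1}$ to get $\frac{2^{n+1}-2}{n}$, and then uses the evenness of $M(n,1)$ from Lemma~\ref{lem:even} to justify the factor-of-two floor. The paper states the corollary as an immediate combination of these two results without writing out the computation, so your argument simply supplies the omitted details.
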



For the case of $t=2$, we make use of the following claims which can be proven using induction. The details are included in Appendix~\ref{app:m3}.

\begin{claim}\label{cl:2} For $n \geq 2$, $$\sum_{k=2}^n \frac{1}{k+1} \nchoosek{n}{k} = \frac{1}{n+1} \left(2^{n+1} - 2 - \frac{3n}{2} - \frac{n^2}{2} \right).$$ \end{claim}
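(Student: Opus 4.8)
The plan is to prove the identity
$$\sum_{k=2}^n \frac{1}{k+1}\binom{n}{k} = \frac{1}{n+1}\left(2^{n+1} - 2 - \frac{3n}{2} - \frac{n^2}{2}\right)$$
by reducing it to a known closed form for the full sum $\sum_{k=0}^n \frac{1}{k+1}\binom{n}{k}$ and then subtracting off the $k=0$ and $k=1$ terms. First I would recall (or establish, by a one-line induction or by integrating the binomial theorem) the standard identity $\sum_{k=0}^n \frac{1}{k+1}\binom{n}{k} = \frac{1}{n+1}\sum_{k=0}^n \binom{n+1}{k+1} = \frac{2^{n+1}-1}{n+1}$, using the absorption identity $\frac{1}{k+1}\binom{n}{k} = \frac{1}{n+1}\binom{n+1}{k+1}$. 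The $k=0$ term of the original sum is $\frac{1}{1}\binom{n}{0} = 1$ and the $k=1$ term is $\frac{1}{2}\binom{n}{1} = \frac{n}{2}$.

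Then I would compute
$$\sum_{k=2}^n \frac{1}{k+1}\binom{n}{k} = \frac{2^{n+1}-1}{n+1} - 1 - \frac{n}{2}$$
and simply verify algebraically that this equals the claimed right-hand side, i.e. that $\frac{2^{n+1}-1}{n+1} - 1 - \frac{n}{2} = \frac{1}{n+1}(2^{n+1} - 2 - \frac{3n}{2} - \frac{n^2}{2})$. Clearing the common denominator $n+1$, the claim becomes $2^{n+1} - 1 - (n+1) - \frac{n(n+1)}{2} = 2^{n+1} - 2 - \frac{3n}{2} - \frac{n^2}{2}$, and expanding the left side gives $2^{n+1} - 2 - n - \frac{n^2+n}{2} = 2^{n+1} - 2 - \frac{3n}{2} - \frac{n^2}{2}$, which checks out. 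Since the paper explicitly says the claim "can be proven using induction" and defers details to Appendix~\ref{app:m3}, an alternative route is a direct induction on $n$: the base case $n=2$ is a direct computation (LHS $= \frac{1}{3}\binom{2}{2} = \frac{1}{3}$, RHS $= \frac{1}{3}(8 - 2 - 3 - 2) = \frac{1}{3}$), and the inductive step uses Pascal's rule $\binom{n+1}{k} = \binom{n}{k} + \binom{n}{k-1}$ to relate the sum for $n+1$ to the sum for $n$, together with the shifted identity for $\sum \frac{1}{k+1}\binom{n}{k-1}$.

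The only mildly delicate point is bookkeeping the summation limits when splitting $\sum_{k=2}^{n+1}\frac{1}{k+1}\binom{n+1}{k}$ via Pascal's rule into $\sum_{k=2}^{n+1}\frac{1}{k+1}\binom{n}{k} + \sum_{k=2}^{n+1}\frac{1}{k+1}\binom{n}{k-1}$: the first sum picks up a spurious $k=n+1$ term equal to $\frac{1}{n+2}\binom{n}{n+1}=0$ and so agrees with the inductive hypothesis's sum, while the second sum reindexes to $\sum_{j=1}^{n}\frac{1}{j+2}\binom{n}{j}$, which needs the companion identity $\frac{1}{j+2}\binom{n}{j} = \frac{1}{(n+1)(n+2)}\binom{n+2}{j+2}$ to evaluate in closed form. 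I expect this reindexing and the second absorption identity to be the main obstacle — not conceptually hard, but the arithmetic is where an error would creep in. Given that the generating-function/absorption approach in the first two paragraphs sidesteps the induction entirely and is shorter, I would present that as the proof, relegating the induction to a remark; but either is routine once the absorption identity $\frac{1}{k+1}\binom{n}{k} = \frac{1}{n+1}\binom{n+1}{k+1}$ is in hand.
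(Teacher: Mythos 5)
Your proposal is correct and is essentially the paper's own proof: the appendix derives the identity exactly by subtracting the $k=0$ and $k=1$ terms ($1$ and $n/2$) from the full sum and applying the absorption identity $\frac{1}{k+1}\binom{n}{k}=\frac{1}{n+1}\binom{n+1}{k+1}$ to get $\frac{2^{n+1}-1}{n+1}$, followed by the same algebra. (The paper's "can be proven using induction" remark notwithstanding, its actual proof of this claim is the direct computation you give, so your induction sketch is unnecessary.)
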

\begin{claim}\label{cl:3} For $n \geq 17$,
$$\sum_{k=1}^n \frac{1}{k} \nchoosek{n}{k} \leq \frac{2^{n+1}}{n-1-\frac{2}{n-5}+\frac{1}{n^2}}.$$ \end{claim}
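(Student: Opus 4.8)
The plan is to prove Claim~\ref{cl:3}, the inequality $\sum_{k=1}^n \frac{1}{k}\binom{n}{k} \le \frac{2^{n+1}}{n-1-\frac{2}{n-5}}$ for $n\ge 14$, by induction on $n$, mirroring the style advertised for Claims~\ref{cl:2} and~\ref{cl:3}. Write $S_n := \sum_{k=1}^n \frac{1}{k}\binom{n}{k}$. The first step is to record a clean recurrence for $S_n$. Using $\frac{1}{k}\binom{n}{k} = \frac{1}{n+1}\left(\binom{n+1}{k+1}\cdot\frac{k+1}{k}\right)$ is awkward; instead I would use the standard identity $\frac{1}{k}\binom{n}{k} = \frac{1}{n}\binom{n}{k}\cdot\frac{n}{k}$ — more usefully, from $\binom{n+1}{k} = \binom{n}{k}+\binom{n}{k-1}$ one gets $S_{n+1} = \sum_{k=1}^{n+1}\frac{1}{k}\binom{n+1}{k} = S_n + \sum_{k=1}^{n+1}\frac{1}{k}\binom{n}{k-1} = S_n + \sum_{j=0}^{n}\frac{1}{j+1}\binom{n}{j} = S_n + \frac{2^{n+1}-1}{n+1}$, using the well-known $\sum_{j=0}^n \frac{1}{j+1}\binom{n}{j} = \frac{2^{n+1}-1}{n+1}$. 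So the key recurrence is
\begin{align*}
S_{n+1} = S_n + \frac{2^{n+1}-1}{n+1}.
\end{align*}

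With this recurrence in hand, the induction is routine in principle. For the base case $n=14$ I would simply verify $S_{14} \le \frac{2^{15}}{13 - \frac{2}{9}}$ numerically (both sides are explicit rationals). For the inductive step, assume $S_n \le \frac{2^{n+1}}{n-1-\frac{2}{n-5}}$; then
\begin{align*}
S_{n+1} \le \frac{2^{n+1}}{n-1-\frac{2}{n-5}} + \frac{2^{n+1}-1}{n+1} \le \frac{2^{n+1}}{n-1-\frac{2}{n-5}} + \frac{2^{n+1}}{n+1},
\end{align*}
and it remains to show this is at most $\frac{2^{n+2}}{n-\frac{2}{n-4}}$, i.e. that
\begin{align*}
\frac{1}{n-1-\frac{2}{n-5}} + \frac{1}{n+1} \le \frac{2}{n-\frac{2}{n-4}}.
\end{align*}
This last inequality involves only rational functions of $n$; clearing denominators reduces it to a polynomial inequality in $n$ that holds for all $n$ beyond some small threshold (certainly for $n\ge 14$), which can be checked by expanding and bounding the coefficients, or by observing that $n-1-\frac{2}{n-5}$ and $n-\frac{2}{n-4}$ are both very close to $n$ while the harmonic correction terms are lower-order.

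The main obstacle, and the only place requiring genuine care, is the final rational-function inequality: the bound $\frac{2^{n+1}}{n-1-2/(n-5)}$ is engineered to be tight enough that the slack in the induction is thin, so one must confirm that the correction term $\frac{2^{n+1}-1}{n+1}$ — which is slightly smaller than $\frac{2^{n+1}}{n+1}$, a gain I would deliberately keep to ease the algebra — is not too large to close the step. Concretely, after clearing denominators the inequality becomes a comparison of two cubics (roughly) in $n$, and I expect the threshold where it first holds to be well below $14$, so that the base case $n=14$ together with the inductive step suffices; if the polynomial inequality should fail for a few small values between the natural threshold and $14$, the hypothesis $n\ge 14$ absorbs them. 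I would present the polynomial bound cleanly rather than grind through every term, since the structure (leading terms cancel, lower-order terms have a definite sign for large $n$) makes the conclusion transparent. An alternative to the induction, worth mentioning, is a direct estimate: split $S_n = \sum_{k=1}^n \frac{1}{k}\binom{n}{k}$ at $k$ near $n/2$, bound $\frac{1}{k}$ by its value at the relevant endpoint on each piece, and compare with $2^{n+1}/n$ plus lower-order corrections using $\sum_k \binom{n}{k} = 2^n$ and $\sum_k \frac{1}{k}\binom{n}{k}$'s known asymptotic $\sim \frac{2^{n+1}}{n}$; but the inductive route via the recurrence above is cleaner and matches the paper's stated method.
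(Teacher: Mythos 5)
Your proposal is correct and follows essentially the same route as the paper: induction from the base case $n=14$, the recurrence $S_{n+1}=S_n+\frac{2^{n+1}-1}{n+1}$ obtained via Pascal's rule, and the reduction to the rational inequality $\frac{1}{n-1-\frac{2}{n-5}}+\frac{1}{n+1}\le\frac{2}{n-\frac{2}{n-4}}$, which the paper likewise asserts for $n\ge 14$ without expansion. One small correction to your expectation: clearing denominators in that last inequality reduces it to $-n+14\le 0$, so the threshold is exactly $n=14$ rather than ``well below'' it, but this does not affect the validity of the inductive step under the hypothesis $n\ge 14$.
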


We now derive the bound for $M(n,2)$, the maximum size of a $2$-grain-error-correcting code, which is non-asymptotic and explicit.

\begin{lemma}\label{lem:m2} For $n \geq 18$, $M(n,2) \leq 2  \Big\lfloor \frac{2^{n+2}(2+\frac{2}{n-6})}{2n(n-3)}  \Big\rfloor$. \end{lemma}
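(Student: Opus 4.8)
The plan is to start from Theorem~\ref{th:maxcard} specialized to $t=2$, which gives
$$M(n,2) \leq 2 \sum_{k=0}^{n-1} \nchoosek{n-1}{k} \frac{1}{\sum_{j=0}^{\min\{2,k\}} \nchoosek{k}{j}},$$
and then to estimate the sum on the right-hand side explicitly. The inner denominator is $1$ when $k=0$, is $1+\binom{k}{1}=k+1$ when... wait, when $k=1$ it is $1+1=2$, and for $k\geq 2$ it equals $1+k+\binom{k}{2} = 1+k+\frac{k(k-1)}{2} = \frac{k^2+k+2}{2}$. So the bound splits into the $k=0$ term (contributing $1$), the $k=1$ term (contributing $\binom{n-1}{1}\cdot\frac12 = \frac{n-1}{2}$), and the tail $\sum_{k=2}^{n-1} \binom{n-1}{k} \frac{2}{k^2+k+2}$. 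The main work is to bound this tail.

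The natural move is to replace $k^2+k+2$ by something comparable to $k^2$ or $(k+1)k$ so that Claim~\ref{cl:2} and Claim~\ref{cl:3} become applicable. Concretely, I would use $\frac{2}{k^2+k+2} \le \frac{2}{k(k+1)}$ and then decompose via partial fractions $\frac{2}{k(k+1)} = \frac{2}{k} - \frac{2}{k+1}$. Applying this to $\sum_{k=2}^{n-1}\binom{n-1}{k}\frac{2}{k(k+1)}$ and writing $m=n-1$, I get $2\sum_{k=2}^{m}\frac{1}{k}\binom{m}{k} - 2\sum_{k=2}^{m}\frac{1}{k+1}\binom{m}{k}$. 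The second sum is handled exactly by Claim~\ref{cl:2} (with $n$ replaced by $m=n-1$), and the first sum is bounded above by Claim~\ref{cl:3}, which requires $m \geq 14$, i.e. $n \geq 15$; for $n \geq 14$ one checks the single case $n=14$ separately or verifies Claim~\ref{cl:3} holds at the needed index. Adding back the $k=0,1$ contributions and collecting terms, the dominant behavior is $2\cdot\frac{2^{m+1}}{m-1-\frac{2}{m-5}}$ from Claim~\ref{cl:3}, and after substituting $m=n-1$ this should simplify to $\frac{2^{n+2}(2+\frac{2}{n-6})}{2n(n-3)}$ up to the lower-order terms being absorbed. Finally, since by Lemma~\ref{lem:even} the maximum cardinality is even, I can apply $2\lfloor \cdot/2\rfloor$ — that is, write $M(n,2) \le 2\lfloor \frac{1}{2}(\text{the bound})\rfloor$ — which produces the stated floor expression; this is exactly the trick used to pass from Theorem~\ref{th:maxcard} to Corollary~\ref{cor:card1}.

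The main obstacle I anticipate is the bookkeeping in the last step: showing that the exact/estimated expression
$$2\left(1 + \tfrac{n-1}{2} + \tfrac{2^{n+1}}{n-2-\frac{2}{n-6}} - \tfrac{2}{n}\Big(2^{n}-2-\tfrac{3(n-1)}{2}-\tfrac{(n-1)^2}{2}\Big)\right)$$
is genuinely $\le \frac{2^{n+2}(2+\frac{2}{n-6})}{2n(n-3)}$ for all $n \ge 14$, since the polynomial-in-$n$ correction terms (the $\frac{n-1}{2}$, the contribution of Claim~\ref{cl:2}, and the slack between $k^2+k+2$ and $k(k+1)$) must be shown not to overwhelm the gap between the crude bound $\frac{2^{n+2}}{n-2-2/(n-6)}$ and the target $\frac{2^{n+2}(2+2/(n-6))}{2n(n-3)} = \frac{2^{n+2}}{n(n-3)}\cdot(1+\frac{1}{n-6})$. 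Since the target is of order $2^{n+2}/n^2$ while Claim~\ref{cl:3} only gives order $2^{n+2}/n$, I will actually need a sharper estimate on $\sum_{k=2}^{m}\frac{1}{k(k+1)}\binom{m}{k}$ than what I sketched — specifically one that already extracts the $1/m^2$ decay. A cleaner route is to bound $\frac{2}{k^2+k+2}$ directly against $\frac{c}{(k+1)(k+2)}$ or to telescope $\frac{1}{(k+1)(k+2)} = \frac{1}{k+1}-\frac{1}{k+2}$ and invoke a two-step-shifted analogue of Claim~\ref{cl:2}; the appendix computation referenced as Appendix~\ref{app:m3} presumably does exactly this, so I would follow that pattern, and the remaining monotonicity/floor verification for $n \ge 14$ is then a routine (if tedious) inequality check that I would relegate to the appendix.
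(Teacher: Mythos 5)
Your setup coincides with the paper's proof: specialize Theorem~\ref{th:maxcard} to $t=2$, peel off the $k=0,1$ terms, bound $\frac{1}{1+k+\binom{k}{2}}=\frac{2}{k^2+k+2}\le\frac{2}{k(k+1)}=\frac{2}{k}-\frac{2}{k+1}$, and apply Claims~\ref{cl:2} and~\ref{cl:3} at index $n-1$; the parity step via Lemma~\ref{lem:even} is also handled correctly, and your remark that Claim~\ref{cl:3} is only stated for indices $\ge 14$ while it gets applied at $n-1$ (so $n=14$ needs a separate check) is a legitimate point of rigor. The genuine gap is the final step, which you flag as the ``main obstacle'' but then misdiagnose: no sharper estimate of $\sum_k\binom{n-1}{k}\frac{1}{k(k+1)}$ is needed. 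The $2^{n}/n^2$ behavior does not come from a single refined bound; it comes from the near-cancellation of the two leading terms produced by the two claims, namely
\[
\frac{2^{n+2}}{n-2-\frac{2}{n-6}}-\frac{2^{n+2}}{n}
=\frac{2^{n+2}\left(2+\frac{2}{n-6}\right)}{n\left(n-2-\frac{2}{n-6}\right)}
\le\frac{2^{n+2}\left(2+\frac{2}{n-6}\right)}{n(n-3)}\qquad(n\ge 8),
\]
while all the polynomial-in-$n$ leftovers (the $n+1$, the $-4(n-1)$ you dropped, and the non-exponential part of Claim~\ref{cl:2}) collect to $-n+7+\frac{4}{n}\le 0$. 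Carrying out exactly the algebra you already set up therefore finishes the proof; your expression is one line away from the target.

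The detour you propose instead would not deliver the lemma. Bounding $\frac{2}{k^2+k+2}\le\frac{c}{(k+1)(k+2)}$ forces $c\ge 2+\frac{4k}{k^2+k+2}$, so $c\ge 3$ uniformly on $k\ge 2$, and since $\sum_{k}\binom{n-1}{k}\frac{1}{(k+1)(k+2)}=\frac{1}{n(n+1)}\sum_{k}\binom{n+1}{k+2}\approx\frac{2^{n+1}}{n(n+1)}$, this yields roughly $\frac{3\cdot 2^{n+2}}{n(n+1)}$, strictly weaker than the claimed $\frac{2^{n+2}(2+\frac{2}{n-6})}{n(n-3)}$ for large $n$; the constant in this lemma is tight enough that such slack is fatal (one could try to let $c$ depend on $k$, but that is substantially more work and not what the paper does). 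Also, Appendix~\ref{app:m3} does not do what you presume for $t=2$: the shifted partial-fraction machinery there (the $\frac{1/2}{k-1}-\frac{1}{k}+\frac{1/2}{k+1}$ decomposition, Claim~\ref{cl:4}, Lemma~\ref{lem:k-1}) is for the $M(n,3)$ bound; for $M(n,2)$ the paper's argument is precisely the cancellation displayed above.
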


\begin{proof}
From Theorem~\ref{th:maxcard} we have
\begin{small}
\begin{align*}
& M(n,2) \leq 2 \sum_{k=0}^{n-1} \nchoosek{n-1}{k} \frac{1}{\sum_{j=0}^{\min\{2,k\}} \nchoosek{k}{j}}  \\
& = 2 + n-1+2\sum_{k=2}^{n-1}  \nchoosek{n-1}{k} \frac{1}{1+k+\binom{k}{2}} \\
& \leq n+1 + 4\sum_{k=2}^{n-1}  \nchoosek{n-1}{k} \left( \frac{1}{k}-\frac{1}{k+1} \right) \\
& = n+1 + 4\sum_{k=2}^{n-1}  \nchoosek{n-1}{k} \frac{1}{k} -4\sum_{k=2}^{n-1}  \nchoosek{n-1}{k} \frac{1}{k+1}.
\end{align*}
\end{small}
From Claims~\ref{cl:2} and~\ref{cl:3} we have 
\begin{small}
\begin{align*}
M(n,2) &\leq n + 1 + 4 \left( \frac{2^n}{n-2-\frac{2}{n-6} + \frac{1}{(n-1)^2}}  -n+1 \right) \\
& -\frac{4}{n}\left( 2^n-2-\frac{3(n-1)}{2}-\frac{(n-1)^2}{2} \right)\\
& =\frac{2^{n+2}}{n-2-\frac{2}{n-6}}-\frac{2^{n+2}}{n} -n + 7 +\frac{4}{n}\\
& \leq \frac{2^{n+2}(2+\frac{2}{n-6})}{n(n-3)}.
\end{align*}
\end{small}
From Lemma~\ref{lem:even} $M(n,2)$ must be an even integer and so $M(n,2) \leq 2  \Big\lfloor \frac{2^{n+2}(2+\frac{2}{n-6})}{2n(n-3)}  \Big\rfloor $.
\end{proof}

For $t=3$, the upper bound is stated as a lemma. The details can be found in Appendix~\ref{app:m3}.

\begin{lemma}\label{lem:ub3}For $n \geq 24$, $$M(n,3) \leq 2 \Big\lfloor 2^n\left( \frac{\frac{36n}{n-7} + 18 - \frac{3n(n-1)}{(n-2)^2}  + \frac{12}{n-7}}{2n(n-1)(n-3-\frac{2}{n-7} + \frac{1}{(n-2)^2}} \right) \Big\rfloor.$$ \end{lemma}

We briefly note that Lemma~\ref{lem:ub3} provides a looser asymptotic upper bound on $M(n,t)$ for the case where $t=3$ than the bound provided in \cite{arya}. However, Lemma~\ref{lem:ub3} has the advantage of providing a non-asymptotic expression. 
We briefly note that asymptotically Lemma~\ref{lem:ub3} provides a looser upper bound on $M(n,t)$ for the case where $t=3$ than the bound provided in \cite{arya}. However, Lemma~\ref{lem:ub3} has the advantage of providing an explicit expression that holds for a broad range of $n$.

 {In Table~\ref{table:resultsub}, we illustrate the result for $M(n,t)$ for small $n$ and $t=1,2,3$ using Theorem~\ref{th:maxcard}. Each sub-column in Table~\ref{table:resultsub} consists of a pair of entries, the left entry corresponding to the previous best upper bound (and labeled 'Prev UB'), and the right entry corresponding to the upper bound offered by Theorem 1 (and labeled 'New UB'). Entries under 'Prev UB' are taken from \cite{artyom} as follows. For $M(n,t)$ where $n \leq 8$, the entries are taken from Table~II in \cite{artyom}; the entry for  $M(9,3)$ was also taken from the same table. The remaining entries for the previous upper bounds on $M(n,t)$ are taken from Theorem 3.1 in \cite{artyom}.}

 {For all values of $10 \leq n \leq 20$ the bound in Theorem~\ref{th:maxcard} was tighter (as can be seen in Table~\ref{table:resultsub}) than the corresponding bound in \cite{artyom}. In addition, the bounds in this section have the advantage of being explicit. We remark that in a more general case, it is difficult to compare our bounds to those in \cite{artyom} since the bounds in \cite{artyom} require finding a parameter $\rho$ where $\rho$ is the largest integer satisfying  $\sum_{k=0}^{\rho-1} \nchoosek{n-1}{k} \sum_{j=0}^{\min(t,k)} \nchoosek{k}{j} \leq 2^{n-1}$.}


\begin{table}
\center{\caption{Comparison of Existing Upper Bounds from \cite{artyom} and New Upper Bounds from Theorem~\ref{th:maxcard}}\label{table:resultsub}
\begin{tabular}{|c|cc|cc|cc|}
	\hline
Length  & \multicolumn{2}{c|}{$M(n,1)$}     &  \multicolumn{2}{c|}{$M(n,2)$}                                                          			&  \multicolumn{2}{c|}{$M(n,3)$}   \\
	     & Prev  {UB} & New  {UB} 	& Prev  {UB} & New  {UB}	& Prev  {UB} & New  {UB} \\
	\hline
3           &   4  &  {4}                    & - & -                                                                     & - & -            \\
4           &   6 &  {6}                    & 4 & 6                                         & - & -            \\
5           &   8 & 12                                              & 8 &  10                                       & - & -  \\
6           &   16 & 20                                            & 10 & 14                                     & 8 &14  \\
7           &   26 & 36                                            & 16 & 24                                     & 16 & 22  \\
8           &   44 & 62                                            & 22 & 38                                     & 18 & 34  \\
9           &   150 &  {112}           & 110 &  {62}        & 32 & 52 \\
10         &   278 &  {204}           & 190 &  {102}      & 178 &  {80}  \\
11         &   506 &  {372}            & 346 &  {168}     & 316 &  {126}  \\
12         &   942 &  {682}            & 582 &  {280}     & 528 &  {198}  \\
13         &   1760 &  {1260}        & 1010 &  {476}   & 870 &  {312}  \\
14         &   3256 &  {2340}        & 1818 &  {814}   & 1498 &  {496}  \\
15         &   6148 &   {4368}        & 3246 &  {1406}  & 2682 &  {800}  \\
16         &   11532 &  {8190}      & 5646 &  {2448}   & 4512 &  {1300}  \\
17         &   21654 &  {15420}    & 10168 & {4302} & 7590 &  {2132}  \\
18         &   41340 &   {29126}    & 18852 & {7612} & 13300 &  {3528}  \\
19         &   77792 &  {55188}    & 32962 & {13560}& 24178 &  {5892}  \\
20         &   147788 &  {104856}& 59518 &  {24306}& 40724 &  {9920}  \\
	\hline
\end{tabular}\label{table:Mnt}}
\end{table}

\section{Grain-Error Code Constructions}\label{sec:constructions}

In the previous section, the focus was on upper bounds for grain-error-correcting codes. In this section, we turn to code constructions. We will compare the codes proposed in this section to the upper bounds derived in the previous section.

This section is divided into three subsections. In Section~\ref{subsec:singlegrain}, we consider a group-theoretic construction for single-grain codes. In Section~\ref{subsec:improvedgrains}, we generalize the construction from~\ref{subsec:singlegrain}. Using this generalization, Section~\ref{subsec:improvedgrains} then identifies better codes that correct single grain-errors for certain code lengths. Section~\ref{subsec:manygrains} considers constructions for codes that can correct multiple grain-errors.

\subsection{Single-grain codes}\label{subsec:singlegrain}

We begin by proving some sufficient conditions for a code to correct a single grain-error. Then, we provide a group-theoretic code construction that satisfies these conditions. 
The codes presented in this section provide the largest known cardinalities for all code lengths greater than $16$.

Combining Lemma~\ref{lem:inf} with Definition~\ref{def:grainerror}, the following claim can be verified. Recall that $d_H$ and $d_A$ refer to the Hamming distance and the asymmetric distance, respectively.

\begin{claim}\label{cl:distance} A code $\cC$ is a single-grain code if for every pair of distinct codewords $\bfx,\bfy \in \cC$ if at least one of the following holds:
\begin{enumerate}
\item $d_H(\bfx,\bfy)=1$ and $x_1 \neq y_1$.
\item $d_H(\bfx, \bfy)=2$ and for some $1 < i \leq n-1$, 
\begin{enumerate}
\item $( x_{i}, x_{i+1} ) = (0, 0), ( y_{i}, y_{i+1} )=(1, 1)$ or 
\item $( x_{i}, x_{i+1} ) = (1, 1), ( y_{i}, y_{i+1} )=(0, 0)$.
\end{enumerate}

\item $d_H(\bfx,\bfy) \geq 3$.
\end{enumerate}
\end{claim}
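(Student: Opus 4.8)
The plan is to verify that if $\bfx$ and $\bfy$ satisfy any one of the three listed conditions, then their single-grain-error balls are disjoint, i.e. $\cB_{1,G}(\bfx) \cap \cB_{1,G}(\bfy) = \emptyset$; since this holds for every pair of distinct codewords, $\cC$ is a single-grain code. I will handle the three cases separately.

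First I would dispose of case (3): if $d_H(\bfx,\bfy) \geq 3$, then since a single grain-error changes at most one coordinate, any vector in $\cB_{1,G}(\bfx)$ is at Hamming distance at most $1$ from $\bfx$, and similarly for $\bfy$; by the triangle inequality for $d_H$, a common vector would force $d_H(\bfx,\bfy) \leq 2$, a contradiction. Case (2) is exactly the hypothesis of Lemma~\ref{lem:inf} (with $t=1$), which immediately gives $\cB_{1,G}(\bfx) \cap \cB_{1,G}(\bfy) = \emptyset$; I only need to note that the extra requirement $d_H(\bfx,\bfy)=2$ is compatible — in fact Lemma~\ref{lem:inf} needs no distance hypothesis at all, so (2) is strictly covered. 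For case (1), where $d_H(\bfx,\bfy)=1$ with the unique disagreement at the first coordinate: here Definition~\ref{def:grainerror} is the key, because condition $e_1 = 0$ says a grain-error never alters the first bit. Hence every vector in $\cB_{1,G}(\bfx)$ has first coordinate $x_1$ and every vector in $\cB_{1,G}(\bfy)$ has first coordinate $y_1 \neq x_1$, so the balls are disjoint.

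The remaining point to check is that these three cases, together with the constraint that $\bfx \neq \bfy$, are the only ones that need attention under the claim's hypothesis — but that is automatic, since the claim only asserts sufficiency: we assume one of (1)--(3) holds for \emph{every} pair, so there is nothing further to rule out. I would close by remarking that the claim is a direct translation of the conditions into the language of Lemma~\ref{lem:inf} and Definition~\ref{def:grainerror}, which is presumably why the paper states it can be "verified" rather than giving a full argument.

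I do not anticipate a serious obstacle here; the only mild subtlety is making sure the index range $1 < i \leq n-1$ in case (2) matches the range $1 \leq i \leq n-1$ in Lemma~\ref{lem:inf} — the restriction to $i > 1$ in the claim is simply because if the two disagreement positions of a weight-$2$ difference included position $1$, we would be (partly) in the regime of case (1); either way Lemma~\ref{lem:inf} applies verbatim for any such $i$, so no gap arises.
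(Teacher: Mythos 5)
Your proposal is correct and is exactly the verification the paper has in mind: the paper offers no written proof, saying only that the claim follows by combining Lemma~\ref{lem:inf} with Definition~\ref{def:grainerror}, and your three cases (disjointness via $e_1=0$ for condition 1, Lemma~\ref{lem:inf} for condition 2, and the weight-$\leq 1$ property plus the triangle inequality for condition 3) are precisely that combination. Your side remarks — that Lemma~\ref{lem:inf} needs no distance hypothesis and that the restriction $i>1$ only strengthens the hypothesis, so sufficiency is unaffected — are accurate and resolve the only subtleties.
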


We are now ready to state our code construction. For any additive Abelian group referred to in the subsequent discussion, the identity element will be denoted as $0$ and will be referred to as the zero element.

\begin{Construction}\label{grain-construct} Let $\cA$ represent an additive Abelian group of size $n$. Suppose the sequence $\cS=( g_1, g_2, \ldots, g_{n} )$, which contains every element of $\cA$ once, is ordered as follows: 
\begin{enumerate}
\item $g_1=0$,
\item for any $1 < i \leq n$, the elements $g_i$ and $-g_i$ (if $-g_i \neq g_i$) are adjacent.
\end{enumerate}
For any element $a \in \cA$, let
\begin{align}\label{eq:ggraincode}
\cC^{\cA}_a = \{ \bfx \in \{0,1\}^n : \sum_{k=1}^{n} x_k g_k = a \}. 
\end{align}
\end{Construction}

The following example illustrates Construction~\ref{grain-construct}.

\begin{example} Let $\cA$ denote the additive Abelian group $\mathbb{Z}_3$. Suppose Construction~\ref{grain-construct} is used to create a code where $\cS = ( g_1, g_2, g_3 ) = ( 0, 1, 2)$. Then, the group $\cA$ partitions the space $GF(2)^3$ into $3$ single-grain codes.
\begin{align*}
\cC^{\mathbb{Z}_3}_{0} &= \{ (0, 0, 0),(1, 0, 0), (0, 1, 1), (1, 1, 1) \}, \\
\cC^{\mathbb{Z}_3}_{1} &= \{ (0, 1, 0), (1, 1, 0) \}, \\
\cC^{\mathbb{Z}_3}_{2} &= \{ (0, 0, 1), (1, 0, 1) \}.
\end{align*}
\end{example}

The correctness of Construction~\ref{grain-construct} is proven next.

\begin{theorem}\label{th:sec} A code $\cC^{\cA}_a$ created with Construction~\ref{grain-construct} is a single-grain code. \end{theorem}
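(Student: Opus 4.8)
The plan is to verify that any code $\cC^{\cA}_a$ produced by Construction~\ref{grain-construct} satisfies the three-case sufficient condition of Claim~\ref{cl:distance}. So I fix two distinct codewords $\bfx, \bfy \in \cC^{\cA}_a$, set $\bfd = \bfx - \bfy$ (over $GF(2)$, so $\bfd = \bfx + \bfy$), and note that $\sum_k d_k g_k = 0$ in $\cA$ since both codewords have syndrome $a$. The key quantity is $wt(\bfd) = d_H(\bfx,\bfy)$, and I split into the cases $wt(\bfd) = 1$, $wt(\bfd) = 2$, and $wt(\bfd) \geq 3$. The last case is immediate — Claim~\ref{cl:distance}(3) asks for nothing more — so the real work is in the first two cases.

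For $wt(\bfd) = 1$: if the single nonzero coordinate of $\bfd$ is at position $i$, then $g_i = 0$ in $\cA$, which by condition~(1) of the construction forces $i = 1$ (since $g_1 = 0$ and the $g_k$ are distinct). Hence $x_1 \neq y_1$ and all other coordinates agree, which is exactly Claim~\ref{cl:distance}(1). For $wt(\bfd) = 2$: say the nonzero coordinates are at positions $i < j$, so $g_i + g_j = 0$, i.e.\ $g_j = g_i^{-1}$ (equivalently $g_i = -g_j$). Neither can be $0$ (else we'd be in the weight-one situation, since $g_i=0$ would mean $g_i$ contributes nothing and then $g_j=0$ too, contradicting distinctness), so both $g_i$ and $g_j = g_i^{-1}$ are nonzero. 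By condition~(2) of the construction, $g_i$ and $g_i^{-1}$ are adjacent in $\cS$, so $j = i+1$; moreover $i \geq 2$ because $g_1 = 0 \neq g_i$. Thus $\bfx$ and $\bfy$ agree everywhere except at positions $i$ and $i+1$ where they differ in both, with $2 \leq i \leq n-1$ — and since they differ in both coordinates of an adjacent pair, $\{(x_i,x_{i+1}),(y_i,y_{i+1})\}$ is either $\{(0,0),(1,1)\}$ or $\{(0,1),(1,0)\}$. I need to rule out the $\{(0,1),(1,0)\}$ possibility: but if $(x_i,x_{i+1}) = (0,1)$ and $(y_i,y_{i+1}) = (1,0)$, then looking at the partial syndrome contribution from these two coordinates, $\bfx$ contributes $g_{i+1}$ and $\bfy$ contributes $g_i$; for the full syndromes to match we'd need $g_{i+1} = g_i$, contradicting distinctness. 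Hence we are in the $\{(0,0),(1,1)\}$ case, which is exactly Claim~\ref{cl:distance}(2).

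The main obstacle — really the only subtle point — is making sure condition~(2) is being used correctly in the weight-two analysis: the construction only guarantees that $g_i$ sits next to its \emph{inverse}, and in an Abelian 2-group every element is its own inverse, so I should double-check that the argument still goes through when $g_i = g_i^{-1}$. In that case $g_i + g_i = 0$ is automatic, but then $d_i g_i + d_j g_j = 0$ with $g_j = g_i$ would force $g_j = g_i$, again contradicting distinctness of the sequence elements — so actually a weight-two difference with $g_i = g_i^{-1}$ cannot arise at all, and there is nothing to fix. I should state this carefully rather than wave at it. Once all three cases are dispatched, Claim~\ref{cl:distance} immediately gives that $\cC^{\cA}_a$ is a single-grain code, completing the proof.
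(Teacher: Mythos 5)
Your proposal is correct and follows essentially the same route as the paper: both verify the three conditions of Claim~\ref{cl:distance} by a case split on $d_H(\bfx,\bfy)$, deriving $g_i+g_j=0$ and invoking the adjacency condition of Construction~\ref{grain-construct} in the weight-two case. The only difference is cosmetic: the paper cites the Constantin--Rao facts $d_H(\tilde{\cC}^{\cA}_a)\geq d_A(\tilde{\cC}^{\cA}_a)\geq 2$ for the shortened group code to dispose of the first-bit case and the mixed $(0,1)/(1,0)$ pattern, whereas you re-derive these facts directly from syndrome equality and distinctness of the $g_k$ (and your exclusion of the mixed pattern should logically precede, rather than follow, the assertion $g_i+g_j=0$, but that is a trivial reordering since the exclusion argument does not use adjacency).
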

\begin{proof} We will show that $\cC^{\cA}_a$ is a single-grain code by demonstrating that the conditions listed in Claim~\ref{cl:distance} hold for any pair of distinct codewords $\bfx,\bfy \in \cC^{\cA}_a$. Let $\tilde{\cC}^{\cA}_{a}$ be the group code created by using the same group and element $a$ as in $\cC^{\cA}_a$ so that $\tilde{\cC}^{\cA}_a$ has length $n-1$, and $\tilde{\cC}^{\cA}_a$ is obtained by shortening the codewords of $\cC^{\cA}_a$ on the first bit (i.e., by removing $x_1$, which multiplies $g_1=0$). Recall from Section~\ref{subsec:distances} that since $\tilde{\cC}^{\cA}_a$ is a group code, $d_{H}(\tilde{\cC}^{\cA}_a) \geq d_A(\tilde{\cC}^{\cA}_a) \geq 2$. 

Suppose $d_H(\bfx,\bfy) = 1$. Then, since $d_{H}(\tilde{\cC}^{\cA}_a) \geq 2$, it follows that if $d_H(\bfx,\bfy) = 1$, then $\bfx$ and $\bfy$ differ only in the first bit and so condition 1) from Claim~\ref{cl:distance} holds.

Suppose $d_H(\bfx,\bfy) = 2$. Since $d_{H}(\tilde{\cC}^{\cA}_a) \geq 2$, $\bfx$ and $\bfy$ do not differ in the first position, and there are two distinct indices $i,j$ ($2 \leq i,j \leq n$) where $x_i \neq y_i$ and $x_j \neq y_j$. Suppose, without loss of generality, that $N(\bfx,\bfy)=2$ and so $x_i=x_j=1$. Therefore, $g_i + g_j=0$, or $g_j=g_i^{-1}$. However, by condition 2) in Construction~\ref{grain-construct}, we have $|j-i|=1$ and so condition 2) from Claim~\ref{cl:distance} holds.

If $d_H(\bfx,\bfy)$ is not equal to $1$ or $2$ then $d_H(\bfx,\bfy) \geq 3$ and so condition 3) of Claim~\ref{cl:distance} holds.\end{proof}

The following corollary follows from the proof of Theorem~\ref{th:sec} and Claim~\ref{cl:mingrain}.

\begin{corollary}\label{cor:singlemin} Let $\cC^{\cA}_a$ be a single-grain code created according to Construction~\ref{grain-construct}. Let $\tilde{\cC}^{\cA}_a$ be the group code that is the result of shortening the codewords in $\cC^{\cA}_a$ on the first bit. Then $\tilde{\cC}^{\cA}_a$ is a single-mineral code. \end{corollary}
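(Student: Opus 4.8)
The plan is to reduce the claim about $\tilde{\cC}^{\cA}_a$ being a single-mineral code to the fact, already established in the proof of Theorem~\ref{th:sec}, that $\cC^{\cA}_a$ is a single-grain code, by invoking Claim~\ref{cl:mingrain}. Recall that $\tilde{\cC}^{\cA}_a$ consists exactly of the truncations $(x_2,\ldots,x_n)$ of codewords $\bfx=(x_1,\ldots,x_n)\in\cC^{\cA}_a$. So I want to show that for any two distinct $\tilde\bfx,\tilde\bfy\in\tilde{\cC}^{\cA}_a$ we have $\cB_{1,M}(\tilde\bfx)\cap\cB_{1,M}(\tilde\bfy)=\emptyset$.

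First I would note that for \emph{every} truncated codeword $\tilde\bfx\in\tilde{\cC}^{\cA}_a$, \emph{both} extensions $(0,\tilde\bfx)$ and $(1,\tilde\bfx)$ lie in $\cC^{\cA}_a$, since the first coordinate multiplies $g_1=0$ and hence does not affect the group-sum constraint in \eqref{eq:ggraincode}. Now take distinct $\tilde\bfx,\tilde\bfy\in\tilde{\cC}^{\cA}_a$. Pick any first bits and form $\bfx=(x_1,\tilde\bfx),\bfy=(y_1,\tilde\bfy)\in\cC^{\cA}_a$; these are distinct codewords. Since $\cC^{\cA}_a$ is a single-grain code, Claim~\ref{cl:mingrain} applies: either $x_1\neq y_1$ or $\cB_{1,M}(\tilde\bfx)\cap\cB_{1,M}(\tilde\bfy)=\emptyset$. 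The point is that the choice of first bits was free, so I can choose $x_1=y_1$ (e.g., both $0$); then the first alternative of Claim~\ref{cl:mingrain} fails, forcing the second, i.e. $\cB_{1,M}(\tilde\bfx)\cap\cB_{1,M}(\tilde\bfy)=\emptyset$. Since $\tilde\bfx,\tilde\bfy$ were arbitrary distinct elements of $\tilde{\cC}^{\cA}_a$, this shows the $1$-mineral error-balls of distinct codewords of $\tilde{\cC}^{\cA}_a$ are pairwise disjoint, which is precisely the statement that $\tilde{\cC}^{\cA}_a$ is a single-mineral code.

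The argument is short and the only subtlety — not really an obstacle — is making sure that the truncation map is onto in the strong sense needed: that for a given truncated word one can freely prescribe the restored first bit and still land in $\cC^{\cA}_a$. That follows immediately from $g_1=0$ in Construction~\ref{grain-construct}, so the main work is just phrasing the contrapositive use of Claim~\ref{cl:mingrain} correctly. (One should also remark that the case analysis in the proof of Theorem~\ref{th:sec} never used $x_1,y_1$ except through the conditions of Claim~\ref{cl:distance}, so invoking Claim~\ref{cl:mingrain} here is legitimate.) No induction or computation is needed.
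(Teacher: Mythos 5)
Your proof is correct and follows essentially the same route as the paper, which simply notes that the corollary follows from Theorem~\ref{th:sec} together with Claim~\ref{cl:mingrain}. Your added observation that $g_1=0$ lets you extend any two distinct shortened codewords with the \emph{same} first bit, thereby forcing the second alternative of Claim~\ref{cl:mingrain}, is exactly the detail the paper leaves implicit.
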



The following corollary provides upper and lower bounds on $|\cC^{\cA}_a|$.

\begin{corollary}\label{cor:singlemincard} Suppose $\cA$ is an Abelian group of size $n$ and $a \in \cA$. Then, for a code $\cC^{\cA}_a$ created according to Construction~\ref{grain-construct}, $|\cC^{\cA}_a| \leq |\cC^{\cA}_0|$. Furthermore, 
$$ \frac{2^n}{n} \leq | \cC^{\cA}_0| \leq \frac{2^n}{n} + \frac{(n-1) \cdot 2^{n/3}}{n}. $$
Equality holds on the left if and only if $|\cA|$ is a power of two. Equality holds on the right if and only if $\cA$ is an elementary Abelian $3$-group.
 \end{corollary}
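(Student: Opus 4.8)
The plan is to treat the three assertions separately, using a Fourier-analytic count for the cardinality of $\cC^{\cA}_0$ and a standard averaging argument for the comparison $|\cC^{\cA}_a| \le |\cC^{\cA}_0|$. First I would establish $|\cC^{\cA}_a| \le |\cC^{\cA}_0|$: the $2^n$ vectors of $GF(2)^n$ are partitioned into $|\cA| = n$ cosets $\cC^{\cA}_a$, one for each $a \in \cA$, so by averaging at least one of them has size $\ge 2^n/n$, but more is needed. The cleanest route is to write, for each $a$, an indicator using characters of $\cA$. For the special case $\cA = (\mathbb{Z}_p)^r$ the machinery of Section~\ref{subsec:fourier} applies directly; for a general Abelian group one uses the analogous character sum $\mathbf{1}[\sum_k x_k g_k = a] = \frac{1}{n}\sum_{\chi} \chi\!\left(\sum_k x_k g_k - a\right)$, where the sum runs over all $n$ characters of $\cA$. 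Summing over $\bfx \in GF(2)^n$ gives
\begin{align*}
|\cC^{\cA}_a| = \frac{1}{n}\sum_{\chi} \overline{\chi(a)} \prod_{k=1}^{n}\bigl(1 + \chi(g_k)\bigr),
\end{align*}
and since $g_1 = 0$ the factor for $k=1$ is $2$. The trivial character contributes $2^n/n$; every other term has modulus at most $\frac{1}{n}\bigl|\prod_k (1+\chi(g_k))\bigr|$, which is maximized in absolute value — and is nonnegative real — exactly when $a = 0$ (because $\overline{\chi(a)} = 1$ for all $\chi$ there and the product $\prod_k(1+\chi(g_k))$ is itself nonnegative for the reason below). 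Hence $|\cC^{\cA}_a| \le |\cC^{\cA}_0|$, with the right inequality on the display.

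Next I would pin down $|\cC^{\cA}_0|$ via the same identity at $a=0$: $|\cC^{\cA}_0| = \frac{1}{n}\sum_{\chi} \prod_{k=1}^{n}\bigl(1+\chi(g_k)\bigr)$. The key structural observation is that the factors pair up: for a nontrivial character $\chi$, the set $\{g_k : k\}$ is all of $\cA$, and the element $g$ and its inverse $g^{-1}$ contribute $(1+\chi(g))(1+\chi(g^{-1})) = (1+\chi(g))(1+\overline{\chi(g)}) = |1+\chi(g)|^2 \ge 0$; elements of order $\le 2$ contribute $1+\chi(g) \in \{0,2\}$ (real, since $\chi(g) = \pm 1$). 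So every product $\prod_k(1+\chi(g_k))$ is a nonnegative real, which re-justifies $|\cC^{\cA}_0| \ge 2^n/n$ with equality iff \emph{every} nontrivial character kills the whole product. A nontrivial $\chi$ gives a zero factor iff $\chi(g) = -1$ for some $g$, i.e. iff $\chi$ is not identically $1$ on... — more precisely $\prod_k(1+\chi(g_k)) = 0$ unless $\chi(g_k) = 1$ for all $k$, which (since the $g_k$ exhaust $\cA$) means $\chi$ is trivial, \emph{unless} $\chi$ takes the value $1$ on all of $\cA$ while being nontrivial — impossible. Wait: the correct dichotomy is that $\prod_k(1+\chi(g_k))$ is nonzero iff $\chi(g_k)\ne -1$ for all $k$. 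For a nontrivial $\chi$ of a $2$-group this can fail or hold; the clean statement is that the product is nonzero iff $\chi$ has no element mapped to $-1$, and one checks this happens for some nontrivial $\chi$ precisely when $\cA$ has an element of order $>2$ that $\chi$ sends to a primitive root — so equality $|\cC^{\cA}_0| = 2^n/n$ holds iff for \emph{every} nontrivial $\chi$ some $g_k$ has $\chi(g_k) = -1$, and a short group-theory argument (every nontrivial character of $\cA$ hits $-1$ on some element iff $\cA$ has exponent dividing... ) shows this is equivalent to $\cA$ being a $2$-group, i.e. $|\cA| = n$ a power of two. This equivalence is the first place that needs care.

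For the upper bound on $|\cC^{\cA}_0|$, I would bound the nontrivial-character contribution by $\frac{1}{n}\sum_{\chi \ne 1} \prod_k |1+\chi(g_k)|$ and estimate each product. Splitting $\cA = H \cup (\cA \setminus H)$ where $H = \{g : g = g^{-1}\}$ is the $2$-torsion subgroup of order $2^s$ say, the inverse-pairs each give a factor $|1+\chi(g)|^2 \le 4$ but we want something like $2^{n/3}$; the point is that for a nontrivial $\chi$ at least one factor $1+\chi(g_k)$ is \emph{strictly} less than $2$ in modulus, and when $\cA$ is a $3$-group each non-identity $g$ has $g^{-1}\ne g$ and $(1+\chi(g))(1+\chi(g^{-1})) = |1+\zeta_3^j|^2 \le 3$, giving $\prod_k \le 2 \cdot 3^{(n-1)/2}$ — not quite the stated $2^{n/3}$, so one must instead group the non-identity elements of an elementary abelian $3$-group into triples... actually into pairs $\{g, 2g\}$, of which there are $(n-1)/2$, each contributing $|1+\zeta_3^j|^2 \in \{1,3\}$; a more careful accounting of how many characters give each value, or grouping into cosets of a $\mathbb{Z}_3$ subgroup (triples contributing $(1+\zeta_3^0)(1+\zeta_3^1)(1+\zeta_3^2) = 2\cdot 1 = 2$ when $\chi$ is nontrivial on that triple) yields the $2^{n/3}$ exponent and the bound $\frac{2^n}{n} + \frac{(n-1)2^{n/3}}{n}$, with equality exactly for the elementary abelian $3$-group because only there does \emph{every} nontrivial character achieve the extremal product on \emph{every} triple. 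I expect the main obstacle to be precisely this last extremal analysis: showing that the coset-triple grouping gives the product $2^{n/3}$ for \emph{all} nontrivial characters when $\cA = (\mathbb{Z}_3)^r$ and that any other group either has fewer contributing characters or smaller products, so that the two equality conditions come out exactly as stated. The comparison $|\cC^{\cA}_a|\le|\cC^{\cA}_0|$ and the lower bound are routine once the character identity is in hand; the right-hand bound and its equality case are where the real work lies.
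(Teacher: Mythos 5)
Your overall strategy is viable but it is not the paper's: the paper's proof is a two-line reduction, observing that since $g_1=0$ the code $\cC^{\cA}_a$ is exactly the length-$(n-1)$ group code $\tilde{\cC}^{\cA}_a$ with one free bit prepended, so $|\cC^{\cA}_a|=2|\tilde{\cC}^{\cA}_a|$, and then citing Constantin--Rao (\cite{cr}, Theorem 9) for $|\tilde{\cC}^{\cA}_a|\le|\tilde{\cC}^{\cA}_0|$ and McEliece--Rodemich (\cite{Mceliece}, Corollaries 1 and 2) for the two-sided bound and both equality conditions, then rescaling $n'=n-1$. You instead set out to reprove those cited facts directly by characters. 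The parts you do carry out are sound and are in fact the McEliece--Rodemich argument: the identity $|\cC^{\cA}_a|=\frac{1}{n}\sum_{\chi}\overline{\chi(a)}\prod_{k=1}^{n}\bigl(1+\chi(g_k)\bigr)$, the nonnegativity of each product via inverse-pairing (using that the $g_k$ exhaust $\cA$), and the consequences $|\cC^{\cA}_a|\le|\cC^{\cA}_0|$ and $|\cC^{\cA}_0|\ge 2^n/n$.

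The genuine gap is everything after that, and you acknowledge it yourself. The corollary's right-hand bound requires $\prod_{g\in\cA}\bigl(1+\chi(g)\bigr)\le 2^{n/3}$ for \emph{every} nontrivial character of an \emph{arbitrary} Abelian group of order $n$; your inverse-pair estimate yields only $2\cdot 3^{(n-1)/2}$, which is far weaker, and your coset-of-$\mathbb{Z}_3$ triple grouping presupposes a $\mathbb{Z}_3$ subgroup and is never completed, so neither the bound nor the ``equality iff elementary Abelian $3$-group'' claim is established; similarly the ``power of two'' equivalence for the lower bound is deferred to an unspecified ``short group-theory argument.'' The missing step has a clean form you were circling around: if $\chi$ has order $m$, then $\chi$ maps $\cA$ onto the $m$-th roots of unity with each value attained exactly $n/m$ times, and evaluating $x^m-1$ at $x=-1$ gives $\prod_{j=0}^{m-1}(1+\zeta_m^j)=2$ when $m$ is odd and $0$ when $m$ is even; hence the full product equals $2^{n/m}$ for odd $m$ and $0$ for even $m$. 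From this the rest is immediate: all nontrivial products vanish iff every nontrivial character has even order iff $n$ is a power of two (otherwise Cauchy's theorem supplies a character of odd prime order $p$, contributing $2^{n/p}>0$), and each nonzero product is $2^{n/m}\le 2^{n/3}$ with equality iff $m=3$, so the upper bound is attained iff every nontrivial character has order three, i.e., iff $\cA$ is an elementary Abelian $3$-group. Without this step (or an explicit appeal to \cite{Mceliece} as the paper makes), the second sentence of the corollary and both equality characterizations remain unproven in your write-up.
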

\begin{proof} Since Construction~\ref{grain-construct} concatenates an arbitrary bit with a group code, it follows that if the underlying group code of length $n'=n-1$ has cardinality $| \tilde{\cC}^{\cA}_a |$, then the code $\cC^{\cA}_a$ created using the previous construction has $2|\tilde{\cC}^{\cA}_a|$ codewords. Then, since $|\tilde{\cC}^{\cA}_a| \leq |\tilde{\cC}^{\cA}_0|$  (\cite{cr}, Theorem 9), $|\cC^{\cA}_a| \leq |\cC^{\cA}_0|$. Furthermore, from (\cite{Mceliece}, Corollary 2) $| \tilde{\cC}^{\cA}_0 | \leq \frac{1}{n'+1} \left( 2^{n'} + n' 2^{(n'-2)/3} \right)$ with equality if and only if $\cA$ is an elementary Abelian $3$-group. Replacing $n=n'+1$ and multiplying $| \tilde{\cC}^{\cA}_0 |$ by $2$ then gives the upper bound stated in the corollary. From (\cite{Mceliece}, Corollary 1), $| \tilde{\cC}^{\cA}_0 | = \frac{2^{n'}}{n'+1}$ if and only if $n'+1$ is a power of $2$. Replacing $n=n'+1$ and multiplying $| \tilde{\cC}^{\cA}_0 |$ by $2$ gives that $| \cC^{\cA}_0| = \frac{2^n}{n}$ when $n$ is a power of two. Since Construction~\ref{grain-construct} partitions the space $GF(2)^n$ into $n$ binary single-grain codes, it follows that for any $n$ we have $| \cC^{\cA}_0| \geq \frac{2^n}{n}$ and so the statement in the corollary follows.
\end{proof}

In \cite{artyom}, a single-grain code construction was given that produced codes of length $n=2^m-1$ with $\frac{2^{n}}{n+1} + 2^{\frac{(n-1)}{2}}$ codewords where $m$ is a positive integer. In \cite{arya}, a single-grain code construction was proposed that resulted in codes of length $n$ where $n=2^r$ (where $r$ is a positive integer) that contained $\frac{2^n}{n}$ codewords.

Our construction extends for any $n$ (via the set $\cA = \mathbb{Z}_n$). When $n$ is a power of $2$, Construction~\ref{grain-construct} produces codes with the same cardinality as the codes presented in \cite{arya}. Furthermore, for codes of length $n$ where $n$ is not a power of $2$, Construction~\ref{grain-construct} provides codebooks with cardinalities strictly greater than $\frac{2^n}{n}$ by Corollary~\ref{cor:singlemincard}.

Since, for large $n$, 
$$ \frac{2^n}{n} > \frac{2^{n}}{n+1} + 2^{\frac{n-1}{2}}, $$
Construction~\ref{grain-construct} improves upon the state of the art when $n$ is not a power of $2$ and $n \geq 15$.

In the next subsection, we provide a generalization of Construction~\ref{grain-construct}. We then derive constructions for single-grain codes that have even larger cardinalities and consider codes capable of correcting more than a single grain-error.

\subsection{Improved grain codes using mappings}\label{subsec:improvedgrains}

In \cite{grassl}, the authors make the observation that a single-asymmetric error-correcting code (and in particular a group code) can be constructed by defining a code over pairs of binary elements. Consider the map $\Gamma : \{ 0,1 \}^2 \to GF(3)$, which is defined as follows: 
\begin{align}\label{eq:map}
(0,0) \to 0, (0,1) \to 1, (1,0) \to 2, (1,1) \to 0.  
\end{align}
Note that the map is not one-to-one since both $(0,0)$ and $(1,1)$ map to $0$. With a slight abuse of notation, if the map $\Gamma$ is applied to a binary vector of even length then it is simply applied to each pair of consecutive elements at a time (e.g., $ \Gamma(0,1,0,0) =  (\Gamma(0,1),  \Gamma(0,0))$). Furthermore, (again with a slight abuse of notation) if the $\Gamma$ map is applied to a set of vectors, it returns a set of ternary vectors that are the result of applying the map to each vector in the set. Using this map, codes that correct asymmetric errors were proposed in \cite{grassl}. In the following, we illustrate how to generalize the ideas from \cite{grassl} (by using different mappings) to correct grain-errors. 

Let $\cG_{t,m}=(V,E)$ denote a simple graph (see Section~\ref{sec:confusabilitygraph}) where $V=GF(2)^m$. That is, the vertices of $\cG_{t,m}$ are the the vectors from $GF(2)^m$. For any $\bfx, \bfy \in V$, $(\bfx, \bfy) \in E$ if $\cB_{t,M}(\bfx) \cap \cB_{t,M}(\bfy) \neq \emptyset$. Recall from Section~\ref{sec:confusabilitygraph}, a mapping $\Phi_{t,m} : GF(2)^m \to \{ 0, 1, \ldots, p-1 \}$ is a $p$-coloring if it assigns different numbers to adjacent vertices. If the input to $\Phi_{t,m}$ is a vector of length $mn$, then the map is applied to each collection of $m$ consecutive bits at a time. For example, if $m=3$, then $\Phi_{t,3}(0, 0, 0, 1, 0, 1) = (\Phi_{t,3}(0, 0, 0)\ \Phi_{t,3}(1, 0, 1))$. 

%
%

\begin{Construction}\label{construct:color} Suppose $q \geq 2$ is a positive integer and $\Phi_{t,m} : GF(2)^m \to \{0, 1, \ldots, q-1\}$ is a $q$-coloring on $\cG_{t,m}$. Let $\cC_t$ be a $t$-unrestricted-error-correcting code over an alphabet of size $q$ of length $n$. Let 
\begin{align}\label{eq:colorconstruct}
\cC=\{ \bfx \in GF(2)^{mn} : \Phi_{t,m}(\bfx)  \in \cC_t \} .
\end{align}
\end{Construction}

\begin{remark} If $\cC$ is a code created according to Construction~\ref{construct:color}, then the map $\Phi_{t,m}$ can be interpreted as mapping the color classes of a {$q$}-coloring onto the symbols of a non-binary code $\cC_t$. This interpretation will be useful in Section~\ref{sec:improvedsingle}. \end{remark}

\begin{remark} As noted in \cite{grassl}, since a code created according to Construction~\ref{grain-construct} is a permutation of a group code, Construction~\ref{grain-construct} and Construction~\ref{construct:color} coincide for the case where $p=3$ and $\cC_1$ (from (\ref{eq:colorconstruct}) in Construction~\ref{construct:color}) is a single unrestricted-error-correcting code over $GF(3)$. \end{remark}

We now provide an example of a code created with Construction~\ref{construct:color}.

\begin{example}\label{ex:coloring} Let the map $\Gamma$ be as defined in (\ref{eq:map}). Note, from Lemma~\ref{lem:inf}, that the map $\Gamma$  is actually a coloring on $\cG_{t,2}$ where the set of vectors $GF(2)^2$ are partitioned into color classes as follows:
\begin{enumerate}
\item $\{ (0\ 0), (1\ 1) \}$,
\item $\{ (1\ 0) \}$,
\item $\{ (0\ 1) \}$.
\end{enumerate}
Let $\cC_t$ be a $t$-unrestricted-error-correcting code over $GF(3)$ of length $n$. Then the set of vectors
\begin{align}\label{eq:cond}
\cC= \{ \bfx \in GF(2)^{2n} : \Gamma( \bfx ) \in \cC_t \}
\end{align}
is a code created according to Construction~\ref{construct:color}.
\end{example}

\begin{remark}\label{rem:grainasym} We note that when $\cC_t$ is a single-unrestricted-error-correcting code, a code constructed according to Example~\ref{ex:coloring} coincides with the ternary construction from \cite{grassl} proposed in the context of asymmetric errors. \end{remark} 

%

We now prove that any code created according to Construction~\ref{construct:color} is a $t$-mineral-error-correcting code.

\begin{theorem}\label{th:mainconstruct} Let $\cC_t$ be a $t$-unrestricted-error-correcting code. Suppose $\cC$ is a code created according to Construction~\ref{construct:color} with $\cC_t$ as the constituent code. Then, $\cC$ is a $t$-mineral-error-correcting code. \end{theorem}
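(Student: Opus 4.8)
The plan is to show that if two binary vectors $\bfx, \bfy \in GF(2)^{mn}$ are confusable under $t$-mineral-errors, i.e. $\cB_{t,M}(\bfx) \cap \cB_{t,M}(\bfy) \neq \emptyset$, then their images $\Phi_{t,m}(\bfx)$ and $\Phi_{t,m}(\bfy)$ under the coloring map are within Hamming distance $t$ in $GF(p)^n$, so that if $\Phi_{t,m}(\bfx), \Phi_{t,m}(\bfy) \in \cC_t$ with $\Phi_{t,m}(\bfx) \neq \Phi_{t,m}(\bfy)$, this contradicts $\cC_t$ being $t$-random-error-correcting; and if $\Phi_{t,m}(\bfx) = \Phi_{t,m}(\bfy)$, we argue $\bfx = \bfy$ directly from the coloring property. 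First I would set up the block decomposition: write $\bfx = (\bfx^{(1)}, \ldots, \bfx^{(n)})$ and $\bfy = (\bfy^{(1)}, \ldots, \bfy^{(n)})$ into consecutive length-$m$ blocks, and similarly decompose any common received word $\bfz \in \cB_{t,M}(\bfx) \cap \cB_{t,M}(\bfy)$ together with the mineral-errors $\bfe_{\bfx}, \bfe_{\bfy}$ so that $\bfz = \bfx + \bfe_{\bfx} = \bfy + \bfe_{\bfy}$.

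The key observation is that a mineral-error of total weight at most $t$ acts blockwise: the restriction of $\bfe_{\bfx}$ to block $j$ has some weight $t_j$ with $\sum_j t_j \le t$, and the restriction is itself a valid mineral-error pattern for the sub-vector $\bfx^{(j)}$ (the condition in Definition~\ref{def:mineralerror} is local in the sense that $e_i \ne 0 \Rightarrow x_i \ne x_{i-1}$ only couples adjacent positions — here I need to be a little careful at block boundaries, but since the definition of $\Phi_{t,m}$ is applied block-by-block, within-block the condition is exactly the mineral-error condition on $\bfx^{(j)}$ once we observe $e_1$ of a block may be nonzero, which is allowed for minerals). Hence for each block $j$, the received block $\bfz^{(j)}$ lies in $\cB_{t_j,M}(\bfx^{(j)}) \subseteq \cB_{t,M}(\bfx^{(j)})$ and likewise $\bfz^{(j)} \in \cB_{t,M}(\bfy^{(j)})$, so $\cB_{t,M}(\bfx^{(j)}) \cap \cB_{t,M}(\bfy^{(j)}) \neq \emptyset$ whenever $t_j \ge 1$ or the corresponding block of $\bfe_\bfy$ is nonzero. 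For every block $j$ on which both error patterns vanish we get $\bfx^{(j)} = \bfz^{(j)} = \bfy^{(j)}$, so $\Phi_{t,m}(\bfx^{(j)}) = \Phi_{t,m}(\bfy^{(j)})$; there are at most $t$ blocks on which $\bfe_\bfx$ or $\bfe_\bfy$ is nonzero. Thus $\Phi_{t,m}(\bfx)$ and $\Phi_{t,m}(\bfy)$ disagree in at most $t$ coordinates of $GF(p)^n$.

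From here the conclusion is quick. If $\Phi_{t,m}(\bfx) \neq \Phi_{t,m}(\bfy)$, then $d_H(\Phi_{t,m}(\bfx), \Phi_{t,m}(\bfy)) \le t$ contradicts the fact that $\cC_t$ corrects $t$ random errors (its minimum Hamming distance exceeds $2t \ge t+1$, or more directly two codewords within distance $t$ cannot be distinguished). So $\Phi_{t,m}(\bfx) = \Phi_{t,m}(\bfy)$, meaning $\Phi_{t,m}(\bfx^{(j)}) = \Phi_{t,m}(\bfy^{(j)})$ for every block $j$. On a block where $\bfx^{(j)} \neq \bfy^{(j)}$, having a common mineral-received word means $(\bfx^{(j)}, \bfy^{(j)})$ is an edge of $\cG_{t,m}$, hence a proper coloring assigns them different colors — contradicting $\Phi_{t,m}(\bfx^{(j)}) = \Phi_{t,m}(\bfy^{(j)})$. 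Therefore $\bfx^{(j)} = \bfy^{(j)}$ on every block and $\bfx = \bfy$, so no two distinct codewords of $\cC$ are confusable, which is exactly the statement that $\cC$ is a $t$-mineral-error-correcting code.

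The main obstacle I anticipate is the block-boundary bookkeeping in the decomposition step: I must verify precisely that the restriction of a mineral-error to a single length-$m$ block is again a legitimate mineral-error for that block in the sense required to invoke the edge relation of $\cG_{t,m}$, and that the total-weight budget $t$ splits additively across blocks. A mineral-error (unlike a grain-error) permits a nonzero first coordinate, so the block restriction is genuinely a mineral-error pattern with no boundary pathology; the weight additivity is immediate since Hamming weight is additive over a partition of coordinates. Once this is pinned down the rest is the routine confusability argument sketched above.
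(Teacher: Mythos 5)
Your proposal is correct and takes essentially the same route as the paper's proof: restrict the mineral-errors blockwise (noting that, unlike grain-errors, a mineral-error may hit the first position of a block, so the restriction is again a valid mineral-error), use the properness of the coloring to rule out confusion between distinct blocks of the same color, and use the minimum distance of $\cC_t$ to rule out confusion when the colored images differ --- the paper just phrases this as two cases on whether $\Phi_{t,m}(\bfx)=\Phi_{t,m}(\bfy)$ rather than contrapositively. One small slip: the number of blocks touched by $\bfe_{\bfx}$ or $\bfe_{\bfy}$ is at most $2t$, not $t$, but since $\cC_t$ has minimum Hamming distance at least $2t+1$ the contradiction you draw still stands, so nothing breaks.
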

\begin{proof} The result will be proven by showing that for any codewords $\bfx, \bfy \in \cC$ where $\bfx \neq \bfy$, $\cB_{t,M}(\bfx) \cap \cB_{t,M}(\bfy) = \emptyset$. Consider two codewords $\bfx, \bfy \in \cC$ such that $\bfx \neq \bfy$. There are two cases to consider: either 1) $\Phi_{t,m}(\bfx) = \Phi_{t,m}(\bfy)$ or 2) $\Phi_{t,m}(\bfx) \neq \Phi_{t,m}(\bfy)$. Recall that, by construction, $\Phi_{t,m}(\bfx), \Phi_{t,m}(\bfy) \in \cC_t$.

Suppose $\Phi_{t,m}(\bfx) = \Phi_{t,m}(\bfy)$. Then, since $\bfx \neq \bfy$, there exists an index $i$ where $1 \leq i \leq n$ such that $\Phi_{t,m}(x_{(i-1)m + 1}, \ldots, x_{im}) = \Phi_{t,m}(y_{(i-1)m + 1}, \ldots, y_{im})$ but $(x_{(i-1)m + 1}, \ldots, x_{im}) \neq (y_{(i-1)m + 1}, \ldots, y_{im})$. For shorthand, let $\bfv_1=(x_{(i-1)m+1}, \ldots, x_{im})$ and $\bfv_2=(y_{(i-1)m+1}, \ldots, y_{im})$. Since $\Phi_{t,m}(\bfv_1) = \Phi_{t,m}(\bfv_2)$, the vectors $\bfv_1, \bfv_2$ map to the same color class under $\Phi_{t,m}$, which implies that $\bfv_1$ and $\bfv_2$ are not adjacent in $\cG_{t,m}$. By definition, if $\bfv_1, \bfv_2$ are not adjacent in $\cG_{t,m}$, $\cB_{t,M}(\bfv_1) \cap \cB_{t,M}(\bfv_2) = \emptyset$. Thus, for any $t$-mineral-errors (of length $m$) $\bfe_{\bfv_1}, \bfe_{\bfv_2}$, we have $\bfv_1 + \bfe_{\bfv_1} \neq \bfv_2 + \bfe_{\bfv_2}$. Then, there do not exist any $t$-mineral-errors $\bfe_{\bfx}, \bfe_{\bfy}$ such that $\bfx + \bfe_{\bfx} = \bfy + \bfe_{\bfy}$. Thus, $\cB_{t,M}(\bfx) \cap \cB_{t,M}(\bfy) = \emptyset$. 

Suppose now that $\Phi_{t,m}(\bfx) \neq \Phi_{t,m}(\bfy)$. Then, since $\Phi_{t,m}(\bfx), \Phi_{t,m}(\bfy) \in \cC_t$, there exists a set of at least $2t+1$ indices from $\{ 1, 2, \ldots, n \}$, denoted as $\cI$, such that $\forall j \in \cI$, $\Phi_{t,m}(x_{(j-1)m + 1}, \ldots, x_{jm}) \neq \Phi_{t,m}(y_{(j-1)m + 1}, \ldots, y_{jm})$. Since $\Phi_{t,m}(x_{(j-1)m + 1}, \ldots, x_{jm}) \neq \Phi_{t,m}(y_{(j-1)m + 1}, \ldots, y_{jm})$, $d_H((x_{(j-1)m + 1}, \ldots, x_{jm}), (y_{(j-1)m + 1}, \ldots, y_{jm})) \geq 1$ for every $j \in \cI$ and so $d_H(\bfx, \bfy) \geq 2t+1$. Thus, $\cB_{t,U}(\bfx) \cap \cB_{t,U}(\bfy) = \emptyset$ where $\cB_{t,U}$ denotes the error-ball for $t$ unrestricted-errors (as discussed in Section~\ref{subsec:errors}). From Claim~\ref{cl:ordering}, then $\cB_{t,M}(\bfx) \cap \cB_{t,M}(\bfy) = \emptyset$ as well and the proof is complete.
\end{proof}

Notice that according to Theorem~\ref{th:mainconstruct}, the code from Example~\ref{ex:coloring} is a $t$-mineral-error-correcting code. Corollary~\ref{cor:maingrains} follows from Claim~\ref{cl:gmcodes}.

Notice that the proof of Theorem~\ref{th:mainconstruct} relied on two properties of the error-ball $\cB_{t,M}$. In particular, the proof required that for any $\bfx=(\bfx_1, \ldots, \bfx_n) \in GF(2)^{mn}, \bfy=(\bfy_1, \ldots, \bfy_n) \in GF(2)^{mn}$:
\begin{enumerate}
\item $\cB_{t,M}(\bfx) \subseteq \cB_{t,U}(\bfx)$ and
\item If $\bfy \in \cB_{t,M}(\bfx)$, then for $1 \leq i \leq n$, $\cB_{t,M}(\bfy_i) \in \cB_{t,M}(\bfx_i)$.
\end{enumerate}
We note that many other channels satisfy the above two properties, such as the $Z$-channel. Construction~\ref{construct:color} can thus be used to generate codes that correct additional types of errors.

The next corollary follows from Theorem~\ref{th:mainconstruct}.

\begin{corollary}\label{cor:maingrains} Let $\cC'$ be a $t$-mineral-error-correcting code of length $mn$ created according to Construction~\ref{construct:color}. Then,
$$\cC=\{ \bfx \in GF(2)^{mn+1} : (x_2, \ldots, x_{mn+1}) \in \cC' \}$$
is a $t$-grain-error-correcting code. \end{corollary}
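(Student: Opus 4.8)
The plan is to invoke Claim~\ref{cl:gmcodes} almost directly. Recall that Claim~\ref{cl:gmcodes} states that if $\cC_M$ is a $t$-mineral-error-correcting code, then prepending an arbitrary bit to every codeword of $\cC_M$ yields a $t$-grain-error-correcting code (of size $2|\cC_M|$). Here the code $\cC'$ is, by Theorem~\ref{th:mainconstruct}, a $t$-mineral-error-correcting code of length $mn$ created according to Construction~\ref{construct:color}. So I would simply set $\cC_M = \cC'$ and observe that the code $\cC$ defined by $\cC = \{ \bfx \in GF(2)^{mn+1} : (x_2, \ldots, x_{mn+1}) \in \cC' \}$ is precisely the code obtained from $\cC'$ by prepending an arbitrary bit $x_1 \in \{0,1\}$ in front of each codeword.

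First I would confirm that $\cC'$ meets the hypothesis of Claim~\ref{cl:gmcodes}: Construction~\ref{construct:color} with a $t$-random-error-correcting constituent code $\cC_t$ over $GF(p)$ produces, by Theorem~\ref{th:mainconstruct}, a $t$-mineral-error-correcting code, so $\cC'$ qualifies. Next I would note that the definition of $\cC$ says nothing constrains $x_1$, so every codeword of $\cC$ is of the form $(b, \bfw)$ with $b \in GF(2)$ and $\bfw \in \cC'$, and conversely every such pair lies in $\cC$; this is exactly the "prepend an arbitrary bit" operation of Claim~\ref{cl:gmcodes}. Applying the claim then gives immediately that $\cC$ is a $t$-grain-error-correcting code (and incidentally that $|\cC| = 2|\cC'|$, though the statement only asserts the error-correction property).

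There is essentially no obstacle here — the corollary is a one-line consequence of Claim~\ref{cl:gmcodes} and Theorem~\ref{th:mainconstruct}. The only thing worth spelling out, if one wanted to be fully self-contained rather than citing Claim~\ref{cl:gmcodes}, would be to re-derive why prepending a free bit to a mineral code yields a grain code: for two distinct codewords $(b_1, \bfw_1), (b_2, \bfw_2) \in \cC$, if $b_1 \neq b_2$ then by Definition~\ref{def:grainerror} (condition $e_1 = 0$) a grain-error never alters the first coordinate, so the error-balls are disjoint; if $b_1 = b_2$ then $\bfw_1 \neq \bfw_2$, and any grain-error confined to positions $2, \ldots, mn+1$ acts on $(x_2,\ldots,x_{mn+1})$ exactly as a $t$-mineral-error on that suffix (the weight-$\le t$ and $x_i \neq x_{i-1}$ conditions match, and there is no $e_1=0$ restriction internal to the suffix), so disjointness of $\cB_{t,M}(\bfw_1)$ and $\cB_{t,M}(\bfw_2)$ — which holds because $\cC'$ is a $t$-mineral code — forces disjointness of $\cB_{t,G}((b_1,\bfw_1))$ and $\cB_{t,G}((b_2,\bfw_2))$. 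Hence $\cC$ corrects $t$ grain-errors, completing the proof.
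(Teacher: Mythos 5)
Your proposal is correct and matches the paper, which indeed derives Corollary~\ref{cor:maingrains} exactly by combining Theorem~\ref{th:mainconstruct} with Claim~\ref{cl:gmcodes}, noting that the definition of $\cC$ is precisely the prepend-a-free-bit operation. Your optional self-contained justification of Claim~\ref{cl:gmcodes} (first-bit disagreement is preserved since $e_1=0$; otherwise the restriction of a $t$-grain-error to positions $2,\ldots,mn+1$ is a valid $t$-mineral-error on the suffix, so disjointness of the mineral balls for $\cC'$ gives disjointness of the grain balls) is also sound and correctly fills in the claim the paper states without proof.
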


Although Construction~\ref{construct:color} provides a method to construct $t$-mineral-error-correcting codes, it is not straightforward to compute the sizes of the resulting codes because the color classes of the map $\Phi_{t,m}$ are not always of the same size. As a starting point, in this subsection we only considered single-mineral codes created using Construction~\ref{construct:color} with the map $\Gamma$ as described in Example~\ref{ex:coloring}. Even with the simple map $\Gamma$, computing the cardinalities of the resulting codes from Construction~\ref{construct:color} is not straightforward. In the following subsection, we analyze the codes from Example~\ref{ex:coloring} for arbitrary $t$.

Recall that from Remark~\ref{rem:grainasym}, the single asymmetric error-correcting codes proposed in \cite{grassl} (using the ternary construction) are a special case of Construction~\ref{construct:color}. Therefore, the codes from (Table II, column 4, \cite{grassl}) are single-mineral codes. Thus, we can obtain new single-grain codes by appending an information bit to these codes. The cardinalities displayed in the column titled `Current Lower Bound' (second column) of Table~\ref{table:results} for $9 \leq n \leq 15$ are the result of this operation. Note that the codes enumerated from \cite{grassl} were the result of a computerized search and to limit the search space, the search was only carried out on codes of length at most 15. For $n \geq 16$, the cardinalities in the second column of Table~\ref{table:results} can be obtained from Construction~\ref{grain-construct} using the group codes found in Table~1 in \cite{cr}. The first column in Table~\ref{table:results} (labeled 'Previous Lower Bound') shows the cardinalities of the largest possible codebooks using constructions from \cite{arya} and \cite{artyom}. The third column in the table (labeled 'Upper Bound') is the upper bound from Corollary~\ref{cor:card1} (Section~\ref{sec:bound}), which can also be found in \cite{Kashyap}.


\begin{table}
\center{\caption{Upper and Lower Bounds for single grain-error-correcting Codes}\label{table:results}
\begin{tabular}{|c|cc|c|}
	\hline
Length  & Previous                          & Current                                                                & Upper Bound   \\
              & Lower Bound                  & Lower Bound                                                       &    \\
	\hline
3                       &  4 \cite{artyom}          			&   4 \cite{artyom}                    &  4 \cite{artyom}             \\
4                       &  6 \cite{artyom}                                      &   6 \cite{artyom}                    &  6 \cite{artyom}          \\
5                       &  8 \cite{artyom}                                      &   8 \cite{artyom}                    &  8 \cite{artyom}            \\
6                       & 16 \cite{artyom}                                    &  16 \cite{artyom}                   &16 \cite{artyom}          \\
7                       & 26 \cite{artyom}                                    &   26 \cite{artyom}                  & 26 \cite{artyom}           \\
8                       & 44\cite{artyom}                                     &   44 \cite{artyom}                  & 44 \cite{artyom}       \\
9                       & 44 \cite{artyom}                                    &   {{64}}               & 112        \\
10                    &  64 \cite{arya}                                        &   {{110}}            & 204  \\
11                    &  128 \cite{arya}                                      &   {{210}}            & 372  \\   
12                    &  256 \cite{arya}                                      &   {{360}}            & 682  \\
13                    &  512 \cite{arya}                                      &   {{702}}            & 1260  \\
14                    &  1024 \cite{arya}                                    &   {{1200}}         & 2340 \\
15                    & 2176 \cite{artyom}                                     &   {{2400}}          & 4368 \\
16                    & 4096 \cite{arya}                                     &   {{4096}}           & 8190  \\
17                    & 4096 \cite{arya}                                     &   {{7712}}         & 15420 \\
18                    & 8192 \cite{arya}                                      &   {{14592}}       & 29126 \\
19                    & 16384 \cite{arya}                                  &    {{27596}}       & 55188 \\
20                    &  32768 \cite{arya}                                 &    {{52432}}         & 104856 \\
	\hline
\end{tabular}\label{table:error_patterns}}
\end{table}

\subsection{Multiple grain-error codes using the $\Gamma$ coloring}\label{subsec:manygrains}

In this subsection, multiple grain-error-correcting codes are studied. In particular, we consider an alternative interpretation of the codes from Example~\ref{ex:coloring}. Using this interpretation, we derive a lower bound on the size of a mineral code created according to Example~\ref{ex:coloring} for the case where the code $\cC_t$ is linear.



Notice that if the Hamming weight enumerator for the constituent code $\cC_t$ in Example~\ref{ex:coloring} is given, then the size of the code $\cC$ can be expressed as a function of the Hamming weight enumerator for $\cC_t$. We denote the Hamming weight enumerator of a code $\cC$ as $W_{\cC}(x,z) = \sum_{i=0}^n W_{i,n-i} z^i x^{n-i}$ where $W_{i,n-i}$ represents the number of codewords in $\cC$ whose Hamming weight is $i$. The following lemma is similar to Theorem~9 in \cite{grassl} and so the proof is omitted.  

\begin{lemma}\label{lem:cards} Let $\cC_t$ be a ternary code of length $n$ used in Example~\ref{ex:coloring} with Hamming weight enumerator
$$ \cW_{\cC_t} (x,z)= \sum_{i=0}^n W_{i,n-i} z^{i} x^{n-i}. $$
Then, the resulting mineral-error-correcting code $\cC$ (as in Example~\ref{ex:coloring}) has cardinality $|\cC| = W_{\cC_t} (2, 1)$. Prepending an additional information bit to every codeword in $\cC$ results in a grain-error-correcting code with cardinality $2 |\cC|$.
\end{lemma}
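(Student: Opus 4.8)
The plan is to count, for the ternary constituent code $\cC_t$, exactly how many binary preimages each codeword contributes under the coloring $\Gamma$ defined in (\ref{eq:map}), and then sum over $\cC_t$. Recall from Example~\ref{ex:coloring} that $\Gamma$ is applied blockwise to pairs of binary coordinates, and its color classes have sizes $|\Gamma^{-1}(0)| = 2$ (namely $(0,0)$ and $(1,1)$), $|\Gamma^{-1}(1)| = 1$, and $|\Gamma^{-1}(2)| = 1$. Thus for a codeword $\bfc = (c_1, \ldots, c_n) \in \cC_t \subseteq GF(3)^n$, the number of $\bfx \in GF(2)^{2n}$ with $\Gamma(\bfx) = \bfc$ is $\prod_{j=1}^n |\Gamma^{-1}(c_j)| = 2^{(\text{number of }j\text{ with }c_j = 0)}$, since each coordinate's choices are independent across the $n$ blocks. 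By (\ref{eq:cond}), $\cC$ consists of exactly those binary vectors whose $\Gamma$-image lies in $\cC_t$, so the fibers over distinct codewords of $\cC_t$ are disjoint and $|\cC| = \sum_{\bfc \in \cC_t} 2^{z(\bfc)}$, where $z(\bfc)$ denotes the number of zero coordinates of $\bfc$.

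The next step is to rewrite this sum in terms of the Hamming weight enumerator. If $\bfc$ has Hamming weight $i$ (i.e. $i$ nonzero coordinates), then $z(\bfc) = n - i$, so the contribution of $\bfc$ is $2^{n-i}$. Grouping codewords by weight, $|\cC| = \sum_{i=0}^n W_{i,n-i}\, 2^{n-i}$. Comparing with the definition $\cW_{\cC_t}(x,z) = \sum_{i=0}^n W_{i,n-i} z^i x^{n-i}$, this is precisely the evaluation at $z = 1$, $x = 2$, i.e. $|\cC| = W_{\cC_t}(1,2)$, which is the first claim. The second claim is immediate from Claim~\ref{cl:gmcodes}: prepending an arbitrary bit to every codeword of the $t$-mineral-error-correcting code $\cC$ yields a $t$-grain-error-correcting code of size $2|\cC|$; here one also invokes Theorem~\ref{th:mainconstruct} to know that $\cC$ is indeed a $t$-mineral code.

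There is no serious obstacle here — the argument is essentially a bookkeeping exercise, which is why the paper defers to the analogous Theorem~9 in \cite{grassl}. The one point that deserves a moment's care is the disjointness of the fibers: distinct $\bfc, \bfc' \in \cC_t$ have $\Gamma^{-1}(\bfc) \cap \Gamma^{-1}(\bfc') = \emptyset$ because $\Gamma$ is a well-defined function, so every binary vector in $\cC$ is counted exactly once. A second minor point is the product formula for fiber sizes, which uses that $\Gamma$ acts independently on each pair of coordinates; this follows directly from the blockwise definition of the coloring map. With those two observations in place, the computation $|\cC| = \sum_{\bfc \in \cC_t} \prod_j |\Gamma^{-1}(c_j)| = \sum_i W_{i,n-i} 2^{n-i} = W_{\cC_t}(1,2)$ completes the proof.
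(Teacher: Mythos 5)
Your proof is correct and is exactly the fiber-counting argument the paper has in mind (it omits the proof, deferring to the analogous Theorem~9 of \cite{grassl}): each zero symbol of a codeword of $\cC_t$ has a $\Gamma$-fiber of size $2$ and each nonzero symbol a fiber of size $1$, so $|\cC|=\sum_{i} W_{i,n-i}2^{\,n-i}$, and the grain-code claim follows from Theorem~\ref{th:mainconstruct} together with Claim~\ref{cl:gmcodes}. The only nit is notational: under the stated signature $\cW_{\cC_t}(x,z)$ your evaluation is at $x=2$, $z=1$, i.e.\ $W_{\cC_t}(2,1)$; the lemma's ``$W_{\cC_t}(1,2)$'' reads the arguments in the order they appear in the monomial $z^{i}x^{n-i}$, and your sum is the intended quantity.
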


\begin{remark} Note that in general the weight enumerator for any $t$-unrestricted-error-correcting ternary code $\cC_t$ is not necessarily known. \end{remark}


 {We now summarize our results in Table~\ref{table:results2}. For each $t$, we report a triplet of values. Since for $1 < t < n$ there were no existing grain-error-correcting codebooks to compare with, we naively constructed a $t$-grain-error-correcting code by prepending an additional information bit to the start of a $t$-unrestricted-error-correcting code. As a result, the first entry in each triplet (labeled 'UEC') is the cardinality of the largest linear $t$-unrestricted-error-correcting binary code found in \cite{Magma} of length $n-1$ prepended by an additional information bit. For the second entry in each triplet (labeled 'Example~\ref{ex:coloring}*') we rely on the results from Example~\ref{ex:coloring}. Appending $\{ 00, 11\}$ to any codeword of a $t$-grain-error-correcting code of length $n$ results in a $t$-grain-error-correcting code of length-$(n+2)$; we conclude that $M(n+2,t) \geq 2M(n,t)$ \footnote{The authors thank the anonymous reviewer for pointing out this useful property.}. This observation allows us to improve the cardinalities obtained by Example~\ref{ex:coloring} in certain cases. In particular, each entry under 'Example~\ref{ex:coloring}*' is the maximum of the cardinality of a code created with Example~\ref{ex:coloring} and a code obtained by appending $\{00, 11\}$ to a shorter length code. Lastly, the third entry in each triplet (labeled 'UB') is the non-asymptotic upper bound from Theorem~\ref{th:maxcard},   rounded to an even integer, as per Lemma~\ref{lem:even}. 


In the following, we provide a variation of the codes from Example~\ref{ex:coloring} in order to {derive} an explicit lower bound on the size of codes created as in Example~\ref{ex:coloring} when $\cC_t$ is linear. 


\begin{Construction}\label{constructc} Let $r,\ell$ be positive integers where $r \leq \ell$. Let $H' = ( \bfh_1', \ldots, \bfh_{\ell}')$ be an $r \times \ell$ parity check matrix of a ternary code $\cC'$ of length $\ell$ that can correct up to $t$ unrestricted-errors (where each $\bfh_i'$ represents the $i$th column in $H'$, $1 \leq i \leq \ell$). Let $H$ be an $r \times 2\ell$ ternary matrix, $$H=(\bfh_1, \ldots, \bfh_{2\ell})=( 2\bfh_1', \bfh_1', 2\bfh_2', \bfh_2', \ldots, 2\bfh_{\ell}', \bfh_{\ell}').$$ Let $\bfa$ be an arbitrary element in $GF(3)^r$. Then,
\begin{align}\label{eq:constructccond}
\cC_{\bfa} = \{ \bfx \in GF(2)^{2\ell} : H \bfx = \bfa \},
\end{align}
where the vector operations are performed in the vector space $GF(3)^r$.
\end{Construction}

\begin{table*}
\center{\caption{Cardinalities of Grain-error-correcting Codes}\label{table:results2}
\begin{tabular}{|c|ccc|ccc|ccc|ccc|}
	\hline
& & $t=2$ & 	&  & $t=3$ & 		      &  & $t=4$ & 	&  & $t=5$ &  \\
Length	     & UEC &  {Example~\ref{ex:coloring}*} & UB 			& UEC &  {Example~\ref{ex:coloring}*} & UB			      & UEC &  {Example~\ref{ex:coloring}*} & UB	&  UEC &  {Example~\ref{ex:coloring}*} & UB \\
	\hline
11          & 16 & 68 & 168                                        & - & - &-                                                                          & - &- & -                                                    & - & - & -  \\
13          & 32 &  {136}	    & 476                                      & - & - & -                                                                         & - & - & -                                                   & - & - & - \\
15          & 128 & 312 & 1406                                  & 32 & 260 &  800                                       	  	         & - & - & -                                                     & - & - & - \\
17          & 512 & 836 & 4302                                  & 64 &  {520} & 2132               & - & - & -                                                     & - & - & - \\
19          & 1024 & 2636 & 13560                            & 256 & {1040} & 5892            & 16 &1028 & 3854               			  & - & - & - \\
21          & 4096 & 9376 & 43804                            & 1024 & 2144 & 16836                              			& 64 &  {2056}      & 9878  	     			& - & - & - \\
23          & 16384 & 35648 & 144380                      & 4096 & 4688 & 49572                              			& 128 &  {4112}    & 26100          			& 32 & 4100 & 18740 \\
25          & 32768 &  {71296}  & 483954		     & 8192 &  {9376} & 149804    & 256 & 8320 & 71018          			 & 64 &  {8200} & 46762 \\
27          & 131072 & 190912 & 1645392                 & 16384 & 20808 & 463074                       			& 1024 & 17216 & 198660    			 & 256 &  {16400} & 119626 \\
29          & 524288 & 747520 & 5662422                 & 32768 & 53460 & 1459848		     			& 2048 &  {34432} & 570038   			& 512 &  {32800} & 313846 \\
	\hline
\end{tabular}\label{table:grain-errors}}
\end{table*}


The following lemma will be useful in proving the correctness of Construction~\ref{constructc}.

\begin{lemma}\label{lem:equivalence} Let $r, \ell$ be positive integers where $r \leq \ell$ and let the matrices $H', H$ be as in Construction~\ref{constructc}. Then, for any $\bfx \in GF(2)^{2\ell}$, $H \cdot \bfx = H' \cdot \Gamma(\bfx)$. \end{lemma}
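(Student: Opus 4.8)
The plan is to verify the identity $H \cdot \bfx = H' \cdot \Gamma(\bfx)$ by decomposing $\bfx \in GF(2)^{2\ell}$ into its $\ell$ consecutive pairs and checking that the contribution of each pair matches on both sides. Write $\bfx = (x_1, x_2, \ldots, x_{2\ell})$ and for $1 \leq i \leq \ell$ set the $i$th pair to be $(x_{2i-1}, x_{2i})$. By definition of the matrix $H$, the left-hand side is
\begin{align*}
H \cdot \bfx = \sum_{i=1}^{\ell} \left( x_{2i-1} \cdot (2\bfh_i') + x_{2i} \cdot \bfh_i' \right) = \sum_{i=1}^{\ell} (2 x_{2i-1} + x_{2i}) \, \bfh_i',
\end{align*}
where all arithmetic is in $GF(3)^r$. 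On the right-hand side, $\Gamma(\bfx)$ is the length-$\ell$ ternary vector whose $i$th coordinate is $\Gamma(x_{2i-1}, x_{2i})$, so $H' \cdot \Gamma(\bfx) = \sum_{i=1}^{\ell} \Gamma(x_{2i-1}, x_{2i}) \, \bfh_i'$. Hence it suffices to show the scalar identity $\Gamma(a,b) = 2a + b \pmod 3$ for every $(a,b) \in \{0,1\}^2$.

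The remaining step is a finite check over the four inputs, using the definition of $\Gamma$ in~(\ref{eq:map}): $\Gamma(0,0) = 0 = 2\cdot 0 + 0$; $\Gamma(0,1) = 1 = 2\cdot 0 + 1$; $\Gamma(1,0) = 2 = 2\cdot 1 + 0$; and $\Gamma(1,1) = 0$, while $2\cdot 1 + 1 = 3 \equiv 0 \pmod 3$. So in all cases $\Gamma(a,b) \equiv 2a+b \pmod 3$. Substituting this into the expression for $H' \cdot \Gamma(\bfx)$ above makes it term-by-term equal to the expression for $H \cdot \bfx$, which proves the lemma.

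I do not anticipate a genuine obstacle here: the only subtlety is bookkeeping, namely being careful that the columns of $H$ are grouped as $(2\bfh_1', \bfh_1', 2\bfh_2', \bfh_2', \ldots)$ rather than in some other order, so that the $i$th pair of coordinates of $\bfx$ multiplies exactly $(2\bfh_i', \bfh_i')$ — this is what makes the coefficient of $\bfh_i'$ equal to $2x_{2i-1} + x_{2i}$, matching $\Gamma$ applied to the same pair. One should also note explicitly that the binary entries $x_j \in \{0,1\}$ are being reinterpreted as elements of $GF(3)$ when forming these sums, which is exactly the convention under which $\Gamma$ acts as $(a,b) \mapsto 2a+b \bmod 3$, so there is no inconsistency. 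Given that, the proof is essentially the linearity of matrix multiplication combined with the one-line scalar verification.
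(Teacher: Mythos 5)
Your proposal is correct and follows essentially the same route as the paper: group the coordinates of $\bfx$ into consecutive pairs so that the $i$th pair multiplies the columns $(2\bfh_i',\bfh_i')$, and verify over the four possible values of the pair that the contribution equals $\bfh_i'\cdot\Gamma(x_{2i-1},x_{2i})$. Your phrasing via the scalar identity $\Gamma(a,b)\equiv 2a+b \pmod 3$ is just a slightly more compact way of stating the paper's four-case check.
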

\begin{proof}
For any $\bfx=(x_1, \ldots, x_{2\ell}) \in GF(2)^{2\ell}$ we have $H \cdot \bfx = \sum_{i=1}^{2\ell} \bfh_i \cdot x_i$ where $\bfh_i \in GF(3)^r$. Consider the quantity
\begin{align}
H \cdot \bfx &= \sum_{i=1}^{2\ell} \bfh_i \cdot x_i \nonumber \\
&= \sum_{j=1, j\ \text{odd}}^{2\ell-1} (\bfh_j, \bfh_{j+1})  \cdot (x_{j}, x_{j+1})^T \nonumber \\
&=\sum_{j=1, j\ \text{odd}}^{2\ell-1} (2 \bfh_{\lceil \frac{j}{2} \rceil }', \bfh_{\lceil \frac{j}{2} \rceil }')  \cdot (x_{j}, x_{j+1})^T.\label{eq:lem10l1}
\end{align}
There are the $4$ possibilities for $(x_j, x_{j+1})$:
\begin{enumerate}
\item $(x_j, x_{j+1}) = (0, 0)$,
\item $(x_j, x_{j+1}) = (0, 1)$,
\item $(x_j, x_{j+1}) = (1, 0)$,
\item $(x_j, x_{j+1}) = (1, 1)$.
\end{enumerate}

It can be verified that in either of the $4$ cases, when $j$ is odd, we have (where $\Gamma$ is as defined in (\ref{eq:map}))

 $$ (2\bfh_{\lceil \frac{j}{2} \rceil }', \bfh_{\lceil \frac{j}{2} \rceil }')  \cdot (x_{j}, x_{j+1})^T = \bfh_{\lceil \frac{j}{2} \rceil }'  \cdot \Gamma (x_{j}, x_{j+1}). $$

Then, continuing from (\ref{eq:lem10l1}),

\begin{align*}
H \cdot \bfx &=\sum_{j=1, j\ \text{odd}}^{2\ell-1} (2 \bfh_{\lceil \frac{j}{2} \rceil}', \bfh_{\lceil \frac{j}{2} \rceil}')  \cdot (x_{j}, x_{j+1})^T \\
&= \sum_{j=1, j\ \text{odd}}^{2\ell-1} \bfh_{\lceil \frac{j}{2} \rceil }'  \cdot \Gamma (x_{j}, x_{j+1})\\
&= \sum_{k=1}^{\ell} \bfh_{k}'  \cdot \Gamma (x_{2k-1}, x_{2k}) \\
&= H' \cdot \Gamma(\bfx).
\end{align*}
\end{proof}

We now prove the correctness of Construction~\ref{constructc}.

\begin{theorem}\label{th:constructc} Suppose $\cC_{\bfa}$ is a code created according to Construction~\ref{constructc}. Then, $\cC_{\bfa}$ is a $t$-mineral-error-correcting code.
\end{theorem}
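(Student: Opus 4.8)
The plan is to reduce the claim about $\cC_{\bfa}$ to the already-established Theorem~\ref{th:mainconstruct}, using Lemma~\ref{lem:equivalence} as the bridge. The key observation is that Construction~\ref{constructc} is just a concrete instance of Construction~\ref{construct:color} with $m=2$, $p=3$, the coloring $\Phi_{t,2}=\Gamma$ from (\ref{eq:map}), and constituent code $\cC_t$ equal to the ternary code $\{\bfz \in GF(3)^{\ell} : H'\bfz = \bfa\}$. So first I would note that $\cC_t$ defined in this way is a coset of the code $\cC'$ with parity check matrix $H'$; since $\cC'$ corrects $t$ random-errors, so does every coset of it, hence $\cC_t$ is a $t$-random-error-correcting code over $GF(3)$ of length $\ell$.

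Next I would invoke Lemma~\ref{lem:equivalence}: for any $\bfx \in GF(2)^{2\ell}$ we have $H\bfx = H'\Gamma(\bfx)$, computed in $GF(3)^r$. Therefore $\bfx \in \cC_{\bfa}$, i.e. $H\bfx = \bfa$, holds if and only if $H'\Gamma(\bfx) = \bfa$, i.e. $\Gamma(\bfx) \in \cC_t$. This shows that
\begin{align*}
\cC_{\bfa} = \{\bfx \in GF(2)^{2\ell} : \Gamma(\bfx) \in \cC_t\},
\end{align*}
which is precisely the code produced by Construction~\ref{construct:color} (equivalently, the code from Example~\ref{ex:coloring}) with the $t$-random-error-correcting ternary code $\cC_t$ as constituent. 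Then Theorem~\ref{th:mainconstruct} applies verbatim and yields that $\cC_{\bfa}$ is a $t$-mineral-error-correcting code, completing the proof.

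I do not anticipate a serious obstacle here; the proof is essentially an identification of Construction~\ref{constructc} with a special case of Construction~\ref{construct:color}. The one point that requires a sentence of care is the claim that a coset $\{\bfz : H'\bfz = \bfa\}$ of a $t$-random-error-correcting code is itself $t$-random-error-correcting — this follows because the relevant property is about pairwise Hamming distances (or equivalently, that the minimum weight of $\cC'$ is at least $2t+1$, so no nonzero vector of weight $\le 2t$ lies in $\cC'$, hence no two distinct elements of the coset are confusable under $t$ random-errors). A second minor point is making sure the dimensions match: $H$ has $2\ell$ columns so $\cC_{\bfa} \subseteq GF(2)^{2\ell} = GF(2)^{mn}$ with $n = \ell$, which is consistent with $\Gamma$ being applied pairwise. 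Once these are in place, the argument is complete.
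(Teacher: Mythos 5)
Your proposal is correct and follows essentially the same route as the paper's proof: identify the coset $\{\bfz : H'\bfz=\bfa\}$ as a ternary $t$-random-error-correcting code, use Lemma~\ref{lem:equivalence} to rewrite $\cC_{\bfa}$ as $\{\bfx : \Gamma(\bfx) \in \cC_t\}$, and conclude via Theorem~\ref{th:mainconstruct} (Example~\ref{ex:coloring}). Your extra sentence justifying that a coset of a $t$-random-error-correcting code is itself $t$-random-error-correcting is a detail the paper states without proof, so your write-up is, if anything, slightly more complete.
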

\begin{proof} Let $H'$ be a parity check matrix of dimension $r$ (where $r \leq \ell$) for the code $\cC'$ of length $\ell$ that can correct up to $t$ unrestricted-errors. For any $\bfa \in \cA$, let $\cC'_{\bfa} = \{ {\bfx} \in GF(3)^{\ell} : H' \cdot {\bfx} = {\bfa}  \}$. Notice that for any $\bfa \in \cA$, $\cC'_{\bfa}$ is a ternary $t$-unrestricted-error-correcting code. Recall from Construction~\ref{constructc} that $\cC_{\bfa}= \{ \bfx \in GF(2)^{2\ell} : H \bfx = \bfa \}$ where $H=(2 \bfh_1', \bfh_1', 2 \bfh_2', \bfh_2', \ldots, 2 \bfh_{\ell}, \bfh_{\ell}' )=(\bfh_1, \ldots, \bfh_{2\ell})$ (and each $\bfh_i, \bfh_j'$ denotes a column in $H$ or $H'$, respectively for $1 \leq i \leq 2\ell$ and $1 \leq j \leq \ell$). 

From Lemma~\ref{lem:equivalence}, for any vector $\bfx \in GF(2)^n$, $H \cdot \bfx = H' \cdot \Gamma(\bfx)$. Therefore, it follows that $H \cdot \bfx = \bfa$ if and only if $H' \cdot \Gamma(\bfx) = \bfa$. Then, we can write $\cC_{\bfa} = \{ \bfx \in GF(2)^n : \Gamma(\bfx) \in \cC_{\bfa}' \}$. Since $\cC_a'$ is a $t$-unrestricted-error-correcting code, $\cC_{\bfa}$ is a $t$-mineral-error-correcting code by Example~\ref{ex:coloring} and Theorem~\ref{th:mainconstruct}.
\end{proof}

Using the interpretation of the codes from Example~\ref{ex:coloring} provided by Construction~\ref{constructc}, we now state a simple lower bound on the size of a code created as in Example~\ref{ex:coloring}. Recall from Theorem~\ref{th:constructc}, Construction~\ref{constructc} is a special case of the codes from Example~\ref{ex:coloring}. The lower bound in Corollary~\ref{cor:roughcard} will be improved in the next section.

In the following corollary $\cA$ denotes the additive Abelian group of $GF(3)^r$.

\begin{corollary}\label{cor:roughcard} Let $\cC'$ be a $t$-unrestricted-error-correcting ternary code of length $\ell=\frac{n}{2}$ (where $n$ is even) with a parity check matrix $H'$ of dimension $r$. Then there exists an $\bfa \in \cA$, such that the code $\cC_{\bfa}$ created according to Construction~\ref{constructc} of length $n$ with the constituent code $\cC'$ satisfies $|\cC_{\bfa}| \geq \frac{2^n}{3^r}$.   \end{corollary}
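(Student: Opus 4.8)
The plan is to prove this by a simple averaging (pigeonhole) argument over the cosets cut out by the parity-check matrix $H$. First I would recall from Construction~\ref{constructc} that $H$ is an $r \times 2\ell = r \times n$ ternary matrix and that $\cC_{\bfa} = \{ \bfx \in GF(2)^n : H\bfx = \bfa \}$, where the matrix-vector product is evaluated in $GF(3)^r$. The key observation is that for every $\bfx \in GF(2)^n$ the value $H\bfx$ is a well-defined element of $GF(3)^r$, so each $\bfx$ lies in exactly one of the classes $\cC_{\bfa}$. Hence the family $\{ \cC_{\bfa} : \bfa \in GF(3)^r \}$ is a partition of $GF(2)^n$ into $|GF(3)^r| = 3^r$ (possibly empty) classes, and therefore $\sum_{\bfa \in GF(3)^r} |\cC_{\bfa}| = 2^n$.

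Next I would invoke the pigeonhole principle: since the $3^r$ nonnegative integers $|\cC_{\bfa}|$ sum to $2^n$, at least one of them attains at least the average value, so there exists $\bfa \in GF(3)^r$ with $|\cC_{\bfa}| \geq 2^n / 3^r$. Finally, Theorem~\ref{th:constructc} states that every code $\cC_{\bfa}$ produced by Construction~\ref{constructc} from the $t$-random-error-correcting ternary code $\cC'$ is a $t$-mineral-error-correcting code of length $n = 2\ell$, so this particular $\cC_{\bfa}$ is exactly a code of the required type with the stated cardinality bound.

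Since the whole argument is a counting one, there is no genuine obstacle. The only points that warrant a moment of care are: (i) confirming that the classes $\cC_{\bfa}$ really do partition $GF(2)^n$, which is immediate because $\bfx \mapsto H\bfx$ is a function; and (ii) noting that the hypothesis $\ell = n/2$ (with $n$ even) is precisely what makes Construction~\ref{constructc} applicable with these parameters, so that Theorem~\ref{th:constructc} can be quoted verbatim. No refinement of the estimate is attempted here, as the corollary explicitly defers the stronger bound to the next section.
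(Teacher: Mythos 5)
Your proof is correct and follows essentially the same argument as the paper: the map $\bfx \mapsto H\bfx$ partitions $GF(2)^{n}$ into the $3^r$ classes $\cC_{\bfa}$, and the pigeonhole principle yields one class of size at least $2^n/3^r$, which is a $t$-mineral-error-correcting code by Theorem~\ref{th:constructc}. No issues to report.
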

\begin{proof} Notice that  each of the $2^{2\ell}$ vectors from $GF(2)^{2\ell}$ will map to exactly one code $\cC_{\bfa}$ as in (\ref{eq:constructccond}). Thus, the matrix $H$ partitions the space $GF(2)^{2\ell}$ into $|\cA|$ non-overlapping codes $\cC_{\bfa_1}, \cC_{\bfa_2}, \cC_{\bfa_3}, \ldots, \cC_{\bfa_{3^r}}$ where each $\bfa_i \in \cA$ for $1 \leq i \leq 3^r$. By the pigeonhole principle, there must exist a code with cardinality at least $\frac{2^{2\ell}}{|\cA|} = \frac{2^{n}}{3^r}$.
\end{proof}

Recall, Construction~\ref{constructc} was introduced as a tool that can be used to provide lower bounds on the sizes of mineral codes created according to the more general Construction~\ref{construct:color}. Although Lemma~\ref{lem:cards} gives a lower bound on a mineral code created according to Example~\ref{ex:coloring}, Lemma~\ref{lem:cards} has a potential drawback that it requires knowledge of the weight enumerator for a non-binary code which may not be known. The purpose of the next section is to use Construction~\ref{constructc} to produce a lower bound that holds for general $n$. The lower bound in the next section only requires the knowledge of the number of parity symbols for the non-binary constituent code. It will be demonstrated in Table~\ref{table:results3} that in many cases the resulting lower bound guarantees codebooks with strictly more codewords than the largest known binary codebooks capable of correcting of correcting a prescribed number of unrestricted-errors. In the next section, we use Fourier analysis to improve the lower bound on $\cC_{\bfa}$ from Construction~\ref{constructc}.



\section{An Improvement on the lower bounds on the cardinality of grain and mineral codes when $t \geq 2$}\label{sec:cardinalities}
In this section, we improve the lower bound from the previous section on the cardinality of a $t$-mineral-error-correcting code created according to Construction~\ref{constructc}. The approach will be similar to \cite{Mceliece}, where the cardinalities of the Constantin-Rao codes \cite{cr} were derived using discrete Fourier analysis. 

Let $\cA$ be the additive Abelian group of $GF(3)^r$. Let $\cC_{\bfa}$ denote a code created using Construction~\ref{constructc} where as before $\bfa$ is an element from $\cA$ used in the construction. Suppose further that $\cC'$ is a ternary code of length $\ell$ with an \textcolor{red}{$r \times \ell$} parity check matrix $H'$ that can correct up to $t$ unrestricted-errors where $\cC'$ is the constituent code used in Construction~\ref{constructc}. For $1 \leq i \leq \ell$, recall from the construction that $\bfh_i'$ refers to the $i$th column of $H'$ and that for $1 \leq j \leq 2\ell$, $\bfh_j$ refers to the $j$th column of $H$  where $H=(\bfh_1, \ldots, \bfh_{2\ell})=( 2\bfh_1', \bfh_1', 2\bfh_2', \bfh_2', \ldots, 2\bfh_{\ell}', \bfh_{\ell}')$.

For $\bfx = (x_1, \ldots, x_{2\ell}) \in GF(2)^{2\ell}$, consider the mapping $\gamma : GF(2)^{2\ell} \to \cA$ defined as
\begin{align}\label{eq:gamma2}
\gamma(\bfx) = H \cdot \bfx = \sum_{j=1}^{2\ell} x_j \bfh_j = \sum_{i=1}^{\ell} x_{2i} \bfh'_i + \sum_{k=1}^{\ell} 2x_{2k-1} \bfh'_k. 
\end{align}
In order to compute $|\cC_{\bfa}|$, we count the number of times each element $\bfa \in \cA$ is covered by some vector $\bfx \in GF(2)^{2\ell}$ through $\gamma$. Let $f : \cA \to \mathbb{N}$ where
\begin{align}\label{eq:fdef}
f(\bfa) = | \{ \bfx \in GF(2)^{2\ell} : \gamma(\bfx) = H \cdot {\bfx} = {\bfa} \} |.
\end{align}

We state the following claim for clarity. Recall, $M(n,t)$ refers to the maximum size of a $t$-grain-error-correcting code of length $n$.

\begin{claim}\label{cl:connect1} Let $n, \ell$ be positive integers such that $n = 2\ell + 1$. Let $\cC_{\bfa}$ be a code of length $2\ell$ created according to Construction~\ref{constructc} where $\bfa \in \cA$. Then, $|\cC_{\bfa}| = f(\bfa)$ and $M(n,t) \geq 2|\cC_{\bfa}| = 2f(\bfa)$. \end{claim}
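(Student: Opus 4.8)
The plan is to verify the two asserted equalities/inequalities in turn, both of which are essentially bookkeeping once the right earlier results are invoked. First I would establish $|\cC_{\bfa}| = f(\bfa)$. By the definition of $\cC_{\bfa}$ in Construction~\ref{constructc} we have $\cC_{\bfa} = \{ \bfx \in GF(2)^{2\ell} : H\bfx = \bfa \}$, and by the definition of $f$ in (\ref{eq:fdef}) we have $f(\bfa) = | \{ \bfx \in GF(2)^{2\ell} : H\bfx = \bfa \} |$. These are literally the same set (note $\gamma(\bfx) = H\cdot\bfx$ by (\ref{eq:gamma2})), so their cardinalities coincide; there is nothing more to do for this part than to point out that the two definitions describe the same object.

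Next I would handle $M(n,t) \geq 2|\cC_{\bfa}|$ where $n = 2\ell+1$. By Theorem~\ref{th:constructc}, $\cC_{\bfa}$ is a $t$-mineral-error-correcting code of length $2\ell$. Then by Claim~\ref{cl:gmcodes} (or equivalently Corollary~\ref{cor:maingrains}), prepending an arbitrary information bit to every codeword of $\cC_{\bfa}$ yields a $t$-grain-error-correcting code of length $2\ell+1 = n$ whose size is exactly $2|\cC_{\bfa}|$. Since $M(n,t)$ is by definition the maximum size of a $t$-grain-error-correcting code of length $n$, we conclude $M(n,t) \geq 2|\cC_{\bfa}|$. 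Combining with the first part gives $M(n,t) \geq 2|\cC_{\bfa}| = 2f(\bfa)$, which is the claim.

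I do not anticipate any genuine obstacle here: this claim is a stitching-together of definitions (Construction~\ref{constructc}, (\ref{eq:fdef}), (\ref{eq:gamma2})) with two already-proven facts (Theorem~\ref{th:constructc} and Claim~\ref{cl:gmcodes}). The only point requiring a modicum of care is making sure the length arithmetic matches — that the length-$2\ell$ mineral code becomes a length-$(2\ell+1) = n$ grain code after prepending one bit — and that the "arbitrary bit" doubling is applied correctly so the resulting code has size $2|\cC_{\bfa}|$ rather than $|\cC_{\bfa}|$. Once those are checked the proof is complete in a few lines.
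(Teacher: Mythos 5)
Your proof is correct and matches the paper's intent exactly: the paper states this claim without proof precisely because it is the definitional identification of $\cC_{\bfa}$ with the fiber counted by $f(\bfa)$, combined with Theorem~\ref{th:constructc} and Claim~\ref{cl:gmcodes} (prepending a free bit to the length-$2\ell$ mineral code to get a length-$n$ grain code of size $2|\cC_{\bfa}|$). Nothing is missing.
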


We are now ready to derive lower bounds on the sizes of codes created from Construction~\ref{constructc} using Fourier analysis. The following lemma will be used in the proof of Theorem~\ref{th:cardstated}. Recall from Section~\ref{subsec:fourier}, for $\bfa, \bfb \in \cA$, $$  {\langle \bfa, \bfb \rangle} = \prod_{i=1}^r (\zeta_3)^{a_i b_i} = (\zeta_3)^{\sum_{i=1}^r a_i b_i} = (\zeta_3)^{{\bfa}^T \cdot {\bfb}}$$ where $\zeta_3$ is a primitive third root of unity. In the remainder, for some positive integer $k$, $(\zeta_3)^{k}$ will be written as $\zeta_3^{k}$.

In the next lemma, we make use of the following function. Let $F : \cA \times \cA \to \mathbb{C}$ where for $\bfa, \bfb \in \cA$,
$$F({\bfa}, {\bfb}) = 1 +  {\langle -{\bfa}, {\bfb} \rangle} +  {\langle -{\bfa}, 2 {\bfb} \rangle} +  {\langle -{\bfa}, {\bfb} \rangle}  {\langle -{\bfa}, {2\bfb} \rangle}. $$

\begin{lemma}\label{lem:cardtool} For any $\bfa, \bfb \in \cA$, 
\begin{equation*} F({\bfa}, {\bfb})=
\begin{cases} 4 & \mbox{if } \bfa^T \cdot \bfb = \sum_{i=1}^r a_i b_i \equiv 0 \mod 3, \mbox{ and} \\ 1 & \mbox{} \mbox{otherwise.} \end{cases}
\end{equation*}
\end{lemma}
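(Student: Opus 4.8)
The plan is to evaluate $F(\bfa,\bfb)$ directly by a case analysis on the value of the inner product $s := \bfa^T\cdot\bfb = \sum_{i=1}^r a_i b_i \bmod 3$, using only the elementary properties of the pairing $\langle\cdot,\cdot\rangle$ recalled in Section~\ref{subsec:fourier}. First I would record the two facts I need about the bilinear pairing on $\cA = (\mathbb{Z}_3)^r$: that $\langle -\bfa,\bfb\rangle = \zeta_3^{-s}$ and $\langle -\bfa, 2\bfb\rangle = \zeta_3^{-2s}$, and hence $\langle -\bfa,\bfb\rangle\langle -\bfa,2\bfb\rangle = \zeta_3^{-3s} = 1$ since $\zeta_3^3 = 1$. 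This last identity is the crucial simplification: the fourth summand in $F(\bfa,\bfb)$ is always equal to $1$, regardless of $\bfa$ and $\bfb$.

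With that in hand, I would rewrite
\begin{align*}
F(\bfa,\bfb) = 1 + \zeta_3^{-s} + \zeta_3^{-2s} + 1 = 2 + \zeta_3^{-s} + \zeta_3^{-2s}.
\end{align*}
Now split into cases on $s \bmod 3$. If $s \equiv 0$, then $\zeta_3^{-s} = \zeta_3^{-2s} = 1$, so $F(\bfa,\bfb) = 2 + 1 + 1 = 4$. If $s \not\equiv 0$ (i.e.\ $s \equiv 1$ or $s \equiv 2 \bmod 3$), then $\{\zeta_3^{-s}, \zeta_3^{-2s}\} = \{\zeta_3, \zeta_3^2\}$ in some order, and using the standard identity $1 + \zeta_3 + \zeta_3^2 = 0$ we get $\zeta_3^{-s} + \zeta_3^{-2s} = -1$, hence $F(\bfa,\bfb) = 2 - 1 = 1$. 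This exhausts all cases and gives exactly the claimed piecewise formula.

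There is no real obstacle here; the only thing to be careful about is the bookkeeping with the pairing — specifically making sure that $\langle -\bfa, 2\bfb\rangle$ really equals $\zeta_3^{-2s}$ (it does, since the pairing is $\mathbb{Z}_3$-bilinear in each argument and $2\bfb$ scales the exponent by $2$) and that $\zeta_3^{-3s} = 1$. Once those two elementary observations are made, the computation collapses to the three-line case analysis above, and the proof is complete.
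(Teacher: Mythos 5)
Your proof is correct and takes essentially the same route as the paper: a direct case analysis on $\bfa^T\cdot\bfb \bmod 3$ using the identity $1+\zeta_3+\zeta_3^2=0$. The only cosmetic difference is bookkeeping — you note the product term $\langle-\bfa,\bfb\rangle\langle-\bfa,2\bfb\rangle=\zeta_3^{-3s}=1$ always, while the paper writes $\langle-\bfa,2\bfb\rangle=\langle-\bfa,\bfb\rangle^2$ and sums the three nonconstant terms as $\zeta_3+\zeta_3^2+\zeta_3^3=0$; the computations are the same.
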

\begin{proof}
First consider the case where $\bfa^T \cdot \bfb \equiv 0 \mod 3$. Notice that if $\bfa^T \cdot \bfb \equiv 0 \mod 3$, then $ {\langle {\bfa}, {\bfb} \rangle} = 1$. Since $\bfa^T \cdot \bfb \equiv 0 \mod 3$ we have $- \bfa^T \cdot \bfb =  - \bfa^T \cdot 2 \bfb \equiv 0 \mod 3$ and so the quantity in the Lemma is equal to $4$.

Consider the case now where $\bfa^T \cdot \bfb \not \equiv 0 \mod 3$. Recall $\zeta_3$ is a cubic root of unity and note that $ {\langle -{\bfa}, 2 {\bfb} \rangle} =  {\langle - {\bfa}, {\bfb} \rangle}^2$. Then,
\begin{align*}
&  {\langle -{\bfa}, {\bfb} \rangle} +  {\langle -{\bfa}, 2 {\bfb} \rangle} +  {\langle -{\bfa}, {\bfb} \rangle}  {\langle -{\bfa}, 2 {\bfb} \rangle} \\
&=   {\langle -{\bfa}, {\bfb} \rangle} +  {\langle -{\bfa}, {\bfb} \rangle}^2 +  {\langle -{\bfa}, {\bfb} \rangle}^3 \\
&=  \zeta_3 + \zeta_3^{2} + \zeta_3^{3} \\
&= 0, \end{align*}
and so $F({\bfa}, {\bfb}) = 1$.
\end{proof}

Given an input $\bfc \in \cA$, let $\beta : \cA \to \{0, \ldots, \ell \}$ be defined as follows

\begin{align}
\beta(\bfc) = | \{ 1 \leq i \leq \ell: \bfc^T \cdot \bfh'_i \equiv 0 \mod 3 \} |
\end{align} 
where $\bfh'_i$ refers to the $i$-th column of $H'$.

The following function will be used in the proof of Theorem~\ref{th:cardstated}. Let $I : GF(2)^{2\ell} \times \cA \to \{0, 1\}$ denote the indicator function where for $\bfx \in GF(2)^{2 \ell}$ and $\bfa \in \cA$,
\begin{equation}\label{eq:indicator1} I( {\bfx}, {\bfa} )=
\begin{cases} 1 & \mbox{if } \gamma(\bfx)={\bfa}, \mbox{ } \\ 0 & \mbox{} \mbox{otherwise} \end{cases}
\end{equation}
where $\gamma$ is as defined in (\ref{eq:gamma2}).

We are now ready for the main result of this section.
\begin{table*}
\center{\caption{Comparison of sizes of grain-error-correcting codes with the lower bound from Corollary~\ref{cor:hardcard}   \label{table:results3}}
\begin{tabular}{|c|cc|cc|cc|}
	\hline
Length  &  \multicolumn{2}{c|}{$t=2$}  & \multicolumn{2}{c|}{$t=3$} 	&  \multicolumn{2}{c|}{$t=4$}           \\
             &  {Example~\ref{ex:coloring}*} & Corollary~\ref{cor:hardcard} &  {Example~\ref{ex:coloring}*} & Corollary~\ref{cor:hardcard}   			&  {Example~\ref{ex:coloring}*} & Corollary~\ref{cor:hardcard} \\
	\hline
11          & 68 &  34                                        			    & - & -                                                                &  - & -                          \\
13          &  {136} & 46                                      		            & - & -                                                                &  - & -              \\
15          & 312 & 148                                    			   & 260 & 64                                    & - &  -                   \\
17          & 836 & 552                                    &  {520} & 86                                    & - & -   \\
19          & 2636 & 2170                                &  {1040} & 116                                & 1028 & 116  \\
21          & 9376 & 8642                                & 2144 & 356                                &  {2056} & 156    \\
23          & 35648 & 34534                            & 4688 & 1314                              &  {4112} & 208  \\
25          &  {71296} & 46044        		    	  &  {9376} & 1754                              & 8320 & 634 \\
27          & 190912 & 184130                        & 20808 & 6868                            & 17216 & 2340 \\
29          & 747520 & 736466                        & 53460 & 27324    		      &  {34432} & 3122 \\
	\hline
\end{tabular}\label{table:results3}}
\end{table*}

\begin{theorem}\label{th:cardstated} For any $\bfb \in \cA$, $f(\bfb) = \frac{1}{3^r} \sum_{\bfa \in \cA}  {\langle {\bfb}, {\bfa} \rangle} 4^{\beta({\bfa})} $.
\end{theorem}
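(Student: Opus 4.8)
The plan is to compute $f(\bfb)$ by Fourier inversion on the group $\cA = GF(3)^r$. First I would express $f$ as a sum of the indicator function $I$ over all binary vectors, namely $f(\bfb) = \sum_{\bfx \in GF(2)^{2\ell}} I(\bfx, \bfb)$, and then write $I(\bfx, \bfb)$ itself via the inverse Fourier transform: for each fixed $\bfx$, the function $\bfb \mapsto I(\bfx,\bfb)$ is supported on the single point $\gamma(\bfx)$, so $I(\bfx, \bfb) = \frac{1}{3^r} \sum_{\bfa \in \cA} \textless \bfa, \bfb \textgreater \textless \bfa, -\gamma(\bfx) \textgreater$ (this is just orthogonality of characters). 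Substituting $\gamma(\bfx) = \sum_{i=1}^\ell x_{2i} \bfh_i' + \sum_{k=1}^\ell 2 x_{2k-1} \bfh_k'$ and swapping the order of summation, I would get
$$ f(\bfb) = \frac{1}{3^r} \sum_{\bfa \in \cA} \textless \bfa, \bfb \textgreater \sum_{\bfx \in GF(2)^{2\ell}} \prod_{i=1}^{\ell} \textless -\bfa, x_{2i} \bfh_i' \textgreater \textless -\bfa, 2 x_{2k-1} \bfh_k' \textgreater. $$

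Next, I would recognize that the inner sum over $\bfx \in GF(2)^{2\ell}$ factors as a product over the $\ell$ coordinate-pairs $(x_{2i-1}, x_{2i})$, each ranging over $\{0,1\}^2$. For the $i$th pair, summing $\textless -\bfa, 2 x_{2i-1} \bfh_i' \textgreater \textless -\bfa, x_{2i} \bfh_i' \textgreater$ over the four choices of $(x_{2i-1}, x_{2i})$ gives exactly $1 + \textless -\bfa, \bfh_i' \textgreater + \textless -\bfa, 2\bfh_i' \textgreater + \textless -\bfa, \bfh_i' \textgreater \textless -\bfa, 2\bfh_i' \textgreater = F(\bfa, \bfh_i')$ in the notation of Lemma~\ref{lem:cardtool}. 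By that lemma, $F(\bfa, \bfh_i')$ equals $4$ when $\bfa^T \cdot \bfh_i' \equiv 0 \bmod 3$ and $1$ otherwise, so $\prod_{i=1}^\ell F(\bfa, \bfh_i') = 4^{\beta(\bfa)}$ by the definition of $\beta$. This yields $f(\bfb) = \frac{1}{3^r} \sum_{\bfa \in \cA} \textless \bfa, \bfb \textgreater 4^{\beta(\bfa)}$, which (using $\textless \bfb, \bfa \textgreater = \textless \bfa, \bfb \textgreater$ from the definition of the pairing) is the claimed formula.

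I expect the main obstacle to be purely bookkeeping: getting the Fourier-inversion normalization correct and being careful that the character pairing appears with the right sign inside $I$, so that after the coordinate-pair factorization the local sum matches $F(\bfa,\bfh_i')$ on the nose rather than $F(\bfa, -\bfh_i')$ or a conjugate. Since $\beta$ is defined symmetrically in $\pm \bfa^T\cdot\bfh_i'$ (the condition is $\equiv 0 \bmod 3$), any residual sign ambiguity on the $\bfh_i'$ side washes out, so the only real care needed is the $\textless \bfb, \bfa\textgreater$ versus $\textless \bfa, \bfb\textgreater$ distinction in the outer sum — and these agree anyway. The rest is a direct application of Lemma~\ref{lem:cardtool} and the factorization of a sum of products into a product of sums.
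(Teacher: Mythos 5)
Your proposal is correct and follows essentially the same route as the paper: the inner sum $\sum_{\bfx}\textless -\bfa,\gamma(\bfx)\textgreater$ you produce after expanding the indicator by character orthogonality is exactly the Fourier transform $\hat{f}(\bfa)$ that the paper computes first and then inverts, and both arguments evaluate it by factoring over the $\ell$ coordinate pairs and invoking Lemma~\ref{lem:cardtool} to get $4^{\beta(\bfa)}$. The only difference is the order of the bookkeeping (orthogonality applied pointwise versus transform-then-invert), plus the symmetry $\textless\bfb,\bfa\textgreater=\textless\bfa,\bfb\textgreater$, which you correctly note.
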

\begin{proof} Consider $\bfc \in \cA$. As in \cite{Mceliece}, we proceed by computing the Fourier transform $\hat{f}(\bfc)$ (as defined as in Section~\ref{subsec:fourier}). First note that from (\ref{eq:indicator1}), we can write $f(\bfa) = \sum_{\bfx \in GF(2)^{2 \ell}} I({\bfx}, {\bfa})$ where $\bfa \in \cA$. We have
\begin{align*}
\hat{f}({\bfc}) &= \sum_{{\bfa} \in \cA}  {\langle {\bfc}, -{\bfa} \rangle} f({\bfa}) \\
&= \sum_{{\bfa} \in \cA}  {\langle -{\bfc}, {\bfa} \rangle} f({\bfa}) \\
&=  \sum_{{\bfa} \in \cA}  {\langle -{\bfc}, {\bfa} \rangle} \sum_{x \in GF(2)^{2 \ell}} I({\bfx}, {\bfa}) \\
&= \sum_{{\bfa} \in \cA} \sum_{\bfx \in GF(2)^{2\ell}}  {\langle -{\bfc}, {\bfa} \rangle} I( {\bfx}, {\bfa} ) \\
&= \sum_{\bfx \in GF(2)^{2\ell}} \sum_{{\bfa} \in \cA}  {\langle -{\bfc},{\bfa} \rangle}  I({\bfx}, {\bfa}).
\end{align*}
Note that for a fixed $\bfx \in GF(2)^{2\ell}$, $\sum_{{\bfa} \in \cA}  {\langle -{\bfc},{\bfa} \rangle}  I({\bfx}, {\bfa}) =  {\langle -{\bfc}, \gamma(\bfx) \rangle}$. Then,
\begin{align*}
\hat{f}({\bfc}) &=  \sum_{\bfx \in GF(2)^{2\ell}}  {\langle -{\bfc}, \gamma(\bfx) \rangle} \\
&= \sum_{\bfx \in GF(2)^{2\ell}} \langle - {{\bfc}, x_1 {\bfh}_1 + \cdots + x_{2\ell} {\bfh_{2\ell}} \rangle} \\
&=\sum_{\bfx \in GF(2)^{2\ell}}  {\langle -{\bfc}, x_1 {\bfh_1} \rangle} \cdots  {\langle -{\bfc}, x_{2\ell} {\bfh_{2\ell}} \rangle},
\end{align*}
where the last equality follows from the property that for $\bfa_1, \bfa_2, \bfa_3 \in \cA$, $ {\langle -{\bfa_1}, {\bfa_2} + {\bfa_3} \rangle} =  {\langle -{\bfa_1}, {\bfa_2} \rangle}  {\langle -{\bfa_1}, {\bfa_3} \rangle}$. 

Notice that each $x_i$ is equal to either $0$ or $1$ (where $1 \leq i \leq 2\ell$). If $x_i=0$, then clearly ${\langle -{\bfc}, x_i {\bfh_i} \rangle} = 1$. If $x_i=1$, ${\langle -{\bfc}, x_i {\bfh_i} \rangle} =  {\langle -{\bfc}, {\bfh_i} \rangle}$. As a result of this observation, we can write $\sum_{\bfx \in GF(2)^1} \langle -c, x_1 {\bfh}_1 \rangle = 1 + {\langle -{\bfc}, {\bfh}_1 \rangle}$. We now show that for $k \geq 1$, we can write 
\begin{align}\label{eq:th5res}
\sum_{\bfx \in GF(2)^{k}}  {\langle -{\bfc}, x_1 {\bfh_1} \rangle} \cdots  {\langle -{\bfc}, x_{k} {\bfh_{k}} \rangle} &=\prod_{i=1}^{k} ( 1 +  {\langle -{\bfc}, {\bfh}_i \rangle} ).
\end{align}
We will show the result by induction on $k$. The case where $k=1$ has already been proven. Suppose the result holds for all $k \leq v-1$ and consider the case where $k=v$. Let $$K = \sum_{\bfx \in GF(2)^{k-1}}  {\langle -{\bfc}, x_1 {\bfh_1} \rangle} \cdots  {\langle -{\bfc}, x_{k-1} {\bfh_{k-1}} \rangle}.$$ Similar to before, notice that if $x_{k} = 0$, then $$\sum_{\bfx \in GF(2)^{k}, x_{k}=0}  {\langle -{\bfc}, x_1 {\bfh_1} \rangle} \cdots  {\langle -{\bfc}, x_{k} {\bfh_{k}} \rangle} = K.$$ Otherwise if $x_{k}=1$, then $$\sum_{\bfx \in GF(2)^{k}, x_{k}=1}  {\langle -{\bfc}, x_1 {\bfh_1} \rangle} \cdots  {\langle -{\bfc}, x_{k} {\bfh_{k}} \rangle} = K {\langle -{\bfc}, {\bfh_{k}} \rangle}.$$ Applying the inductive hypothesis, we can write $K = \prod_{i=1}^{k-1} ( 1 +  {\langle -{\bfc}, {\bfh}_i \rangle})$. Collecting terms then gives 
$$ \sum_{\bfx \in GF(2)^{k}}  {\langle -{\bfc}, x_1 {\bfh_1} \rangle} \cdots  {\langle -{\bfc}, x_{k} {\bfh_{k}} \rangle} = K(1+{\langle -{\bfc}, {\bfh_{k}} \rangle}),$$
which after substituting the value for $K$, gives the expression in (\ref{eq:th5res}). From (\ref{eq:th5res}), we can write 
$$ \hat{f}({\bfc}) = \prod_{i=1}^{2\ell} ( 1 + \langle -{\bfc}, {\bfh}_i \rangle). $$

Let $j$ be an integer such that $1 \leq j \leq \ell$. Then from the definition of $H$ (see also (\ref{eq:gamma2})) we can write $( 1 +  {\langle -{\bfc}, {\bfh}_{2j} \rangle} ) ( 1 +  {\langle -{\bfc}, {\bfh}_{2j-1} \rangle} ) = ( 1 +  {\langle -{\bfc}, {\bfh}_j' \rangle} ) ( 1 +  {\langle -{\bfc}, 2{\bfh}_{j}' \rangle} )$. Thus, we can rewrite $\hat{f}(\bfc)$ in terms of the $\bfh_i'$ terms so that
\begin{align}
\hat{f}({\bfc}) &= \prod_{i=1}^{\ell} ( 1 +  {\langle -{\bfc}, {\bfh}_i' \rangle} +  {\langle -{\bfc}, 2 {\bfh}_i' \rangle} + \nonumber \\
& \ \ \  \ \ \ \  \ \ \  \ \ \ \ \ \   {\langle -{\bfc}, {\bfh}_i' \rangle}  {\langle -{\bfc}, {2\bfh}_i' \rangle} ) \nonumber \\
&= \prod_{i=1}^{\ell} F({\bfc}, {\bfh}_i' ) \nonumber \\ 
&= 4^{\beta( {\bfc} ) }. \nonumber
\end{align}
The equality follows from Lemma~\ref{lem:cardtool}. Recall, from Section~\ref{subsec:fourier} that the inverse Fourier transform of $\hat{f}$ is $f(\bfb) = \frac{1}{3^{r}} \sum_{{\bfa} \in \cA}  {\langle \bfa, \bfb \rangle} \hat{f}(\bfa)$. Thus, since $\hat{f}(\bfa) = 4^{\beta( {\bfa} )}$, we have that for an element $\bfb \in \cA$,
\begin{align*}
f(\bfb) &= \frac{1}{3^{r}} \sum_{{\bfa} \in \cA}  {\langle \bfa, \bfb \rangle} \hat{f}(\bfa) \\
&= \frac{1}{3^{r}} \sum_{{\bfa} \in \cA}  {\langle \bfa, \bfb \rangle} 4^{\beta({\bfa})}.
\end{align*}
\end{proof}


\begin{corollary}\label{cor:biggestz} For any $\bfb \in \cA$, $f({\bfb}) \leq f({\bf0})$.  \end{corollary}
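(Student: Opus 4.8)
## Proof Proposal for Corollary~\ref{cor:biggestz}

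The plan is to show that for any $\bfb \in \cA$, the real number $f(\bfb)$ — which counts binary vectors mapping to $\bfb$ under $\gamma$ — is maximized at $\bfb = \bf0$. The natural route is to use the formula just established in Theorem~\ref{th:cardstated}, namely $f(\bfb) = \frac{1}{3^r} \sum_{\bfa \in \cA} \langle \bfb, \bfa \rangle\, 4^{\beta(\bfa)}$, and estimate the sum termwise. First I would observe that $4^{\beta(\bfa)}$ is a nonnegative real weight that does \emph{not} depend on $\bfb$, and that the only $\bfb$-dependence sits in the root-of-unity factor $\langle \bfb, \bfa \rangle = \zeta_3^{\bfb^T \cdot \bfa}$, which is a complex number of modulus $1$. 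Applying the triangle inequality directly gives
\begin{align*}
f(\bfb) = \frac{1}{3^r}\left| \sum_{\bfa \in \cA} \langle \bfb, \bfa \rangle\, 4^{\beta(\bfa)} \right| \leq \frac{1}{3^r} \sum_{\bfa \in \cA} \left| \langle \bfb, \bfa \rangle \right| 4^{\beta(\bfa)} = \frac{1}{3^r} \sum_{\bfa \in \cA} 4^{\beta(\bfa)},
\end{align*}
where the first equality uses that $f(\bfb)$ is a nonnegative integer hence equals its own modulus.

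Next I would identify the right-hand side as exactly $f(\bf0)$. Indeed, setting $\bfb = \bf0$ in the Theorem~\ref{th:cardstated} formula makes every factor $\langle \bf0, \bfa \rangle = \zeta_3^{0} = 1$, so $f(\bf0) = \frac{1}{3^r} \sum_{\bfa \in \cA} 4^{\beta(\bfa)}$. Chaining the two displays yields $f(\bfb) \leq f(\bf0)$, which is the claim. (As a sanity check, $f(\bf0) \geq 1$ since $\bf0 = \gamma(\bf0)$, consistent with $\hat f(\bf0) = 4^{\beta(\bf0)} = 4^{\ell}$ and $\sum_{\bfb} f(\bfb) = 2^{2\ell}$.)

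The argument is short and the only subtlety — which I expect to be the ``main obstacle'' only in the sense of needing an explicit sentence — is justifying that the apparently complex-valued sum $\sum_{\bfa} \langle \bfb, \bfa\rangle 4^{\beta(\bfa)}$ is in fact a nonnegative real number, so that we may replace it by its modulus before bounding. This follows because $f(\bfb)$ is by definition~(\ref{eq:fdef}) a cardinality, hence a nonnegative integer; the Fourier inversion formula of Theorem~\ref{th:cardstated} merely re-expresses this integer, so the imaginary parts of the summands must cancel and the real part is $\geq 0$. With that observation in place, the triangle inequality does all the work and no case analysis on $\cA$ or on $\bfb$ is required. An alternative, slightly more self-contained phrasing avoids even mentioning moduli: each term $\langle \bfb,\bfa\rangle 4^{\beta(\bfa)}$ has real part at most $4^{\beta(\bfa)}$, so $f(\bfb) = \mathrm{Re}\, f(\bfb) \le \frac{1}{3^r}\sum_{\bfa} 4^{\beta(\bfa)} = f(\bf0)$; I would likely present it this way to sidestep any discussion of why the sum is real.
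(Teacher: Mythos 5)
Your proposal is correct and follows essentially the same route as the paper: both apply Theorem~\ref{th:cardstated} together with the bound $|\langle \bfb, \bfa \rangle| \leq 1$ (equivalently, taking real parts) to dominate each term by $4^{\beta(\bfa)}$ and identify the resulting sum as $f(\bf0)$. Your extra remark that $f(\bfb)$ is a nonnegative integer, so the complex sum may be replaced by its modulus or real part, just makes explicit a step the paper leaves implicit.
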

\begin{proof} As in \cite{Mceliece}, this is because for any ${\bfa},{\bfb} \in \cA$,  $|  {\langle {\bfa}, {\bfb} \rangle} | \leq 1$. Thus, $f(\bfb) = \frac{1}{3^r} \sum_{\bfa \in \cA}  \langle {\bfa}, {\bfb} \rangle 4^{\beta({\bfa})} \leq \frac{1}{3^r} \sum_{\bfa \in \cA} 4^{\beta({\bfa})} = f({\bf0}).$ \end{proof}

Thus, choosing $\bfa={\bf0}$ in Construction~\ref{constructc} maximizes the cardinality of the resulting code. The following lemma is another consequence of Theorem~\ref{th:cardstated}.

\begin{lemma}\label{lem:improvedlower} 
For positive integers $r, \ell$ where $r \leq \ell$, $f({\bf0}) \geq \frac{4^{\ell}}{3^r} + 2 \left( \frac{4}{3} \right)^r - 2 \cdot \frac{4}{3}$. \end{lemma}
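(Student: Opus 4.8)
The plan is to apply Theorem~\ref{th:cardstated} with $\bfb = \bf0$, which gives the exact formula $f(\bf0) = \frac{1}{3^r}\sum_{\bfa \in \cA} 4^{\beta(\bfa)}$, and then lower-bound the sum by isolating the contributions of the elements $\bfa$ with the largest values of $\beta(\bfa)$. First I would observe that $\beta(\bf0) = \ell$, since $\bf0^T\cdot\bfh_i' \equiv 0$ for every column, contributing a term $4^{\ell}$ to the sum. Next, I would identify a large family of elements $\bfa$ for which $\beta(\bfa)$ is still as large as possible given $\bfa \neq \bf0$: since $H'$ is a parity check matrix of a code correcting $t\geq 1$ random-errors, every column $\bfh_i'$ is nonzero, so for $\bfa\neq\bf0$ we have $\beta(\bfa)\leq \ell-1$; I then want to exhibit elements achieving $\beta(\bfa)=\ell-1$. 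The natural candidates are the nonzero vectors $\bfa$ lying in the $1$-dimensional subspace orthogonal to $\bfh_1',\ldots,\bfh_{\ell-1}'$ but not to $\bfh_\ell'$ — more robustly, one picks a column, say $\bfh_\ell'$, and considers vectors $\bfa$ orthogonal to all columns except one.

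The cleaner route is to note that the orthogonal complement of the span of all $\ell$ columns has some dimension, but since $H'$ has rank $r$ (as a parity check matrix) its column space is all of $GF(3)^r$, so the only $\bfa$ orthogonal to all columns is $\bf0$. To get the $\beta(\bfa)=\ell-1$ terms, fix an index, say $i_0$, and consider the set $S_{i_0}$ of $\bfa \neq \bf0$ orthogonal to $\{\bfh_i' : i \neq i_0\}$; if these $\ell-1$ columns span $GF(3)^r$ (which holds for $\ell-1\geq r$, i.e. the given hypothesis $r < \ell$ actually gives $r \leq \ell-1$, and one can arrange the columns so that any $r$ of them span, or just argue that some choice works), then $S_{i_0}$ is contained in a space of dimension $\leq 1$, and generically (when those columns span) $S_{i_0}$ together with $\bf0$ has size exactly $3$, giving $2$ nonzero elements, each of which is \emph{not} orthogonal to $\bfh_{i_0}'$ (else it would be orthogonal to all columns, forcing $\bfa=\bf0$), hence has $\beta(\bfa)=\ell-1$. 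This furnishes $2$ terms equal to $4^{\ell-1}$. Then
$$
f(\bf0) = \frac{1}{3^r}\sum_{\bfa\in\cA} 4^{\beta(\bfa)} \geq \frac{1}{3^r}\left(4^{\ell} + 2\cdot 4^{\ell-1}\right)
$$
— wait, this does not match the target, which has $2(4/3)^r$ rather than a power of $4^{\ell}$; so instead the right family to isolate must be larger. Reconsidering: the bound $\frac{4^\ell}{3^r} + 2(4/3)^r - 2\cdot\frac43$ suggests summing $4^{\beta(\bfa)}$ over a subgroup. I would take the subgroup $\mathcal{B}$ of $\cA$ consisting of all $\bfa$ orthogonal to every column of $H'$ \emph{except possibly one fixed column} — no; rather, the factor $(4/3)^r$ with exponent $r$ suggests summing over all of $\cA$ restricted to vectors supported on coordinates seen by few columns. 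The honest approach: bound $\sum_{\bfa}4^{\beta(\bfa)} \geq \sum_{\bfa : \beta(\bfa)=\ell}4^\ell + \sum_{\bfa : \beta(\bfa) \geq \ell - 1, \text{something}}$, but to land exactly on $(4/3)^r$ one uses that $\sum_{\bfa\in\cA}4^{\beta(\bfa)} = \sum_{\bfa}\prod_{i=1}^\ell(1 + 3\cdot[\bfa^T\bfh_i'\equiv 0])$ and expands; actually $4^{\beta(\bfa)} = \prod_i(1+3\mathbf{1}[\bfa\perp\bfh_i'])$, and expanding the product and swapping sums gives $\sum_{\bfa}4^{\beta(\bfa)} = \sum_{T\subseteq[\ell]} 3^{|T|}\cdot|\{\bfa : \bfa\perp \bfh_i' \ \forall i\in T\}|$.

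So the key step is this expansion: $f(\bf0) = \frac{1}{3^r}\sum_{T\subseteq[\ell]}3^{|T|} N_T$ where $N_T = |\{\bfa\in\cA: \bfa^T\bfh_i'\equiv 0 \ \forall i\in T\}| = 3^{r - \mathrm{rank}(\{\bfh_i'\}_{i\in T})}$. The term $T=\emptyset$ gives $N_\emptyset = 3^r$; the terms with $|T|=1$ each give $N_T = 3^{r-1}$ (columns nonzero), contributing $\ell\cdot 3\cdot 3^{r-1} = \ell\cdot 3^r$; and to get a clean bound one keeps only a controlled subcollection. Since $\binom{\ell}{0}3^0 N_\emptyset/3^r = 1$ and we need $4^\ell/3^r$ to emerge, we must actually keep \emph{all} subsets $T$ with $\mathrm{rank} = 0$... no, $\mathrm{rank}(\emptyset)=0$ only for $T=\emptyset$. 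I think the intended argument is simpler and I have overcomplicated it: take the trivial lower bound $\sum_{\bfa}4^{\beta(\bfa)}\geq 4^{\beta(\bf0)} + \sum_{\bfa\in\mathcal{K}\setminus\{\bf0\}}4^{\beta(\bfa)}$ where $\mathcal{K}$ is the kernel-type subgroup of size $3^{r-?}$... Given the target form, I would: (1) invoke Theorem~\ref{th:cardstated} at $\bf0$; (2) write $f(\bf0) = \frac{1}{3^r}\sum_{\bfa}4^{\beta(\bfa)}$; (3) lower-bound by keeping the term $\bfa=\bf0$ (value $4^\ell$) plus, for the code $\cC'$ correcting $t\geq 2$ errors so $d_H(\cC')\geq 5 > 3$, deduce any $r-1$ columns have full rank $r-1$... and (4) sum $4^{\beta(\bfa)}$ over the $3^r - 3^{r-1}\cdot(\text{stuff})$...

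\textbf{The main obstacle} will be pinning down exactly which combinatorial subfamily of $\cA$ yields the clean closed form $2(4/3)^r - 2\cdot\frac43$: this requires a precise rank/independence property of small sets of columns of $H'$ (guaranteed by $\cC'$ correcting $t$ random errors, hence $d_H \geq 2t+1$, so any $2t$ columns of $H'$ are linearly independent over $GF(3)$), and then an exact count of vectors $\bfa$ orthogonal to prescribed column-subsets. I expect the argument keeps $\bfa=\bf0$ and the $2$ elements of each of $r$ suitable rank-$(r-1)$ orthogonal lines — more precisely the $2\cdot r$ minus overlap correction "$-2\cdot\frac43$" strongly signals inclusion–exclusion over $r$ one-dimensional subspaces each contributing $(4/3)^r$-scaled mass — and discards all remaining terms as nonnegative. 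Once the family is fixed the remaining estimate is a routine geometric-series computation; the conceptual content is entirely in the column-independence property and the orthogonality count.
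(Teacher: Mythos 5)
Your write-up is not a proof: it correctly starts from Theorem~\ref{th:cardstated} with $\bfb={\bf0}$, i.e.\ $f({\bf0})=\frac{1}{3^r}\sum_{\bfa\in\cA}4^{\beta(\bfa)}$ and the term $4^{\ell}$ from $\bfa={\bf0}$, but it then abandons each subsequent attempt (the $\beta=\ell-1$ elements, the expansion $4^{\beta(\bfa)}=\prod_i(1+3\cdot\mathbf{1}[\bfa\perp\bfh_i'])$) without completing any of them, and it ends by conceding that you cannot identify ``which combinatorial subfamily of $\cA$ yields the clean closed form.'' That missing identification is precisely the content of the lemma, and your final guess for it (keeping $\bfa={\bf0}$ plus two elements from each of $r$ one-dimensional orthogonal lines, with an inclusion--exclusion correction, relying on independence of columns of $H'$ coming from $d_H\geq 2t+1$) is not what makes the bound work and is not carried out.

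The paper's argument fills exactly this gap with a nested-kernel construction that needs no independence property of the columns at all. For $1\leq j\leq r-1$ let $H_j'$ be the submatrix formed by the first $r-j$ columns of $H'$ and $\cN_j=\{\bfg\in\cA:\bfg^T\cdot H_j'={\bf0}\}$; since $H_j'$ has only $r-j$ columns, trivially $|\cN_j|\geq 3^{j}$, and the $\cN_j$ are nested, so one can pick nested subsets $\cN_j'\subseteq\cN_j$ with $|\cN_j'|=3^{j}$ and set $\cT_j=\cN_j'\setminus\cN_{j-1}'$. The $\cT_j$ are pairwise disjoint, $|\cT_j|=3^{j}-3^{j-1}=2\cdot3^{j-1}$, and every $\bfu\in\cT_j$ is orthogonal to the first $r-j$ columns of $H'$, hence $\beta(\bfu)\geq r-j$. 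Keeping only $\bfa={\bf0}$ and these sets gives
\begin{align*}
f({\bf0})\;\geq\;\frac{1}{3^r}\Bigl(4^{\ell}+\sum_{j=1}^{r-1}(3^{j}-3^{j-1})\,4^{\,r-j}\Bigr)
\;=\;\frac{4^{\ell}}{3^r}+2\Bigl(\frac{4}{3}\Bigr)^{r}-2\cdot\frac{4}{3},
\end{align*}
the last step being the geometric-series evaluation. Note the contrast with your sketch: the sets you would keep have exponentially growing sizes $2\cdot3^{j-1}$ with correspondingly decreasing guaranteed $\beta$-values $r-j$ (this is where $2(4/3)^r-\frac{8}{3}$ comes from), rather than a fixed number of lines each contributing at the same level; and the only rank fact used is that a matrix with $r-j$ columns has rank at most $r-j$. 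Your product-expansion identity $\sum_{\bfa}4^{\beta(\bfa)}=\sum_{T\subseteq\{1,\ldots,\ell\}}3^{|T|}\,|\{\bfa:\bfa\perp\bfh_i'\ \forall i\in T\}|$ is correct and could be pushed to a bound of this type, but as written you never execute the required rank estimates or the final summation, so the proposal has a genuine gap.
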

\begin{proof} From Theorem~\ref{th:cardstated}, we have that $f({\bf0}) = \frac{1}{3^r} \sum_{{\bfa} \in \cA} 4^{\beta({\bfa})} $. Clearly, $\beta({\bf0}) = \ell$ and so $f({\bf0}) = \frac{1}{3^r} \left( 4^{\ell} + \sum_{{\bfa} \in \cA, {\bfa} \neq {\bf0}}  4^{\beta( {\bfa} )} \right)$. We define the sets $\cT_0 = \{ {\bf0} \}, \cN_0 = \{ {\bf0} \},$ and $\cN'_0 = \{ {\bf0} \}$.



In the following we define the sets $\cN_j, \cN_j',$ and $\cT_j$ recursively (starting at $j=1$) where $j$ is an integer such that $1 \leq j \leq r-1$. Consider the sub-matrix $H_j'$ consisting of the first $r-j$ columns of $H'$ where $H'$ is the parity check matrix for $\cC'$ with columns $h'_i$ and $1 \leq i \leq \ell$. Let $\cN_j=\{ \bfg \in \cA : \bfg^T \cdot H_j' = \bf0 \}$. Notice that since $H_j'$ has rank at most $r-j$, $|\cN_j| \geq 3^j$. Let $\cN_j' \subseteq \cN_j$ be such that  $|\cN_j'|= 3^{j}$ where we require for $j \geq 1$ that $\cN_{j-1}' \subset \cN'_{j}$. Let $\cT_j = \cN_j' \setminus \cN_{j-1}'$. Under this setup, for any $0 \leq k < j$, $\cT_j \cap \cT_k = \emptyset$. Now, for any $\cT_j$, we have 

$$|\cT_j| = |\cN_j'| - |\cN_{j-1}'| = 3^j - 3^{j-1}.$$

Notice that for any $\bfu \in \cT_j$,  $\beta(\bfu) = | \{ 1 \leq i \leq \ell: \bfu^T \cdot \bfh'_i = {\bf0} \} | \geq r-j$. Then since the sets $\cT_0, \cT_1, \ldots, \cT_{r-1}$ are disjoint, we can use Theorem~\ref{th:cardstated} with $\bfb={\bf0}$ to obtain $f({\bf0}) = \frac{1}{3^r} \sum_{\bfa \in \cA} 4^{\beta({\bfa})} \geq \frac{1}{3^r} |\cT_0| 4^{\ell} + \frac{1}{3^r} \sum_{j=1}^{r-1} |\cT_j| 4^{r-j}$. Finally,
\begin{align*}
f({\bf0}) &\geq \frac{1}{3^r} |\cT_0| 4^{\ell} + \frac{1}{3^r} \sum_{j=1}^{r-1} |\cT_j| 4^{r-j} \\
&\geq \frac{1}{3^r}4^{\ell} + \frac{1}{3^r} \sum_{j=1}^{r-1} (3^j - 3^{j-1}) 4^{r-j} \\
&=\frac{1}{3^r} 4^{\ell} + \frac{2 \cdot 4^{r-1}}{3^r} \sum_{j=0}^{r-2} \left( \frac{3}{4} \right)^j \\
&=\frac{1}{3^r} 4^{\ell} + 2 \left( \frac{4}{3} \right)^r - 2 \cdot \frac{4}{3},
\end{align*}
and therefore the proof is complete. \end{proof}

We summarize the result from Lemma~\ref{lem:improvedlower} with the following corollary.

\begin{corollary}\label{cor:hardcard} Let $\cC'$ be a $t$-unrestricted-error-correcting ternary code of length $\ell= \frac{n}{2}$ (where $n$ is an even integer) with a parity check matrix $H'$ of dimension $r$. For $\bfa \in \cA$, let $\cC_{\bfa}$ be a code created according to Construction~\ref{constructc} of length $n$ with the constituent code $\cC'$. Then for any $\bfa \in \cA$, $|\cC_{\bfa} | \leq | \cC_{\bf0} |$ and $|\cC_{\bf0}| \geq \lceil \frac{2^n}{3^r} + 2 \left( \frac{4}{3} \right)^r - \frac{8}{3} \rceil.$ \end{corollary}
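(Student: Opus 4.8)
The plan is to obtain Corollary~\ref{cor:hardcard} as an immediate consequence of three facts already established in this section, with essentially no new computation required. The first step is to translate both assertions of the corollary into statements about the function $f$ defined in~(\ref{eq:fdef}). By the definition of $\cC_{\bfa}$ in Construction~\ref{constructc} (see~(\ref{eq:constructccond})), a vector $\bfx \in GF(2)^{2\ell}$ lies in $\cC_{\bfa}$ precisely when $H\bfx = \gamma(\bfx) = \bfa$, so $|\cC_{\bfa}| = f(\bfa)$ for every $\bfa \in \cA$; this is exactly the content of Claim~\ref{cl:connect1}. Consequently, the inequality $|\cC_{\bfa}| \leq |\cC_{\bf0}|$ is nothing but $f(\bfa) \leq f({\bf0})$, which is Corollary~\ref{cor:biggestz}.

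It then remains only to record the explicit lower bound on $|\cC_{\bf0}| = f({\bf0})$. For this I would invoke Lemma~\ref{lem:improvedlower}, whose hypothesis $r < \ell$ is inherited from the hypotheses of the corollary (the parity check matrix $H'$ of a nontrivial length-$\ell$ ternary code has dimension $r < \ell$). Lemma~\ref{lem:improvedlower} gives $f({\bf0}) \geq \frac{4^{\ell}}{3^r} + 2\left(\frac{4}{3}\right)^r - 2\cdot\frac{4}{3}$, and substituting $\ell = \frac{n}{2}$ turns $4^{\ell}$ into $4^{n/2} = 2^{n}$, while $2\cdot\frac{4}{3}$ simplifies to $\frac{8}{3}$. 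This yields $|\cC_{\bf0}| \geq \frac{2^n}{3^r} + 2\left(\frac{4}{3}\right)^r - \frac{8}{3}$, which is the stated bound.

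There is no real obstacle here: the corollary is a summary restatement collecting the outcomes of Claim~\ref{cl:connect1}, Corollary~\ref{cor:biggestz}, and Lemma~\ref{lem:improvedlower}. The only thing requiring care is the bookkeeping of the parameter substitution $\ell = n/2$ (so that $4^{\ell} = 2^{n}$) and confirming that the additive constants match. The substantive work — the Fourier-analytic evaluation of $\hat f$ via Lemma~\ref{lem:cardtool} and the counting argument over the nested subspaces $\cN_j'$ — has already been carried out in Theorem~\ref{th:cardstated} and Lemma~\ref{lem:improvedlower}, so the proof of the corollary itself is a short assembly.
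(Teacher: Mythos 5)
Your proposal is correct and follows essentially the same route as the paper: identify $|\cC_{\bfa}|$ with $f(\bfa)$ via Claim~\ref{cl:connect1}, apply Corollary~\ref{cor:biggestz} for $|\cC_{\bfa}| \leq |\cC_{\bf0}|$, and apply Lemma~\ref{lem:improvedlower} with $\ell = n/2$ for the explicit lower bound. Nothing is missing.
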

\begin{proof} From Claim~\ref{cl:connect1}, $| \cC_a | = f(\bfa)$. Using Corollary~\ref{cor:biggestz}, we have that for any $\bfa \in \cA$, $| \cC_{\bfa} | = f(\bfa) \leq f({\bf0}) = | \cC_{\bf0} |$. Combining Claim~\ref{cl:connect1} and Lemma~\ref{lem:improvedlower} gives that $|\cC_{\bf0}| = f({\bf0}) \geq \frac{4^{\ell}}{3^r} + 2 \left( \frac{4}{3} \right)^r - 2 \cdot \frac{4}{3}$.
\end{proof}

Thus, the previous corollary improved upon Corollary~\ref{cor:roughcard} where it was shown that for some $\bfa \in \cA$, $|\cC_{\bfa}| \geq \frac{2^n}{3^r}$. For the case of $t=2,3,4$, we compared our lower bound with the cardinality of the $t$-grain-error-correcting codes from Table~\ref{table:results2}. Each entry in Table~\ref{table:results3} contains two numbers delimited by a '/'. The first number is the cardinality of a $t$-grain-error-correcting code  {from Table~\ref{table:results2} (under the sub-column Example~\ref{ex:coloring}*)} and the second number is twice the lower bound from Corollary~\ref{cor:hardcard} (since the lower bound from Corollary~\ref{cor:hardcard} applies to a mineral code and not a grain code). It can be seen in Table~\ref{table:results3} that the difference between the bound from Corollary~\ref{cor:hardcard} and the size of the codes from Table~\ref{table:results2} is small for the $t=2$ case. 


In the next section, we return to the problem of constructing single mineral codes.

\section{General Single-Grain and Single-Mineral Codes from Construction~\ref{construct:color}}\label{sec:improvedsingle}

  {In this section, we consider single-mineral codes derived from more general colorings according to Construction~\ref{construct:color}. In Section~\ref{subsec:prelims}, we investigate a sufficient condition for codes created with Construction~\ref{construct:color} to produce large single-mineral codes. In Section~\ref{subsec:newscheme}, we describe a coloring that was found using a computerized search; for code lengths $48$ and $342$ this coloring produces new codes with large cardinalities (i.e., larger than using the alternative group codes to construct single mineral-codes).}

Recall from Construction~\ref{construct:color} in Section~\ref{subsec:improvedgrains} that the construction for a $t$-mineral-error-correcting code $\cC$ relied on two key ingredients:
\begin{enumerate}
\item a mapping $\Phi_{t,m}$ from $GF(2)^m$ to $p$ color classes (where $p$ is a prime), and
\item a $t$-unrestricted-error-correcting code $\cC_t$ over $GF(p)$.
\end{enumerate}

The basic idea behind Construction~\ref{construct:color} was to use $\Phi_{t,m}$ to map the color classes of a $p$-coloring onto the symbols of the non-binary code $\cC_t$. Notice in this section, we restrict ourselves to the case where $\Phi_{t,m}$ maps to $p$ color classes where $p$ is a prime.

Thus far, we have considered code constructions for mineral codes using Construction~\ref{construct:color} with the map $\Phi_{t,m} = \Gamma$, where $\Gamma$ is given by (\ref{eq:map}). Therefore, if Construction~\ref{construct:color} is used to create mineral codes, there are two possible directions to investigate:
\begin{enumerate}
\item discover new mappings $\Phi_{t,m}$ for $m \geq 2$, and
\item  investigate new constructions for the code $\cC_t$ that, when used in conjunction with some $\Phi_{t,m}$, result in codes with large cardinalities.
\end{enumerate}

 In this section, we focus on the first direction for the case where $t=1$, and the code $\cC_1$ is a single unrestricted-error-correcting code taken to be a Hamming code. The second item highlights a potential area of future work which we will discuss briefly in the next section.
 

\subsection{A sufficient condition for Construction~\ref{construct:color} to produce large codes}\label{subsec:prelims}
Suppose that a single-mineral code $\cC$ of length $mn$ is created according to Construction~\ref{construct:color}. Suppose that the $p$-coloring $\Phi_{1,m}$ is such that $p=m+1$ where $p$ is an odd prime and $\cC_1$ is a perfect non-binary single unrestricted-error-correcting code over $GF(p)$ of length $n$. We show that there exists a mineral code $\cC$ of length $mn$ whose cardinality is at least $\frac{2^{mn}}{mn + 1}$. Motivated by this observation, in Section~\ref{subsec:newscheme}, we consider using different coloring schemes (i.e., where $\Phi_{1,m} \neq \Gamma$) in conjunction with a perfect single-unrestricted-error correcting code. We first begin by reviewing some notation that was used in Section~\ref{subsec:improvedgrains}.

As in Section~\ref{subsec:improvedgrains}, let $\cG_{t,m}=(V,E)$ denote a simple graph where $V=GF(2)^m$, and for any $\bfx, \bfy \in V$  (where $\bfx \neq \bfy$) $(\bfx, \bfy) \in E$ if $\cB_{t,M}(\bfx) \cap \cB_{t,M}(\bfy) \neq \emptyset$. Recall that $\Phi_{t,m} : GF(2)^m \to GF(p)$ is a $p$-coloring if it assigns different elements of $GF(p)$ to adjacent vertices. From Section~\ref{sec:confusabilitygraph}, $\chi(\cG_{t,m})$ is the smallest $p$ for which a $p$-coloring is possible. Recall, the size of the largest clique in a graph $\cG$ is denoted $\varsigma(\cG)$.

The following claim will be used in the proof of Lemma~\ref{lem:simplegraph}.

\begin{claim}\label{cl:clique} For any $m \geq 2$, $\varsigma(\cG_{1,m}) \geq m+1$.  \end{claim}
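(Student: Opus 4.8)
The plan is to exhibit an explicit clique of size $m+1$ in $\cG_{1,m}$, that is, $m+1$ distinct binary vectors of length $m$ that are pairwise confusable under single mineral-errors. Recall that two vectors $\bfx,\bfy \in GF(2)^m$ are adjacent in $\cG_{1,m}$ precisely when $\cB_{1,M}(\bfx) \cap \cB_{1,M}(\bfy) \neq \emptyset$, so it suffices to produce $m+1$ vectors such that any two of them can reach a common vector by each flipping at most one coordinate in a way consistent with Definition~\ref{def:mineralerror} (a coordinate $i \geq 2$ may be flipped only if it differs from its left neighbor, and coordinate $1$ may always be flipped since for mineral-errors there is no restriction $e_1 = 0$).

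The natural candidate is the set built around the all-zeros and all-ones vectors together with the $m-1$ vectors obtained from the all-ones vector by ``sliding a block''. Concretely, I would try $\bfx_0 = (0,0,\ldots,0)$, and for $1 \le j \le m$ the vector $\bfx_j = (\underbrace{1,\ldots,1}_{j},0,\ldots,0)$ — i.e. $j$ leading ones followed by zeros. First I would check that each consecutive pair $\bfx_{j-1},\bfx_j$ differs only in coordinate $j$, and that this flip is a legal mineral-error for both (for $j\ge 2$ the relevant coordinate differs from its neighbor in the appropriate vector; for $j=1$ it is coordinate $1$, always flippable), so $\bfx_{j-1}$ and $\bfx_j$ are adjacent. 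That only gives a path, though, not a clique. The key observation I would aim to establish is that \emph{any} two of these vectors are mutually confusable: $\bfx_i$ and $\bfx_j$ with $i<j$ differ in coordinates $i+1,\ldots,j$, which is more than one coordinate as soon as $j>i+1$, so this particular family does not immediately work for non-consecutive pairs. Hence I expect to need a cleverer family.

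A family that does work: take $\bfx_0=(0,0,\dots,0)$ and $\bfy_k = \bfe_k$ for $1 \le k \le m$, the standard basis vectors (a single $1$ in position $k$). Then $\bfx_0$ and $\bfy_k$ differ only in position $k$; for $k\ge 2$, in $\bfy_k$ we have $y_{k-1}=0\ne 1=y_k$, so flipping position $k$ of $\bfy_k$ down to $0$ is a legal mineral-error, giving $\bfx_0$ — so $\bfx_0$ is adjacent to every $\bfy_k$. For two distinct $\bfy_i,\bfy_j$ ($i<j$): both can flip their single $1$ down to reach $\bfx_0=(0,\dots,0)$, the common vector — flipping position $i$ in $\bfy_i$ requires $i=1$ or $y_{i-1}=0\ne y_i$, which holds, and likewise for $j$. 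Thus all $\binom{m+1}{2}$ pairs among $\{\bfx_0,\bfy_1,\dots,\bfy_m\}$ share the common image $(0,\dots,0)$, so this set of $m+1$ distinct vectors is a clique in $\cG_{1,m}$, proving $\varsigma(\cG_{1,m}) \ge m+1$.

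The main obstacle is simply the bookkeeping of the mineral-error legality condition: one must be careful that the coordinate being flipped satisfies $x_i \ne x_{i-1}$ \emph{in the vector from which the error originates}, and that coordinate $1$ is unconstrained (this is exactly where mineral-errors differ from grain-errors and is what makes the construction go through). With the basis-vector family above this check is immediate because each $\bfy_k$ has a single isolated $1$, and $(0,\dots,0)$ is reached by every vector flipping down its unique nonzero coordinate; distinctness of the $m+1$ vectors is clear. So once the right family is identified the proof is a short verification.
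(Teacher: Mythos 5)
Your final argument is correct and is essentially the paper's own proof: the clique you exhibit, $\{\mathbf{0}\} \cup \{\bfe_k : 1 \le k \le m\}$, is exactly the set of vectors of weight at most one, and the adjacency check is the same observation that every such vector has the all-zeros vector in its single-mineral error-ball. The detour through the prefix-of-ones family is harmless exploration; the verification you give for the weight-$\le 1$ family is complete.
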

\begin{proof} Let $\cS=\{ \bfx \in GF(2)^m : wt(\bfx) \leq 1 \}$. Since for any $\bfx \in \cS$, $\cB_{1,m}(\bfx)$ contains the all-zeros vector, it follows that $\cS$ is a clique in $\cG_{1,m}$. Since $|\cS|=m+1$, the result follows. 
\end{proof}

\begin{lemma}\label{lem:simplegraph}  Let $m$ be a positive integer. Then, $\chi(\cG_{1,m}) = m+1.$  \end{lemma}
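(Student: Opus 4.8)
The plan is to establish $\chi(\cG_{1,m}) = m+1$ by proving matching lower and upper bounds. The lower bound $\chi(\cG_{1,m}) \geq m+1$ is immediate: by Claim~\ref{cl:clique}, $\cG_{1,m}$ contains a clique of size $m+1$, and from Section~\ref{sec:confusabilitygraph} we have $\varsigma(\cG_{1,m}) \leq \chi(\cG_{1,m})$. So the real work is the upper bound $\chi(\cG_{1,m}) \leq m+1$, i.e., exhibiting an explicit proper $(m+1)$-coloring of $\cG_{1,m}$.

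For the upper bound, the idea is to use a syndrome-type map into $GF(q)$ for $q$ a suitable value near $m+1$, but since we want exactly $m+1$ colors and $m+1$ need not be prime, I would instead work combinatorially. First I would characterize when $\bfx, \bfy \in GF(2)^m$ are adjacent in $\cG_{1,m}$: since $\cB_{1,M}(\bfx) \cap \cB_{1,M}(\bfy) \neq \emptyset$ means some vector $\bfz$ is reachable from both $\bfx$ and $\bfy$ by a single mineral-error (weight $\leq 1$, flipping a position $i$ only where the two coordinates $i-1, i$ differ), adjacency forces $d_H(\bfx,\bfy) \leq 2$, and when $d_H(\bfx,\bfy) = 2$ the two differing positions must be "compatible" mineral positions — essentially the situation excluded by Lemma~\ref{lem:inf}. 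Concretely, $\bfx \sim \bfy$ iff either $d_H(\bfx, \bfy) = 1$, or $d_H(\bfx,\bfy)=2$ with the two differing coordinates $i < j$ such that flipping creates the same vector, which by the analysis behind Lemma~\ref{lem:inf} rules out the "$(0,0)$ vs $(1,1)$ on adjacent positions" obstruction but allows everything else. Then I would define the coloring $\Phi_{1,m}(\bfx)$ using a weighted sum of coordinates modulo $m+1$ — for instance $\Phi_{1,m}(\bfx) = \sum_{k=1}^m k\, x_k \bmod (m+1)$ — and verify it assigns distinct colors to adjacent vertices by checking the two adjacency cases: for $d_H = 1$ the color changes by $\pm k \not\equiv 0$, and for $d_H = 2$ on compatible positions the color changes by a nonzero residue unless the positions are of the forbidden adjacent type, which is precisely when they are \emph{not} adjacent in $\cG_{1,m}$.

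The main obstacle I anticipate is pinning down the adjacency relation precisely enough — in particular, correctly handling the $d_H(\bfx,\bfy)=2$ case and showing that the linear coloring modulo $m+1$ separates exactly the adjacent pairs. When the two differing positions are $i$ and $i+1$ (consecutive), a naive weighted sum gives a color difference of $1$ (always nonzero mod $m+1$), but such pairs may or may not be adjacent depending on the bit pattern; conversely when positions are $i$ and $j$ with $j - i \geq 2$ the difference $|j - i| \bmod (m+1)$ is nonzero for $j - i \leq m-1$, which is automatic. I would need to confirm that no adjacent pair gives color difference $\equiv 0 \bmod (m+1)$; since all differences of the form $\pm k$ or $\pm k \pm k'$ for $1 \leq k, k' \leq m$ that equal $0$ modulo $m+1$ correspond to coordinate configurations that the Lemma~\ref{lem:inf} obstruction renders non-adjacent, the coloring is proper. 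Alternatively, if this case analysis becomes unwieldy, I would fall back on a cleaner approach: interpret $\cG_{1,m}$ via the quotient $GF(2)^m / \langle \mathbf{1} \rangle$ and relate it to the confusability structure of an asymmetric/single-error channel on $m$ bits, where the chromatic number $m+1$ follows from a perfect or near-perfect code argument analogous to the Hamming/Constantin--Rao bound. Combining the lower bound $m+1 \leq \chi(\cG_{1,m})$ with the constructed coloring gives $\chi(\cG_{1,m}) = m+1$, completing the proof.
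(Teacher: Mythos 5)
Your lower bound is exactly the paper's: Claim~\ref{cl:clique} plus $\varsigma(\cG_{1,m})\leq\chi(\cG_{1,m})$ gives $\chi(\cG_{1,m})\geq m+1$, and that part is fine. The gap is in the upper bound. The specific coloring you propose, $\Phi_{1,m}(\bfx)=\sum_{k=1}^m k\,x_k \bmod (m+1)$ with the natural weights, is \emph{not} a proper coloring of $\cG_{1,m}$, and the step where you dismiss the zero-sum case is where the argument fails. Take $m=3$ (colors mod $4$, weights $1,2,3$): $\bfx=(1,0,1)$ and $\bfy=(0,0,0)$ both get color $0$ since $1+3\equiv 0\pmod 4$, yet $(1,0,0)\in\cB_{1,M}(\bfx)\cap\cB_{1,M}(\bfy)$ (flip position $3$ of $\bfx$, allowed because $x_3\neq x_2$; flip position $1$ of $\bfy$, always allowed for a mineral-error), so $\bfx$ and $\bfy$ are adjacent in $\cG_{1,3}$. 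The same failure occurs away from position $1$, e.g.\ for $m=5$ the vectors $(1,1,0,1,0)$ and $(1,0,0,0,0)$ share color $1$ and collide at $(1,1,0,0,0)$. Your claim that every same-colored pair with differing positions $k,k'$ and $k+k'\equiv 0\pmod{m+1}$ is rendered non-adjacent by the Lemma~\ref{lem:inf} obstruction is false: that obstruction only bites when the two differing coordinates are \emph{consecutive} and carry $(0,0)$ versus $(1,1)$; when an inverse pair of weights sits at non-adjacent positions, same-colored adjacent vertices exist, as above.

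The missing ingredient is precisely the ordering condition of Construction~\ref{grain-construct}, which the paper's proof uses through Corollary~\ref{cor:singlemin}: take an Abelian group $\cA$ of order $m+1$, order its nonzero elements so that each element is adjacent to its inverse, and color $\bfx\in GF(2)^m$ by the syndrome $\sum_{k=1}^m x_k g_{k+1}\in\cA$. Then within a color class, $d_H=1$ is impossible (all $g_k\neq 0$), the $d_H=2$ case with $N(\bfx,\bfy)=1$ forces two equal group elements, and the $d_H=2$ case with $N(\bfx,\bfy)=2$ forces the two differing positions to be consecutive with patterns $(1,1)$ versus $(0,0)$, so no single mineral-error collision is possible; hence each of the $m+1$ classes is a single-mineral code, i.e.\ an independent set, and they partition $GF(2)^m$, giving $\chi(\cG_{1,m})\leq m+1$. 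Your proof can be repaired by replacing the identity weighting with any arrangement of $1,\dots,m$ in which $k$ and $m+1-k$ occupy adjacent positions (for $m=3$, e.g.\ weights $1,3,2$) and then redoing your two-case check; as written, the $d_H=2$ verification is wrong. The fallback sketch via a quotient and a ``perfect-code argument'' is too vague to substitute for this.
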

\begin{proof} We first show that $\chi(\cG_{1,m}) \leq m+1$. Suppose $\cA$ is an Abelian group. Let $a \in \cA$ and consider a single-grain code $\cC^{\cA}_{a}$ of length $|\cA| = m+1$ created using Construction~\ref{grain-construct}. Let $\tilde{\cC}^{\cA}_a$ be the group code of length $m$ that is the result of shortening the codewords in $\cC^{\cA}_a$ on the first bit. From Corollary~\ref{cor:singlemin}, $\tilde{\cC}^{\cA}_a$ is a single-mineral code. Assign to every $\bfx \in \tilde{\cC}^{\cA}_a$ the same number from $\{0, 1, \ldots, m \}$. Repeating this process for every value of $a \in \cA$ (and using a different number for different values of $a$), results in an $(m+1)$-coloring on the graph $\chi(\cG_{1,m})$ since there are $|\cA|=m+1$ choices for $a$.

Recall from Section~\ref{sec:confusabilitygraph} that $\chi(\cG_{1,m}) \geq \varsigma(\cG_{1.m})$ where $\varsigma(\cG_{1,m})$ is the maximum size of any clique in the graph $\cG_{1,m}$. From Claim~\ref{cl:clique}, we have $\chi(\cG_{1,m}) \geq \varsigma(\cG_{1,m}) \geq m+1$ and so $\chi(\cG_{1,m}) = m + 1$.  \end{proof}

The following theorem is similar to Corollary~\ref{cor:roughcard}.

\begin{theorem} Let $p$ be a prime number and $r$ a positive integer where $n = \frac{p^r - 1}{p-1}$ and $m=p-1$. Then there exists a single-mineral code $\cC$ of length $mn$ where $|\cC| \geq \frac{2^{mn}}{mn + 1}$ from Construction~\ref{construct:color}. \end{theorem}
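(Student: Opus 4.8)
The plan is to instantiate Construction~\ref{construct:color} with $m=p-1$, with $\Phi_{1,m}$ the optimal coloring supplied by Lemma~\ref{lem:simplegraph}, and with the constituent code $\cC_1$ taken to be a well-chosen \emph{coset} of the $p$-ary Hamming code; the lower bound on $|\cC|$ will then come from a pigeonhole/averaging argument over those cosets. First I would record the arithmetic: since $m=p-1$ and $n=\frac{p^r-1}{p-1}$, we get $mn=p^r-1$, hence $mn+1=p^r$, so the target inequality is $|\cC|\geq \frac{2^{mn}}{p^r}$. By Lemma~\ref{lem:simplegraph}, $\chi(\cG_{1,m})=m+1=p$, so there is a $p$-coloring $\Phi_{1,m}:GF(2)^m\to GF(p)$; fix one. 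Let $Q\subseteq GF(p)^n$ be the $[n,n-r,3]_p$ Hamming code, which is a perfect single-random-error-correcting code, so its $p^r$ cosets $D_1,\dots,D_{p^r}$ partition $GF(p)^n$.

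Next I would set up the count. Extending $\Phi_{1,m}$ block-wise as in Construction~\ref{construct:color} gives a map $GF(2)^{mn}\to GF(p)^n$, and for each $\bfx\in GF(2)^{mn}$ the image $\Phi_{1,m}(\bfx)$ lies in exactly one coset $D_k$. Therefore $\sum_{k=1}^{p^r}\big|\{\bfx\in GF(2)^{mn}:\Phi_{1,m}(\bfx)\in D_k\}\big|=2^{mn}$, so by averaging there is an index $k^\star$ with $\big|\{\bfx:\Phi_{1,m}(\bfx)\in D_{k^\star}\}\big|\geq \frac{2^{mn}}{p^r}=\frac{2^{mn}}{mn+1}$. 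I would then set $\cC_1:=D_{k^\star}$ and $\cC:=\{\bfx\in GF(2)^{mn}:\Phi_{1,m}(\bfx)\in \cC_1\}$.

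To conclude, I would verify the two required properties. A coset of a code preserves all pairwise Hamming distances, so $\cC_1=D_{k^\star}$ is a single-random-error-correcting code over $GF(p)$ of length $n$ (linearity is irrelevant here); hence $\cC$ is literally a code produced by Construction~\ref{construct:color}, and by Theorem~\ref{th:mainconstruct} it is a single-mineral-error-correcting code. Finally $|\cC|=\big|\{\bfx:\Phi_{1,m}(\bfx)\in \cC_1\}\big|\geq \frac{2^{mn}}{mn+1}$, which is the stated bound.

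The only genuine subtlety — and the reason the argument must pass through cosets rather than directly estimating $|\cC|$ when $\cC_1=Q$ — is that the color classes $\Phi_{1,m}^{-1}(0),\dots,\Phi_{1,m}^{-1}(p-1)$ need not have equal size, so $|\cC|=\sum_{\bfc\in\cC_1}\prod_i|\Phi_{1,m}^{-1}(c_i)|$ admits no clean closed form. Summing this over all $\bfc\in GF(p)^n$, however, factors as $\big(\sum_j|\Phi_{1,m}^{-1}(j)|\big)^n=(2^m)^n=2^{mn}$ because the color classes partition $GF(2)^m$; breaking that total sum along the $p^r$ disjoint cosets of $Q$ is exactly where perfectness of the Hamming code (equivalently, the choice $n=\frac{p^r-1}{p-1}$) is used, so that is the step I would justify most carefully. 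The degenerate case $p=2$, $m=1$ (where $\Phi_{1,1}$ is a bijection and the bound holds with equality) should be checked separately, but it causes no difficulty.
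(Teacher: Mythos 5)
Your proposal is correct and follows essentially the same route as the paper: the paper likewise takes the cosets of the perfect $p$-ary Hamming code (written there as syndrome classes $\{\bfx' : H'\bfx' = \bfa\}$), notes each coset is still a perfect single-random-error-correcting code so Construction~\ref{construct:color} applies, observes that the block-wise map $\Phi_{1,m}$ partitions $GF(2)^{mn}$ among the $p^r$ resulting mineral codes, and invokes the pigeonhole principle to get $|\cC| \geq 2^{mn}/p^r = 2^{mn}/(mn+1)$. Your explicit remarks on unequal color-class sizes and on using Lemma~\ref{lem:simplegraph} to guarantee the $p$-coloring are consistent with, and slightly more detailed than, the paper's presentation.
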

\begin{proof} Let $\cC_1$ be the constituent non-binary single-unrestricted-error-correcting code from Construction~\ref{construct:color} (where $\cC_1$ is represented by $\cC_t$ in the statement of the construction). Assume also that $\cC_1$ is of length $n$ with parity check matrix $H'$ of dimension $r$ and that $\cC_1$ is perfect. For $\bfa \in \cA= GF(p)^r$, let $\cC_{\bfa}' = \{ \bfx' \in GF(p)^{n} : H' \cdot \bfx' = \bfa  \}$. Notice that since $\cC_1$ is a perfect single unrestricted-error-correcting code then $\cC_{\bfa}'$ is also a perfect single unrestricted-error-correcting code. Thus, we can apply Construction~\ref{construct:color} to obtain a single-mineral code $\cC_{\bfa}$ where
$$ \cC_{\bfa} = \{ \bfx \in GF(2)^{mn} : \Phi_{1,m}(\bfx) \in \cC_{\bfa}' \}. $$
Since $\Phi_{1,m}$ maps every element in $GF(2)^{mn}$ to exactly one non-binary vector of length $n$, it follows that every $\bfx \in GF(2)^{mn}$ belongs to exactly one $\cC_{\bfa}$, and so the codes $\cC_{\bfa_1}, \cC_{\bfa_2}, \ldots, \cC_{\bfa_{p^r}}$ partition the space $GF(2)^{mn}$ into $p^r$ non-overlapping sets. By the pigeonhole principle, there exists a $\bfb \in \cA$, where $|\cC_{\bfb}| \geq \frac{2^{mn}}{p^r} = \frac{2^{mn}}{mn + 1}$.
\end{proof}

\subsection{A new coloring scheme}\label{subsec:newscheme}

In this section, we report on the results of using Construction~\ref{construct:color} with a new map that was identified using a computerized search. As before, we denote the color classes as $A_0$, $A_1$, \ldots, $A_{p-1}$ for the $p$-coloring $\Phi_{t,m}$ on $\cG_{t,m}$ where $A_0 \cup A_1 \cup \cdots \cup A_{p-1} = GF(2)^m$. By this setup, we assume 
\begin{enumerate}
\item $\forall j \in \{0, \ldots, p-1 \}$, $A_j \subseteq GF(2)^m$ ,
\item for any $i,j \in \{0, \ldots, p-1 \}$ where $i \neq j$ $A_i \cap A_j = \emptyset$,
\item $|A_0| \geq |A_1| \geq \hdots \geq |A_{p-1}|$.
\end{enumerate}

In this subsection, we make use of the following notation. Suppose a code $\cC$ is a $t$-mineral-error-correcting code created according to Construction~\ref{construct:color} given by
\begin{enumerate}
\item a set of $p$ color classes $D=\{ A_0, A_1, \ldots, A_{p-1} \}$ for a $p$-coloring on $\cG_{t,m}$ where $p$ is a prime,
\item the mapping $\Phi_{t,m}$ which maps vectors from $GF(2)^m$ into the symbols $\{0, 1, \ldots, p-1\}$, 
\item $\cC_t$ where $\cC_t$ is a $t$-unrestricted-error-correcting code over $GF(p)$. 
\end{enumerate}
We denote the mineral code $\cC$ as $\cC(\Phi_{t,m}, \cC_t)$. Under this setup, the map $\Phi_{t,m}$ always maps elements from the same color class to the same symbol. Furthermore, for $0 \leq i \leq p-1$ and $\bfv \in A_i$, $\Phi_{t,m}(\bfv) = i$.

In the following, we describe the color classes from a $7$-coloring on $\cG_{1,6}$ that was located with the aid of a computer search. The vectors from $GF(2)^m$ are enumerated by their decimal representation. For example, the vector $\bfx = (x_1, x_2, x_3, x_4, x_5, x_6) = (1, 0, 1, 1, 0, 0)$ corresponds to the number $13$ since $\sum_{i=1}^6 2^{i-1} x_i = 13$ in this representation. The color classes are the following:
\\

\subsubsection{Color class $A_0$} $\{ 0, 3, 12, 15, 21, 24, 36, 43, 49, 54, 61 \}$
\subsubsection{Color class $A_1$} $\{ 1, 6, 13, 18, 25, 30, 37, 40, 59 \}$.
\subsubsection{Color class $A_2$} $\{ 4, 7, 9, 19, 31, 34, 46, 52, 57 \}$
\subsubsection{Color class $A_3$} $\{ 8, 11, 20, 23, 33, 38, 45, 50, 62 \}$
\subsubsection{Color class $A_4$} $\{ 10, 16, 22, 28, 35, 41, 47, 53, 58 \}$
\subsubsection{Color class $A_5$} $\{ 17, 26, 29, 32, 39, 44, 51, 56, 63 \}$
\subsubsection{Color class $A_6$} $\{ 2, 5, 14, 27, 42, 48, 55, 60 \}.$\\

Notice that $|A_0|=11, |A_1| = 9,  |A_2| = 9,  |A_3| = 9,  |A_4| = 9,  |A_5| = 9$, and $|A_6|=8$. Recall that if the group-code partition was used then the sizes of the color classes are $10,9,9,9,9,9,9$ so that the size of the largest color class has increased by $1$ given the new set of color classes.

Using a non-binary perfect code over $GF(7)$ of length $8$ with the coloring scheme specified in this section,  the resulting length-$48$ binary code has $16192$ more codewords than a group code defined over $\mathbb{Z}_7 \times \mathbb{Z}_7.$ Using a non-binary perfect code over $GF(7)$ of length $57$ with the coloring scheme described in this section results in a binary code of length $342$ with approximately $7.1401e34$ more codewords than a group code defined over $\mathbb{Z}_7 \times \mathbb{Z}_7 \times \mathbb{Z}_7$. The parity check matrices for the single-unrestricted-error-correcting codes of length $8$ and of length $57$ over $GF(7)$ were taken from \cite{Magma}. 



\section{Conclusion and Future Work}\label{sec:conclude}

In this work, new bounds and constructions were derived for grain-error-correcting codes where the lengths of the grains were at most two. We considered a new approach to constructing codes that correct grain-errors and using this approach, we improved upon the constructions in \cite{arya} and \cite{artyom}. 

There are many directions for future work:
\begin{enumerate}
\item Development of new coloring schemes and codes to use with Construction~\ref{construct:color}.
\item Constructions of codes that correct multiple non-overlapping grain-errors.
\item Constructions and bounds for codes capable of correcting grain-errors where the length of the grain is greater than two.
\item Constructions and bounds for codes that correct bursts of grain-errors.
\end{enumerate}

The largest single-grain codes for $9 \leq n \leq 15$ listed in Table~\ref{table:error_patterns} were the result of using Construction~\ref{construct:color} with non-linear codes over $GF(3)$. It seems promising that potentially larger single-grain codes may be possible using non-linear codes and coloring schemes in conjunction with Construction~\ref{construct:color} for longer code lengths.

There is clearly a strong connection between codes capable of correcting bursts of unidirectional errors and codes correcting grain-errors (where the length of the grain is longer than two). Constructing grain codes that are larger than the unidirectional codes from \cite{bose} could be of future research interest.

Finally, we note that Construction~\ref{construct:color} may be applicable to the construction of new asymmetric error-correcting codes for the Z-channel. In fact, when $\Phi_{t,m}=\Gamma$ and $\cC_1$ (where $\cC_1$ represents the code $\cC_t$ in Construction~\ref{construct:color})
is a single unrestricted-error-correcting ternary code, Construction~\ref{construct:color} is identical to the single asymmetric error-correcting code (from the ternary construction) described in \cite{grassl}. Given new colorings (i.e., where $\Phi_{1,m} \neq \Gamma$) and new ternary codes for $\cC_1$, it may be possible to construct new codes with large cardinalities for the Z-channel.

\section*{Acknowledgements}

The authors would like to thank Artyom Sharov for his helpful discussions about code cardinalities as well as Associate Editor Prof. Navin Kashyap and the anonymous reviewers for providing detailed and expert comments. This research was supported in part by SMART scholarship, ISEF Foundation, and NSF grants CCF-1029030 and CCF-1150212.

\appendices

\section{Proofs of Claims and Lemmas From Section~\ref{sec:bound}}\label{app:m3}
\subsection{Details for $M(n,2)$}
In this section, we derive the results stated in both Claims~\ref{cl:2} and~\ref{cl:3}. We begin with Claim~\ref{cl:2}, which is simpler to derive than Claim~\ref{cl:3}.
\begin{repclaim}{cl:2}
For $n \geq 2$, $$\sum_{k=2}^n \frac{1}{k+1} \nchoosek{n}{k} = \frac{1}{n+1} (2^{n+1} - 2 - \frac{3n}{2} - \frac{n^2}{2}).$$
\end{repclaim}
\begin{proof}
This identity follows from the following derivations:
\begin{align*}
& \sum_{k=2}^n \frac{1}{k+1} \nchoosek{n}{k} = \sum_{k=0}^n \frac{1}{k+1} \nchoosek{n}{k} - 1 - n/2 & \\
= & \sum_{k=0}^n \frac{1}{n+1} \nchoosek{n+1}{k+1} - 1 - n/2 = \frac{2^{n+1}-1}{n+1}- 1 - n/2 & \\
= & \frac{1}{n+1} (2^{n+1} - 2 - \frac{3n}{2} - \frac{n^2}{2}). &
\end{align*}
\end{proof}
The next lemma will be useful to obtain the expression for Claim~\ref{cl:3}.

\begin{lemma}\label{lem:ubc} For integers $n,c$ where $0 \leq c < n $,
$$ \sum_{k=c+1}^n \frac{1}{k-c} \nchoosek{n}{k} =  \sum_{j=1}^{n-c} \frac{ \nchoosek{n-j}{c}}{j} \cdot \left( 2^j - 1 \right).$$
 \end{lemma}
 \begin{proof}
  {From Volume 4, Equation (6.97) in \cite{gould}, we have the following identity
 $$ \sum_{\ell=\alpha}^{m-\beta} \nchoosek{\ell}{\alpha} \nchoosek{m-\ell}{\beta} = \nchoosek{m+1}{\alpha + \beta + 1}, $$
 where $m - \beta \geq \alpha$. Letting $k = \alpha + 1$, $\beta = c$, and $n = m+1$, we can rewrite the identity as $\nchoosek{n}{k+c} = \sum_{\ell=k-1}^{n-c-1} \nchoosek{\ell}{k-1} \nchoosek{n-1-\ell}{c}$, which after reindexing, gives
 $$ \sum_{j = k}^{n-c} \nchoosek{j-1}{k-1} \nchoosek{n-j}{c} = \nchoosek{n}{k+c}. $$
 Using this identity along with a change of variable, we can rewrite the expression in the lemma as
 }
 \begin{align*}
 \sum_{k=c+1}^n \frac{1}{k-c} \nchoosek{n}{k} &= \sum_{k=1}^{n-c} \frac{1}{k} \cdot \sum_{j=k}^{n-c} \nchoosek{j-1}{k-1} \nchoosek{n-j}{c} \\
 &=\sum_{k=1}^{n-c} \sum_{j=k}^{n-c} \frac{1}{j} \nchoosek{j}{k} \nchoosek{n-j}{c}.
 \end{align*}
 Reversing the order of the summations gives
 \begin{align*}
 \sum_{k=1}^{n-c} \sum_{j=k}^{n-c} \frac{1}{j} \nchoosek{j}{k} \nchoosek{n-j}{c} &= \sum_{j=1}^{n-c} \frac{\nchoosek{n-j}{c}}{j} \sum_{k=1}^j \nchoosek{j}{k} \\
 &=  \sum_{j=1}^{n-c} \frac{\nchoosek{n-j}{c}}{j} (2^j - 1),
 \end{align*}
as desired.
 \end{proof}
 
 As a consequence of the previous lemma we have the following corollary.
 
\begin{corollary}\label{cor:k} For $n \geq 2$, $\sum_{k=1}^n \frac{1}{k} \nchoosek{n}{k} = \sum_{j=1}^n \frac{1}{j} (2^j - 1).$ \end{corollary}
 
We require one more lemma before proceeding to the proof of Claim~\ref{cl:3}. 
 
\begin{lemma}\label{lem:improved}
For $n\geq 17$, $$\sum_{k=1}^n\frac{2^k-1}{k}\leq \frac{2^{n+1}}{n-1-\frac{2}{n-5}+\frac{1}{n^2}}.$$
\end{lemma}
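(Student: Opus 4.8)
The plan is to prove the bound by induction on $n$, following the pattern used for Claim~\ref{cl:3}. The starting point is the trivial recursion
\[
\sum_{k=1}^{n+1}\frac{2^k-1}{k} \;=\; \sum_{k=1}^{n}\frac{2^k-1}{k} \;+\; \frac{2^{n+1}-1}{n+1},
\]
obtained by peeling off the last summand; no appeal to Claim~\ref{cl:4} is needed here, since we work directly with $\sum_k\frac{2^k-1}{k}$ rather than with $\sum_k\frac1k\binom nk$. For the base case $n=17$ one checks by a finite computation that $\sum_{k=1}^{17}\frac{2^k-1}{k}\approx 16552.5$ is smaller than $\frac{2^{18}}{17-1-\frac{2}{12}+\frac{1}{289}}\approx 16552.8$, so the inequality holds (narrowly) at $n=17$.

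For the inductive step, assume the inequality holds for some $n\ge 17$ and set $A = n-1-\frac{2}{n-5}+\frac{1}{n^2}$, so that the hypothesis reads $\sum_{k=1}^n\frac{2^k-1}{k}\le \frac{2^{n+1}}{A}$. Combining this with the recursion and the crude bound $\frac{2^{n+1}-1}{n+1}\le\frac{2^{n+1}}{n+1}$ gives
\[
\sum_{k=1}^{n+1}\frac{2^k-1}{k} \;\le\; 2^{n+1}\!\left(\frac{1}{A}+\frac{1}{n+1}\right) \;=\; 2^{n+2}\cdot\frac{\,2n-\frac{2}{n-5}+\frac{1}{n^2}\,}{2A(n+1)},
\]
where the equality uses $A+(n+1)=2n-\frac{2}{n-5}+\frac{1}{n^2}$. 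It therefore suffices to establish the rational inequality
\[
\frac{\,2n-\frac{2}{n-5}+\frac{1}{n^2}\,}{2A(n+1)} \;\le\; \frac{1}{\,n-\frac{2}{n-4}+\frac{1}{(n+1)^2}\,}\qquad(n\ge 17),
\]
since this yields $\sum_{k=1}^{n+1}\frac{2^k-1}{k}\le \frac{2^{n+2}}{\,n-\frac{2}{n-4}+\frac{1}{(n+1)^2}\,}$, which is exactly the claimed bound with $n$ replaced by $n+1$.

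The only real obstacle is verifying this last inequality. All quantities involved are positive for $n\ge 17$, so after cross-multiplying and clearing the denominators $(n-4)(n-5)n^2(n+1)^2$ it becomes a polynomial inequality in $n$; the top-order terms on the two sides cancel (both are $2n^2$ times the common denominator to leading order), so the statement is controlled by lower-order terms, and one checks that their net contribution has the correct sign for all $n\ge 17$. The small positive correction $\frac{1}{n^2}$ built into $A$ is precisely what makes the induction self-sustaining: it strengthens the hypothesis enough to absorb the slack introduced by replacing $\frac{2^{n+1}-1}{n+1}$ with $\frac{2^{n+1}}{n+1}$. This is a routine if somewhat tedious estimate, entirely analogous to the inequality ``$\frac{n-\frac{1}{n-5}}{(n-1-\frac{2}{n-5})(n+1)}\le\frac{1}{n-\frac{2}{n-4}}$'' invoked at the end of the proof of Claim~\ref{cl:3}.
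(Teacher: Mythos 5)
Your proposal is correct and follows essentially the same route as the paper: induction with the numerically verified base case $n=17$, peeling off the last summand, bounding $\frac{2^{n+1}-1}{n+1}$ by $\frac{2^{n+1}}{n+1}$, and reducing to the rational inequality $\frac{1}{n-1-\frac{2}{n-5}+\frac{1}{n^2}}+\frac{1}{n+1}\leq\frac{2}{n-\frac{2}{n-4}+\frac{1}{(n+1)^2}}$ (your displayed inequality is just this one rewritten), which both you and the paper then assert holds for $n\geq 17$ after clearing denominators.
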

\begin{proof}
The proof follows by induction for $n\geq 17$.  For $n=17$, the value on the left hand side is equal 16552.47, while the value of the right hand side is equal 16552.85. Now, assume the inequality holds for some $n\geq 17$, and we will show its validity for $n+1$. Hence, we need to show that
$$\sum_{k=1}^{n+1}\frac{2^k-1}{k}\leq \frac{2^{n+2}}{n-\frac{2}{n-4}+\frac{1}{(n+1)^2}}.$$
According to the induction assumption, it is enough to show that
$$\frac{2^{n+1}}{n-1-\frac{2}{n-5}+\frac{1}{n^2}} + \frac{2^{n+1}-1}{n+1} \leq \frac{2^{n+2}}{n-\frac{2}{n-4}+\frac{1}{(n+1)^2}},$$ or
\begin{align*}\label{eq:lem171}
\frac{1}{n-1-\frac{2}{n-5}+\frac{1}{n^2}} + \frac{1}{n+1} \leq \frac{2}{n-\frac{2}{n-4} + \frac{1} {(n+1)^2}}.
\end{align*}
 {
The last inequality is equivalent to showing that 
$$n^5-11n^4-101n^3-6n^2+73n+10\geq 0.$$
The value of the term on the left hand side of this inequality for $n=17$ is positive. Furthermore, it is possible to verify that the derivative of this term for values greater than or equal to $17$ is positive and hence we conclude that this inequality holds for $n\geq 17$.
}
\end{proof}
Claim~\ref{cl:3} now follows from Corollary~\ref{cor:k} and Lemma~\ref{lem:improved}.

\begin{repclaim}{cl:3}
For $n \geq 17$,
$$\sum_{k=1}^n \frac{1}{k} \nchoosek{n}{k} \leq \frac{2^{n+1}}{n-1-\frac{2}{n-5}+\frac{1}{n^2}}.$$
\end{repclaim}
%

\subsection{Details for $M(n,3)$}

We now proceed to derive similar results for $M(n,3)$. We first note the following corollary which follows from Lemma~\ref{lem:ubc}.

\begin{corollary}\label{cor:k-1} For $n \geq 2$, 
 $$\sum_{k=2}^n \frac{1}{k-1} \nchoosek{n}{k} = n \sum_{k=1}^{n-1} \frac{2^k-1}{k} - 2^n + n + 1.$$ \end{corollary}
 \begin{proof}
 From Lemma~\ref{lem:ubc}, we have
 \begin{align*}
 \sum_{k=2}^n \frac{1}{k-1} \nchoosek{n}{k} &= \sum_{j=1}^{n-1} \frac{n-j}{j} \cdot (2^j - 1) \\
 &= n \sum_{j=1}^{n-1} \frac{2^j - 1}{j} - \sum_{j=1}^{n-1} (2^j - 1) \\
 &= n \sum_{j=1}^{n-1} \frac{2^j - 1}{j} - \left( (2^n-1) - 1 \right) + n-1,
 \end{align*}
 which simplifies to the expression in the corollary.
 \end{proof}

\begin{replemma}{lem:ub3}
For $n \geq 24$, $$M(n,3) \leq 2 \Big\lfloor 2^n\left( \frac{\frac{36n}{n-7} + 18 - \frac{3n(n-1)}{(n-2)^2}  + \frac{12}{n-7}}{2n(n-1)(n-3-\frac{2}{n-7} + \frac{1}{(n-2)^2}} \right) \Big\rfloor.$$
\end{replemma}
\begin{proof}
From Theorem~\ref{th:maxcard} we have
\begin{small}
\begin{align*}
& M(n,3) \leq 2 \sum_{k=0}^{n-1} \nchoosek{n-1}{k} \frac{1}{\sum_{j=0}^{\min\{3,k\}} \nchoosek{k}{j}}  \\
& = 2 + n-1+\frac{1}{2}\cdot\binom{n-1}{2}+  2\sum_{k=3}^{n-1}  \nchoosek{n-1}{k} \frac{1}{1+k+\binom{k}{2}+\binom{k}{3}} \\
& \leq \frac{n^2+n+6}{4} +2\sum_{k=3}^{n-1}  \nchoosek{n-1}{k} \frac{1}{\binom{k}{2}+\binom{k}{3}} \\
& = \frac{n^2+n+6}{4} +12\sum_{k=3}^{n-1} \nchoosek{n-1}{k} \left( \frac{1/2}{k-1}-\frac{1}{k}+\frac{1/2}{k+1} \right) \\
& = \frac{n^2+n+6}{4} +6\sum_{k=3}^{n-1} \nchoosek{n-1}{k} \frac{1}{k-1} & \\
& -12\sum_{k=3}^{n-1} \nchoosek{n-1}{k}\frac{1}{k}+6\sum_{k=3}^{n-1} \nchoosek{n-1}{k}\frac{1}{k+1}.
\end{align*}
\end{small}
According to Corollary~\ref{cor:k-1},
\begin{small}
$$\sum_{k=3}^{n-1}\binom{n-1}{k}\frac{1}{k-1} = (n-1)\sum_{k=1}^{n-2}\frac{2^k-1}{k} - 2^{n-1}-\frac{n^2-5n+2}{2}.$$
\end{small}
By Corollary~\ref{cor:k},
\begin{small}
$$\sum_{k=3}^{n-1}\binom{n-1}{k}\frac{1}{k} = \sum_{k=1}^{n-1}\frac{2^k-1}{k} - \frac{n^2+n-2}{4},$$
\end{small}
and by Claim~\ref{cl:2},
\begin{small}
\begin{align*}
&\sum_{k=3}^{n-1}\binom{n-1}{k}\frac{1}{k+1} = \frac{1}{n}\left(2^n-2-\frac{3(n-1)}{2}-\frac{(n-1)^2}{2}\right)\\ 
&\ \ \  \ \ -\frac{n^2-3n+2}{6}.
\end{align*}
\end{small}
All together we get that
\begin{small}
\begin{align*}
& M(n,3) \leq \frac{n^2+n+6}{4} & \\
& + 6\left(  (n-1)\sum_{k=1}^{n-2}\frac{2^k-1}{k} - 2^{n-1}-\frac{n^2-5n+2}{2} \right) & \\
& -12\left(  \sum_{k=1}^{n-1}\frac{2^k-1}{k} - \frac{n^2+n-2}{4} \right) & \\
& +6\left( \frac{1}{n}\left(2^n-2-\frac{3(n-1)}{2}-\frac{(n-1)^2}{2}\right)-\frac{n^2-3n+2}{6} \right) & \\
& = -\frac{3n^2}{4}+\frac{73n}{4}-\frac{31}{2}-\frac{6}{n} +6(n-1)\sum_{k=1}^{n-2}\frac{2^k-1}{k}-3\cdot 2^n &\\
& -12\sum_{k=1}^{n-1}\frac{2^k-1}{k} +\frac{6\cdot 2^n}{n} & \\
& = -\frac{3n^2}{4}+\frac{73n}{4}-\frac{31}{2}-\frac{6}{n} +6(n-3)\sum_{k=1}^{n-2}\frac{2^k-1}{k}-3\cdot 2^n &\\
& -12\cdot \frac{2^{n-1}-1}{n-1} +\frac{6\cdot 2^n}{n} & \\
& \leq 6(n-3)\sum_{k=1}^{n-2}\frac{2^k-1}{k}-3\cdot 2^n -\frac{6\cdot 2^n}{n-1}+\frac{6\cdot 2^n}{n} & \\
& = 6(n-3)\sum_{k=1}^{n-2}\frac{2^k-1}{k}-3\cdot 2^n -\frac{6\cdot 2^n}{n(n-1)}&
\end{align*}
\end{small} where the inequality holds for $n \geq 24$. Finally, according to Lemma~\ref{lem:improved} we get
\begin{small}
\begin{align*}
& M(n,3) \leq 6(n-3)\frac{2^{n-1}}{n-3-\frac{2}{n-7}+\frac{1}{(n-2)^2}} -3\cdot 2^n -\frac{6\cdot 2^n}{n(n-1)} & \\
& = 2^n\left(\frac{3n-9}{n-3-\frac{2}{n-7}+\frac{1}{(n-2)^2}}-3-\frac{6}{n(n-1)}\right) & \\
& = 2^n\left(\frac{\frac{6}{n-7}-\frac{3}{(n-2)^2}}{n-3-\frac{2}{n-7}+\frac{1}{(n-2)^2}}-\frac{6}{n(n-1)}\right) & \\
&= 2^n \left( \frac{ \frac{6(n(n-1))}{n-7} -6n - \frac{3n(n-1)}{(n-2)^2} + 18 + \frac{12}{n-7} - \frac{6}{(n-2)^2}}{n(n-1)(n-3-\frac{2}{n-7} + \frac{1}{(n-2)^2})} \right) & \\
&\leq 2^n \left( \frac{ \frac{6(n(n-1))}{n-7} -6n + 18 - \frac{3n(n-1)}{(n-2)^2} + \frac{12}{n-7} }{n(n-1)(n-3-\frac{2}{n-7} + \frac{1}{(n-2)^2})} \right) & \\
& = 2^n\left( \frac{\frac{36n}{n-7} + 18 - \frac{3n(n-1)}{(n-2)^2} + \frac{12}{n-7}}{n(n-1)(n-3-\frac{2}{n-7} + \frac{1}{(n-2)^2}} \right). &
\end{align*}
\end{small}

From Lemma~\ref{lem:even}, $M(n,3)$ must be an even integer and so $M(n,3) \leq 2 \Big\lfloor 2^n\left( \frac{\frac{36n}{n-7} + 18 - \frac{3n(n-1)}{(n-2)^2}  + \frac{12}{n-7}}{2n(n-1)(n-3-\frac{2}{n-7} + \frac{1}{(n-2)^2}} \right) \Big\rfloor $.

\end{proof}

\end{document}